\renewcommand\footnotetextcopyrightpermission[1]{}
\newtheorem{theorem}{Theorem}
\newtheorem{corollary}{Corollary}
\newtheorem{lemma}{Lemma}
\newtheorem{proposition}{Proposition}
\newtheorem{definition}{Definition}
\newcommand{\chainnosp}{Babylon\ignorespaces}
\newcommand{\bpow}{Babylon\xspace}
\newcommand{\passive}{passive\xspace}
\newcommand{\chain}{\bpow}
\newcommand{\pos}{PoS\xspace}
\newcommand{\header}{\ensuremath{\mathsf{header}}}
\newcommand{\txroot}{\ensuremath{\mathsf{txr}}}
\newcommand{\LOG}{\ensuremath{\mathsf{Ledger}}}
\newcommand{\Ledger}[2]{%
    \ensuremath{\mathsf{PoSLOG}_{#1}^{#2}}
}
\newcommand{\POWLedger}[2]{%
    \ensuremath{\mathsf{PoWChain}_{#1}^{#2}}
}
\newcommand{\validator}[0]{\ensuremath{\mathsf{v}}}
\newcommand{\client}[0]{\ensuremath{\mathsf{c}}}
\algnewcommand{\algorithmicswitch}{\textbf{switch}}
\algnewcommand{\algorithmiccase}{\textbf{case}}
\algnewcommand{\algorithmicon}{\textbf{on}}
\algrenewcommand{\algorithmicdo}{}
\algrenewcommand{\algorithmicthen}{}
\algnewcommand{\algorithmicgoto}{\textbf{goto}}%
\algnewcommand{\Goto}[1]{\algorithmicgoto~\ref{#1}}%
\algnewcommand{\algorithmicbreak}{\textbf{break}}%
\algnewcommand{\Break}[0]{\algorithmicbreak}%
\algnewcommand{\algorithmicwaiton}{\textbf{wait on}}%
\algnewcommand{\WaitOn}[1]{\algorithmicwaiton~{#1}}%
\setlist[itemize]{leftmargin=5.5mm}
\setlist[enumerate]{leftmargin=5.5mm}
\pgfplotsset{compat=1.14}
\newcommand{\alr}{accountable liveness resilience\xspace}
\newcommand{\ssr}{slashable safety resilience\xspace}
\newcommand{\Ssr}{Slashable safety resilience\xspace}
\newcommand{\slr}{slashable liveness resilience\xspace}
\newcommand{\LOGda}[2]{%
    \ifthenelse{\equal{#1}{}}{%
        \ensuremath{\mathsf{LOG}_{\mathrm{da}}^{#2}}%
    }{%
        \ensuremath{\mathsf{LOG}_{\mathrm{da},#1}^{#2}}%
    }%
}
\newcommand{\LOGbft}[2]{%
    \ifthenelse{\equal{#1}{}}{%
        \ensuremath{\mathsf{LOG}_{\mathrm{bft}}^{#2}}%
    }{%
        \ensuremath{\mathsf{LOG}_{\mathrm{bft},#1}^{#2}}%
    }%
}
\newcommand{\LOGacc}[2]{%
    \ifthenelse{\equal{#1}{}}{%
        \ensuremath{\mathsf{LOG}_{\mathrm{acc}}^{#2}}%
    }{%
        \ensuremath{\mathsf{LOG}_{\mathrm{acc},#1}^{#2}}%
    }%
}
\newcommand{\tr}[2]{%
    \ifthenelse{\equal{#1}{}}{%
        \ensuremath{\mathsf{T}^{#2}}%
    }{%
        \ensuremath{\mathsf{T}_{#1}^{#2}}%
    }%
}
\newcommand{\wt}[2]{%
    \ifthenelse{\equal{#1}{}}{%
        \ensuremath{\mathsf{w}^{#2}}%
    }{%
        \ensuremath{\mathsf{w}_{#1}^{#2}}%
    }%
}
\newcommand{\PI}[0]{\ensuremath{\Pi}}
\newcommand{\Adv}[0]{\ensuremath{\mathcal A}}
\newcommand{\Env}[0]{\ensuremath{\mathcal Z}}
\newcommand{\ie}[0]{\emph{i.e.}\xspace}
\newcommand{\eg}[0]{\emph{e.g.}\xspace}
\newcommand{\cf}[0]{\emph{cf.}\xspace}
\newcommand{\tx}[0]{\ensuremath{\mathsf{tx}}}
\newcommand{\Tconfirm}[0]{\ensuremath{T_{\mathrm{fin}}}}
\newcommand{\bprop}[1]{%
    \ifthenelse{\equal{#1}{}}{%
        \ensuremath{\Hat{b}}%
    }{%
        \ensuremath{\Hat{b}_{#1}}%
    }%
}
\newcommand{\ld}[1]{%
    \ifthenelse{\equal{#1}{}}{%
        \ensuremath{\mathrm{L}^{(c)}}%
    }{%
        \ensuremath{\mathrm{L}^{(#1)}}%
    }%
}
\newcommand{\fS}[0]{\ensuremath{f_{\mathrm{s}}}}
\newcommand{\fA}[0]{\ensuremath{f_{\mathrm{a}}}}
\newcommand{\fL}[0]{\ensuremath{f_{\mathrm{l}}}}
\newcommand{\CpReq}[3]{%
    \ifthenelse{\equal{#3}{}}{%
        \ensuremath{\langle\mathsf{#1},#2\rangle}%
    }{%
        \ensuremath{\langle\mathsf{#1},#2\rangle_{#3}}%
    }%
}
\definecolor{myParula01Blue}{RGB}{0,114,189}
\definecolor{myParula02Orange}{RGB}{217,83,25}
\definecolor{myParula03Yellow}{RGB}{237,177,32}
\definecolor{myParula04Purple}{RGB}{126,47,142}
\definecolor{myParula05Green}{RGB}{119,172,48}
\definecolor{myParula06LightBlue}{RGB}{77,190,238}
\definecolor{myParula07Red}{RGB}{162,20,47}
\tikzset{myparula11/.style={color=myParula01Blue,solid,mark=+,mark options={solid}}}
\tikzset{myparula12/.style={color=myParula01Blue,densely dashed,mark=x,mark options={solid}}}
\tikzset{myparula13/.style={color=myParula01Blue,densely dotted,mark=o,mark options={solid}}}
\tikzset{myparula14/.style={color=myParula01Blue,dashdotted,mark=triangle,mark options={solid}}}
\tikzset{myparula15/.style={color=myParula01Blue,dashdotdotted,mark=square,mark options={solid}}}
\tikzset{myparula21/.style={color=myParula02Orange,solid,mark=+,mark options={solid}}}
\tikzset{myparula22/.style={color=myParula02Orange,densely dashed,mark=x,mark options={solid}}}
\tikzset{myparula23/.style={color=myParula02Orange,densely dotted,mark=o,mark options={solid}}}
\tikzset{myparula24/.style={color=myParula02Orange,dashdotted,mark=triangle,mark options={solid}}}
\tikzset{myparula25/.style={color=myParula02Orange,dashdotdotted,mark=square,mark options={solid}}}
\tikzset{myparula31/.style={color=myParula03Yellow,solid,mark=+,mark options={solid}}}
\tikzset{myparula32/.style={color=myParula03Yellow,densely dashed,mark=x,mark options={solid}}}
\tikzset{myparula33/.style={color=myParula03Yellow,densely dotted,mark=o,mark options={solid}}}
\tikzset{myparula34/.style={color=myParula03Yellow,dashdotted,mark=triangle,mark options={solid}}}
\tikzset{myparula35/.style={color=myParula03Yellow,dashdotdotted,mark=square,mark options={solid}}}
\tikzset{myparula41/.style={color=myParula04Purple,solid,mark=+,mark options={solid}}}
\tikzset{myparula42/.style={color=myParula04Purple,densely dashed,mark=x,mark options={solid}}}
\tikzset{myparula43/.style={color=myParula04Purple,densely dotted,mark=o,mark options={solid}}}
\tikzset{myparula44/.style={color=myParula04Purple,dashdotted,mark=triangle,mark options={solid}}}
\tikzset{myparula45/.style={color=myParula04Purple,dashdotdotted,mark=square,mark options={solid}}}
\tikzset{myparula51/.style={color=myParula05Green,solid,mark=+,mark options={solid}}}
\tikzset{myparula52/.style={color=myParula05Green,densely dashed,mark=x,mark options={solid}}}
\tikzset{myparula53/.style={color=myParula05Green,densely dotted,mark=o,mark options={solid}}}
\tikzset{myparula54/.style={color=myParula05Green,dashdotted,mark=triangle,mark options={solid}}}
\tikzset{myparula55/.style={color=myParula05Green,dashdotdotted,mark=square,mark options={solid}}}
\tikzset{myparula61/.style={color=myParula06LightBlue,solid,mark=+,mark options={solid}}}
\tikzset{myparula62/.style={color=myParula06LightBlue,densely dashed,mark=x,mark options={solid}}}
\tikzset{myparula63/.style={color=myParula06LightBlue,densely dotted,mark=o,mark options={solid}}}
\tikzset{myparula64/.style={color=myParula06LightBlue,dashdotted,mark=triangle,mark options={solid}}}
\tikzset{myparula65/.style={color=myParula06LightBlue,dashdotdotted,mark=square,mark options={solid}}}
\tikzset{myparula71/.style={color=myParula07Red,solid,mark=+,mark options={solid}}}
\tikzset{myparula72/.style={color=myParula07Red,densely dashed,mark=x,mark options={solid}}}
\tikzset{myparula73/.style={color=myParula07Red,densely dotted,mark=o,mark options={solid}}}
\tikzset{myparula74/.style={color=myParula07Red,dashdotted,mark=triangle,mark options={solid}}}
\tikzset{myparula75/.style={color=myParula07Red,dashdotdotted,mark=square,mark options={solid}}}
\pgfplotsset{
    mysimpleplot/.style = {
        every axis plot/.prefix style={thick},
        width=1.0\linewidth,
        height=0.75\linewidth,
        title style={font=\scriptsize,align=center},
        legend cell align=left,
        legend style={font=\scriptsize},
        legend columns=3,
        legend style={
            at={(0.5,1)},
            yshift=0.3em,
            anchor=south,
            draw=none,
            /tikz/every even column/.append style={
                column sep=0.3em
            },
            cells={
                align=left
            }
        },
        grid=both,
        minor tick num=3,
        major grid style={solid,draw=gray!50},
        minor grid style={densely dotted,draw=gray!50},
        label style={font=\scriptsize,align=center},
        tick label style={font=\scriptsize},
    },
}
\author{Ertem Nusret Tas}
\affiliation{Stanford University\country{}}
\email{nusret@stanford.edu}
\author{David Tse}
\affiliation{Stanford University\country{}} 
\email{dntse@stanford.edu}
\author{Fisher Yu}
\affiliation{Hash Laboratories\country{}}
\email{fisher.yu@hashlabs.cc}
\author{Sreeram Kannan}
\affiliation{University of Washington, Seattle\country{}}
\email{ksreeram@uw.edu}
\thanks{Contact author: DT}
\begin{document}

\title{\chainnosp: \\Reusing Bitcoin Mining to Enhance Proof-of-Stake Security}

\begin{abstract}
    Bitcoin is the most secure blockchain in the world, supported by the immense hash power of its Proof-of-Work miners, but consumes huge amount of energy. 
    Proof-of-Stake chains are energy-efficient, have fast finality and accountability, but face several fundamental security issues: susceptibility to non-slashable long-range safety attacks, non-slashable transaction censorship and stalling attacks and difficulty to bootstrap new PoS chains from low token valuation. 
    We propose Babylon, a blockchain platform which combines the best of both worlds by reusing the immense Bitcoin hash power to enhance the security of PoS chains. 
    \chain provides a data-available timestamping service, securing PoS chains by allowing them to timestamp data-available block checkpoints, fraud proofs and censored transactions on \chain.
    \chain miners merge mine with Bitcoin and thus the platform has zero additional energy cost. 
    The security of a \chainnosp-enhanced PoS protocol is formalized by a cryptoeconomic security theorem which shows slashable safety and liveness guarantees.
\end{abstract}

\maketitle

\section{Introduction}
\label{sec:introduction}

\subsection{From Proof-of-Work to Proof-of-Stake}

Bitcoin, the most valuable blockchain in the world, is secured by a Proof-of-Work protocol that requires its miners to solve hard math puzzles by computing many random hashes. 
As of this writing, Bitcoin miners around the world are computing in the aggregate roughly $1.4 \times 10^{21}$ hashes per second.
This hash power is the basis of Bitcoin's security, as an attacker trying to rewrite the Bitcoin ledger or censor transactions has to acquire a proportional amount of hash power, making it extremely costly to attack the protocol.  
However, this security also comes at a tremendous energy cost.

Many newer blockchains eschew the Proof-of-Work paradigm in favor of energy-efficient alternatives, the most popular of which is Proof-of-Stake (PoS). 
A prominent example is Ethereum, which is currently migrating from PoW to PoS, a process $6$ years in the making. 
Other prominent PoS blockchains include Cardano, Algorand, Solana, Polkadot, Cosmos Hub and Avalanche among others.
In addition to energy-efficiency, another major advantage of many PoS blockchains is their potential to hold protocol violators accountable and slash their stake as punishment.

\subsection{Proof-of-Stake Security Issues}
\label{sec:pos-insecurities}

Early attempts at proving the security of PoS protocols were made under the assumption that the majority or super-majority of the stake belongs to the honest parties (\eg, \cite{snowwhite,badertscher2018ouroboros, algorand, tendermint}).
However, modern PoS applications such as cryptocurrencies are increasingly run by \emph{economic agents} driven by financial incentives, that are not a priori honest.
To ensure that these agents follow the protocol rules, it is crucial to incentivize honest behavior through economic rewards and punishments for protocol-violating behavior.
Towards this goal, Buterin and Griffith \cite{casper} advocated the concept of \emph{accountable safety}, the ability to identify validators who have provably violated the protocol in the event of a safety violation. 
In lieu of making an unverifiable honest majority assumption, this approach aims to obtain a \emph{cryptoeconomic} notion of security for these protocols by holding protocol violators accountable and slashing their stake, thus enabling an exact quantification of the penalty for protocol violation. 
This {\em trust-minimizing} notion of security is central to the design of important PoS protocols such as Gasper \cite{gasper}, the protocol supporting Ethereum 2.0, and Tendermint \cite{tendermint_thesis}, the protocol supporting the Cosmos ecosystem. However, there are several fundamental limitations to achieving such trust-minimizing cryptoeconomic security for PoS protocols:

\begin{enumerate}
    \item {\bf Safety attacks are not slashable:} 
    While a PoS protocol with accountable safety can identify attackers, slashing of their stake is not always possible. thus implying a lack of \emph{slashable safety}. For example, a long-range history-revision attack can be mounted using old coins after the stake is already withdrawn and therefore cannot be slashed \cite{vitalik_weak_subj, snowwhite, badertscher2018ouroboros, long_range_survey}.
    These attacks are infeasible in a PoW protocol like Bitcoin as the attacker needs to counter the total difficulty of the existing longest chain.
    In contrast, they become affordable in a PoS protocol since the old coins have little value and can be bought by the adversary at a small price. Such long-range attacks is a long-known problem with PoS protocols, and there have been several approaches to deal with them (Section \ref{sec:related-work}). In Section \ref{sec:long-range-attack}, we show a negative result: no PoS protocol can provide slashable safety without {\em external} trust assumptions. A typical external trust assumption used in practice is {\em off-chain social-consensus checkpointing}. But since this type of checkpointing cannot be done very frequently, the stake lock-up period has to be set very long (\eg, $21$ days is a typical lock-up period for Cosmos zones), reducing the liquidity of the system.
    Moreover, social consensus cannot be relied upon in smaller blockchains with an immature community.
    
    \item {\bf Liveness attacks are not accountable or slashable:} 
    Examples of these attacks include protocol stalling and transaction censorship. 
    Unlike safety attacks where adversary double-signs conflicting blocks, such attacks are hard to be held accountable in a PoS protocol. 
    For example, Ethereum 2.0 attempts to hold protocol stalling accountable by slashing non-voting attesters through a process called {\em inactivity leak} \cite{inactivity_leak}. 
    However, as we discuss in Section \ref{sec:accountable-liveness}, an attacker can create an alternative chain and make it public much later, in which the honest attesters would not be voting and would therefore be slashed. 
    Moreover, there is no known mechanism to hold the censoring of specific transactions accountable. 
    In this context, we show in Section \ref{sec:accountable-liveness} that accountable liveness, let alone slashable liveness, is impossible for any PoS protocol without external trust assumptions. 
    
     \item {\bf The bootstrapping problem:} 
     Even if a PoS protocol could provide slashable security guarantees, the maximum financial loss an adversary can suffer due to slashing does not exceed the value of its staked coins.
    Thus, the cryptoeconomic security of a PoS protocol is proportional to its token valuation.
    Many PoS chains, particularly ones that support one specific application, \eg, a Cosmos zone, start small with a low token valuation. This makes it difficult for new blockchains to support high-valued applications like decentralized finance or NFTs.
\end{enumerate}

\subsection{Reusing Bitcoin Mining to Provide \\External Trust}
\label{sec:reusing-hash-power}

\begin{figure}[H]
    \centering
    \includegraphics[width=\linewidth]{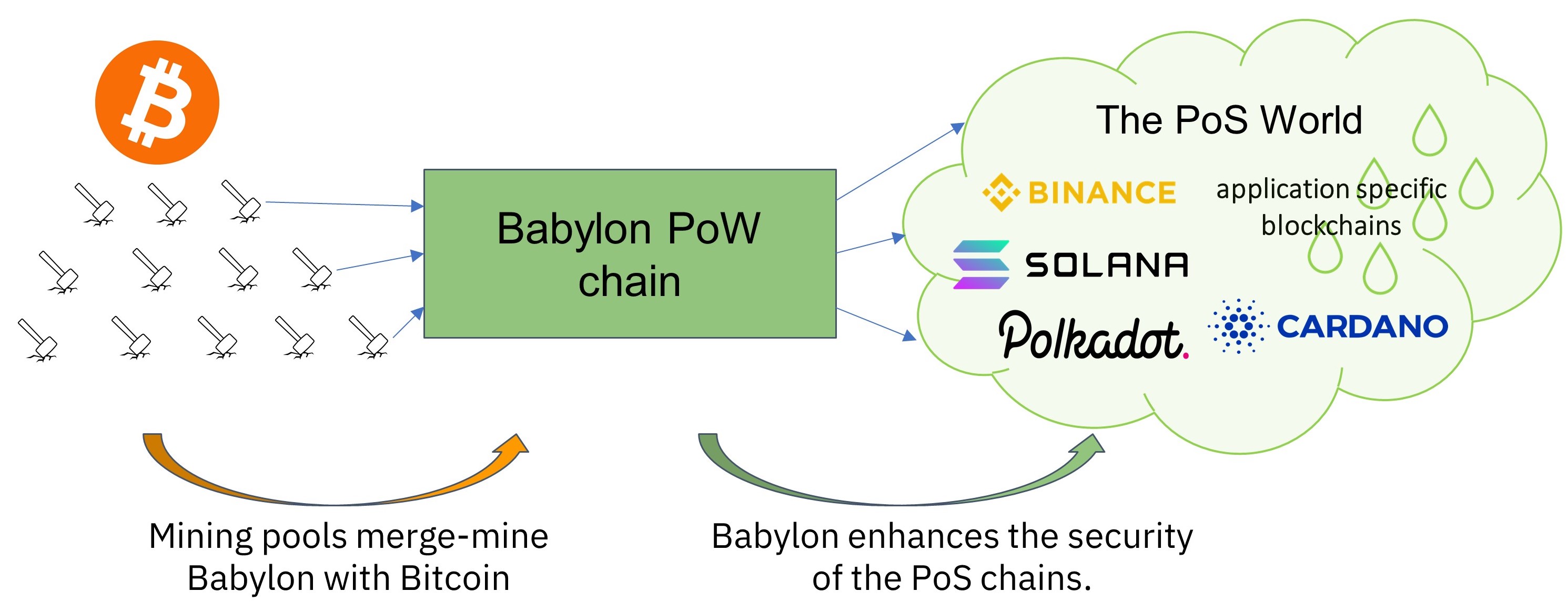}
    \caption{The \chain architecture. \chain is a PoW chain merge mined by Bitcoin miners and used by the PoS protocols to obtain slashable security.}
    \label{fig:arch}
\end{figure}

\begin{figure*}
    \centering
    \includegraphics[width=0.9\linewidth]{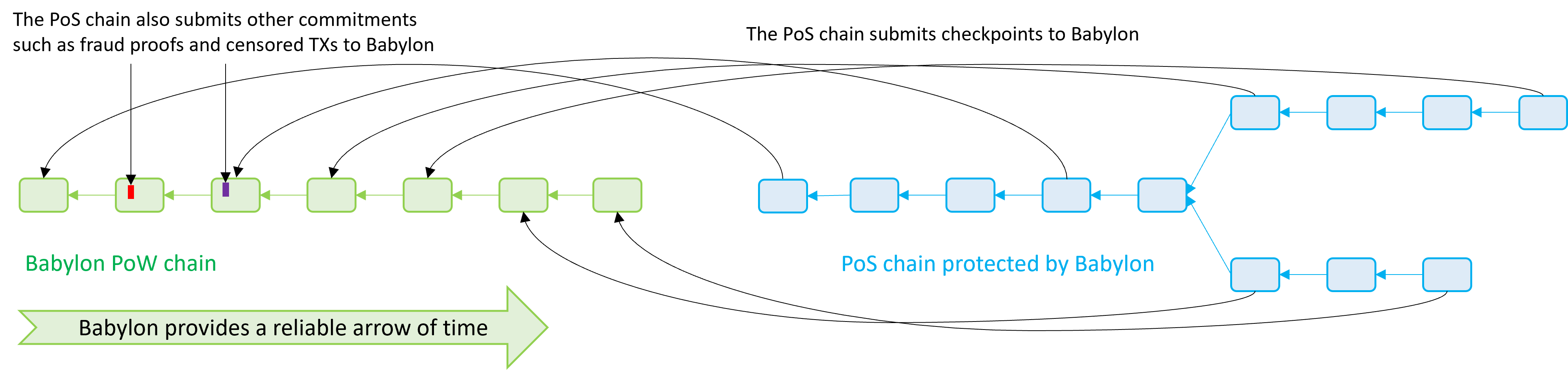}
    \caption{Timestamping on \chainnosp. \chain PoW chain provides a record of the times events happen on the PoS chains, thus enabling PoS nodes to resolve safety violations on the PoS chain. For instance, since the checkpoints of the blocks in the top branch of the PoS chain appears earlier on \chain than the checkpoints of the blocks in the bottom branch, the canonical PoS chain in this case follows the top branch.}
    \label{fig:checkpointing}
\end{figure*}

The PoS security issues above cannot be resolved without an external source of trust. But a strong source of trust already exists in the blockchain ecosystem: Bitcoin mining. Built on this observation, we propose \chainnosp, a blockchain platform which
{\em reuses} the existing Bitcoin hash power to enhance the security for any PoS chain which uses the platform (Figure \ref{fig:arch}). \chain is a PoW
blockchain on which multiple PoS chains can post information
and use the ordering and availability of that information to obtain cryptoeconomic security guarantees while retaining all of their desirable features such as fast finalization.
\chain is mined by the Bitcoin miners via a technique called {\em merge mining} \cite{merge-mining, namecoin, rsk}, which enables the reuse of the same hash power on multiple chains (cf. Appendix \ref{sec:merge-mining}). 
Thus, by reusing the existing Bitcoin hash power, \chain enhances the security of PoS chains at no extra energy cost.

\subsection{Babylon: A Data-available \\ Timestamping Service}
\label{sec:babylon-timestamping-service}

\begin{figure*}
    \centering
    \includegraphics[width=0.8\linewidth]{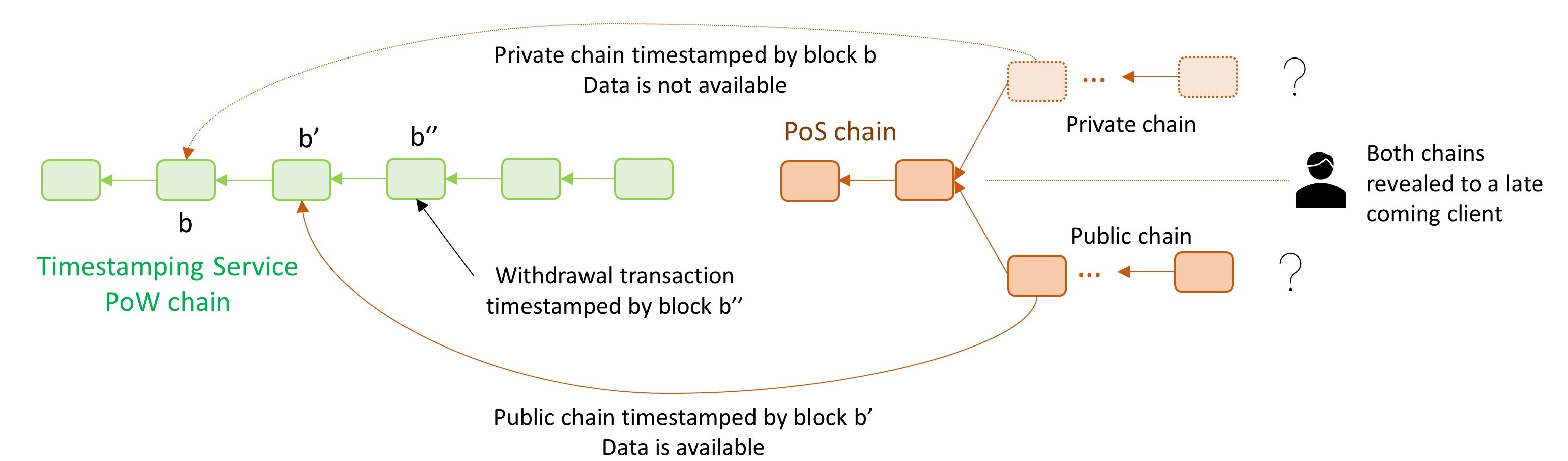}
    \caption{An adversary that controls a super-majority of stake finalizes PoS blocks on two conflicting chains. It keeps the blocks on one of the chains, the attack chain, private, and builds the other, the canonical one in public. It also posts succinct commitments of blocks from both chains to a timestamping service, \eg, Bitcoin or Ethereum. Commitments from the attack chain are ignored by the nodes since their data is not available and might be invalid. Finally, after withdrawing its stake and timestamping it on the PoW chain, the adversary publishes the private attack chain. From the perspective of a late-coming node, the attack chain \emph{is} the canonical one as it has an earlier timestamp on block $b$ and its contents are now available. However, this is a safety violation. Moreover, as the adversary has withdrawn its stake, it cannot be slashed or financially punished.}
    \label{fig:data-availability-attack}
\end{figure*}

A major reason behind the security issues of PoS protocols described in Section~\ref{sec:pos-insecurities} is the lack of a reliable {\em arrow of time}. 
For example, long-range attacks exploit the inability of late-coming nodes to distinguish between the original chain and the adversary's history-revision chain that is publicized much later \cite{snowwhite, long_range_survey}. 
\chain resolves these security limitations by providing a \emph{data-available timestamping service} to the PoS chains.

To obtain slashable security guarantees, full nodes of the PoS protocols post commitments of protocol-related messages, \eg, finalized PoS blocks, fraud proofs, censored PoS transactions, onto the \chain PoW chain (Figure~\ref{fig:checkpointing}). 
\chain checks if the messages behind these commitments are available and provides a timestamp for the messages by virtue of the location of its commitments in the \chain chain.
This enables PoS nodes, including the late-coming ones, to learn the time and order in which each piece of data was first made public.
PoS nodes can then use the timestamps on this data in conjunction with the consensus logic of the native PoS protocol to resolve safety conflicts, identify protocol violators and slash them before they can withdraw their stake in the event of safety or liveness violations.
For example, whenever there is a safety violation in a PoS protocol causing a fork, timestamps on \chain can be used to resolve the fork by choosing the branch with the earlier timestamp (Figure~\ref{fig:checkpointing}). 
Whenever there is a proof of double-signing or liveness violation recorded on \chainnosp, responsible PoS participants can be irrefutably identified and slashed using the information on \chain.

To resolve the security issues of PoS protocols, \chain, {\em in addition} to timestamping PoS data, has to guarantee that this data is {\em available}, \ie, has been publicized to the honest PoS nodes, when it is timestamped.
Otherwise an adversary that controls the majority stake can mount a non-slashable long range attack by posting succinct commitments of finalized, yet private PoS blocks on \chain and releasing the block data to the public much later, after the adversarial stake is withdrawn (Figure~\ref{fig:data-availability-attack}).
Thus, \chain must also provide the additional functionality of checking for the availability of the PoS data it is timestamping.
This functionality cannot be satisfied by solutions that timestamp PoS data by posting its succinct commitments directly on Bitcoin or Ethereum \cite{bms}, whereas posting all of the data raises scalability concerns (\cf Section~\ref{sec:related-work} for more discussion).
Thus, it necessitates a new PoW chain, \chainnosp, whose miners are instructed to check for the availability of the timestamped data in the view of the PoS nodes.

\chain is {\em minimalistic} in the sense that it provides a data-available timestamping service {\em and no more}. 
It does not execute the transactions on the PoS chains, does not keep track of their participants and in fact does not even need to understand the semantics of the PoS block content. 
It also does not store the PoS data. 
All \chain needs to do is to check the availability of the PoS data it is timestamping and make this data public to the PoS nodes, which can be done efficiently (eg. \cite{albassam2018fraud, cmt}). 
This minimalism allows the scalability of \chain to support the security of many PoS chains simultaneously.

\subsection{High-level Description of the Protocol}
\label{sec:protocol-specifics}

The \chain architecture consists of two major components: the \chain PoW chain (\chain for short), merge-mined by Bitcoin miners, hereafter referred to as \chain miners, and the \chainnosp-enhanced \pos protocols each maintained by a distinct set of PoS nodes (nodes for short).
The \chainnosp-enhanced PoS protocol is constructed on top of a standard accountably-safe PoS protocol which takes \pos transactions as input and outputs a chain of \emph{finalized} blocks called the PoS chain.

\bpow miners are \emph{full nodes} towards \chain, \ie, download and verify the validity of all \bpow blocks.
They also download the data whose commitments are sent to \chain for timestamping, but do not validate it.
\pos full nodes download and verify the validity of all \pos blocks in their respective chains.
All \pos nodes act as \emph{light clients} towards \chain.
Thus, they rely on \bpow miners to hand them valid \bpow blocks, and download only the messages pertaining to their \pos protocol from \chain.
A subset of \pos full nodes, called validators, lock their funds for staking and run the \pos consensus protocol by proposing and voting for blocks\footnote{Not every \pos full node is necessarily a validator. Full nodes that are not validators still download \pos blocks and process their transactions to obtain the latest \pos state.}.

\pos nodes can send different types of messages to \chain for timestamping (Figure~\ref{fig:checkpointing}).
These messages are typically succinct commitments of \pos data such as finalized \pos blocks, censored transactions and fraud proofs identifying misbehaving \pos validators.
Finalized \pos blocks whose commitments are included in \chain are said to be checkpointed by the \chain block that includes the commitment.

\pos nodes use the timestamped information on \chain to resolve safety violations and slash protocol violators as described below (\cf Section \ref{sec:protocol} for details):

\begin{enumerate}
    \item {\bf Fork-choice rule:} If there is a fork on the \pos chain due to a safety violation, then the canonical chain of the \chainnosp-enhanced PoS protocol follows the fork whose first checkpointed block has the earlier timestamp on \chain (Figure~\ref{fig:checkpointing}). 
    Thus, \chain helps resolve safety violations on the PoS chains.

    \item {\bf Stake withdrawal and slashing for double-signing:} 
    A validator can input a withdrawal request into the PoS protocol to withdraw its funds locked for staking.
    A stake withdrawal request is granted, as a transaction in a later PoS block, if the PoS block containing the request is timestamped, \ie checkpointed, by a \chain block that is at least $k_w$-block deep in the longest \chain chain and no \emph{fraud proof} of the validator double-signing has appeared on \chain (Figure~\ref{fig:withdrawal}).
    On the other hand, if a fraud proof against this validator exists, then the validator is slashed.
    Here, $k_w$ determines the stake withdrawal delay.
    
    \item {\bf Slashing for transaction censoring:} If a node believes that a transaction is being censored on the PoS chain, it can submit the transaction to \chain along with a \emph{censorship complaint}.
    Upon observing a complaint on the \chain chain, validators stop proposing or voting for PoS blocks that do not contain the censored transaction.
    PoS blocks excluding the censored transaction and checkpointed on \chain after the censorship complaint are labeled as \emph{censoring}.
    Validators that propose or vote for censoring blocks are slashed.
    
    Although \chain is a timestamping service, granularity of time as measured by its blocks depends on the level of its security. 
    For instance, if the adversary can reorganize the last $k_c$ blocks on the \chain chain, it can delay the checkpoints of \pos blocks sent to the miners before a censorship complaint until after the complaint appears on \chain.
    Thus, to avoid slashing honest validators that might have voted for these blocks, \pos blocks checkpointed by the first $k_c$ \chain blocks following a censorship complaint are not labelled as censoring.
    Since this gives the adversary an extra $k_c$ blocktime to censor transactions, $k_c$ is an upper bound on the worst-case finalization latency of transactions when there is an active censorship attack.
    
    \item {\bf Slashing for stalling:} If a \pos node believes that the growth of the PoS chain has stalled due to missing proposals or votes, it submits a \emph{stalling evidence} to \chain. 
    Upon observing a stalling evidence, validators record proposals and votes exchanged over the next round of the \pos protocol on \chain.
    Those that fail to propose or vote, thus whose protocol messages do not appear on \chain within $k_c$ blocks of the stalling evidence, are slashed.
    Again, the grace period of $k_c$ blocks protects honest validators from getting slashed in case the adversary delays their messages on \chain.
\end{enumerate}

We note that the \chain checkpoints of finalized \pos blocks are primarily used to resolve safety violations and slash adversarial validators in the event of safety and liveness attacks.
Hence, \chain does not require any changes to the native finalization rule of the \pos protocols using its services, and preserves their fast finality propert in the absence of censorship or stalling attacks.

Majority of the \chainnosp-specific add-ons used to enhance \pos protocols treat the \pos protocol as a blackbox, thus are applicable to any propose-vote style accountably-safe \pos protocol.
In fact, the only part of the \chainnosp-specific logic in Section~\ref{sec:protocol} that uses the Tendermint details is the part used to slash stalling attacks on liveness.
Hence, we believe that \chain can be generalized to apply to \pos protocols such as PBFT \cite{pbft}, HotStuff \cite{yin2018hotstuff}, and Streamlet \cite{streamlet}.

\subsection{Security Theorem}
\label{sec:sec-analysis}

Using \chain, accountably-safe PoS protocols can overcome the limitations highlighted in Section~\ref{sec:pos-insecurities} and obtain slashable security.
To demonstrate this, we augment Tendermint \cite{tendermint} with \chainnosp-specific add-ons and state the following security theorem for \chainnosp-enhanced Tendermint.
Tendermint was chosen as it provides the standard accountable safety guarantees \cite{tendermint_thesis}.

The \bpow chain is said to be secure for parameter $r$, if the $r$-deep prefixes of the longest \bpow chains in the view of honest nodes satisfy safety and liveness as defined in \cite{backbone}.
A validator $v$ is said to become slashable in the view of an honest node if $v$ was provably identified as a protocol violator and has not withdrawn its stake in the node's view.
Formal definitions of safety and liveness for the \chainnosp-enhanced PoS protocols, slashability for the validators and security for the \bpow chain are given in Section~\ref{sec:model}.

\begin{theorem}
\label{thm:main-security-informal}
Consider a synchronous network where message delays between all nodes are bounded, and the average time between two \chain blocks is set to be much larger than the network delay bound.
Then, \chainnosp-enhanced Tendermint satisfies the following security properties if there is at least one honest \pos node at all times:
\begin{itemize}
\item Whenever the safety of the PoS chain is violated, either of the following conditions must hold:
\begin{itemize}
    \item \textbf{S1:} More than $1/3$ of the active validator set becomes slashable in the view of all honest \pos nodes.
    \item \textbf{S2:} Security of the \bpow chain is violated for parameter $k_w/2$.
\end{itemize}
\item Whenever the liveness of the PoS chain is violated for a duration of more than $\Theta(k_c)$ block-time as measured in mined \chain blocks, either of the following conditions hold:
\begin{itemize}
    \item \textbf{L1:} More than $1/3$ of the active validator set becomes slashable in the view of all honest \pos nodes.
    \item \textbf{L2:} Security of the \bpow chain is violated for parameter $k_c/2$.
\end{itemize}
\end{itemize}
\end{theorem}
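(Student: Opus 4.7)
The plan is to establish each of the four possible failure modes by contrapositive: assume the Babylon chain is secure with the stated parameter and fewer than $1/3$ of the active validators are slashable in some honest node's view, and derive that safety (resp.\ liveness) of the PoS chain holds in that node's view. I will treat the safety and liveness parts separately, since they invoke Tendermint's native properties in different ways.

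For safety (S1 vs.\ S2), I would start from a supposed safety violation, i.e.\ two conflicting PoS blocks $b,b'$ that are finalized in the views of two honest PoS nodes and have not been resolved by the Babylon-based fork-choice rule. Tendermint's accountable safety property yields a fraud proof implicating more than $1/3$ of the active validators as double-signers. The core of the argument then has two pieces. First, I would argue that at least one honest PoS node successfully posts this fraud proof to Babylon; by Babylon liveness (a consequence of assumed Babylon security) the proof enters the $k_w/2$-deep prefix of every honest node's Babylon chain within a bounded number of blocks. Second, I would show that an adversarial validator who tries to withdraw its stake must get its withdrawal request checkpointed and $k_w$-deep before the fraud proof appears in the $k_w/2$-deep prefix. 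Using the fact that Babylon security for parameter $k_w/2$ forces all honest nodes to agree on the $k_w/2$-deep prefix, the withdrawal cannot finalize earlier than the fraud proof in any honest view, so every identified double-signer remains slashable. The same reasoning, applied through the fork-choice rule, shows that if the fork is not resolved on Babylon (so both branches produce conflicting finalized views), the earlier-timestamp rule disagrees across honest views only if the Babylon $k_w/2$-deep prefix disagrees, i.e.\ Babylon security is violated.

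For liveness (L1 vs.\ L2), I would distinguish the two kinds of attacks Babylon is designed to catch: transaction censorship and stalling. Suppose liveness of the PoS chain is violated for more than $\Theta(k_c)$ Babylon block-times. Since at least one honest PoS node exists, that node can post either a censorship complaint containing the censored transaction, or a stalling evidence, to Babylon; by Babylon liveness this message enters every honest node's $k_c/2$-deep prefix. From the protocol rules in Section~\ref{sec:protocol-specifics}, any validator that subsequently proposes or votes for a block that excludes the censored transaction, or fails to participate in the recorded round after a stalling evidence, becomes identifiable via evidence that itself can be posted on Babylon. I would then, as in the safety case, use Babylon security for parameter $k_c/2$ to argue that the grace period of $k_c$ blocks absorbs any adversarial reordering of evidence relative to the underlying offending PoS messages, so a validator who is labelled censoring/non-responsive in any honest view is labelled so in every honest view. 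If more than $2/3$ of the active validators are honest and none of them is among the slashable set, they suffice to propose and vote for blocks that include the complained-about transaction or resume participation, contradicting the assumed $\Theta(k_c)$-long liveness violation.

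The main obstacle, and the step I would spend the most care on, is the accounting of Babylon depths in the presence of an adversary that can both reorg short suffixes of Babylon and delay PoS messages up to the network delay bound. Concretely, one must tie the three timescales together: the PoS protocol's evidence-production latency, the maximum reorg depth consistent with Babylon security for parameter $k_w/2$ (resp.\ $k_c/2$), and the stake-withdrawal horizon $k_w$ (resp.\ the censorship grace period $k_c$). The choice of ``$/2$'' in the parameters is precisely what leaves enough slack to guarantee that evidence beats withdrawal, and making that bookkeeping tight under the synchronous-network assumption, while also ensuring that the late-coming-node scenario from Section~\ref{sec:pos-insecurities} is ruled out by data availability of the checkpointed messages, is where I expect the hardest case analysis. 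The remaining pieces, namely the reduction to Tendermint's accountable safety and the standard Babylon backbone guarantees invoked as black boxes, are routine by comparison.
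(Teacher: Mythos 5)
Your safety argument follows essentially the paper's route: Tendermint's accountable safety yields a fraud proof implicating $1/3$ of the validators, and the slack between the withdrawal depth $k_w$ and the security parameter $k_w/2$ guarantees that the proof reaches the $k_w/2$-deep prefix of every honest node's \bpow chain before any implicated validator's withdrawal can be granted. What you leave implicit, and what occupies most of the paper's safety proof, is the positional case analysis: the \pos block carrying the withdrawal request may sit on either branch of the fork, may conflict with or precede the first conflicting blocks, or may not be checkpointed at all in some view; the paper disposes of these cases not by the timing race but by the fork-choice rule (a block cannot be in a node's ledger if an earlier \bpow block checkpoints a conflicting \pos block) together with checkpoint monotonicity. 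That is a completable omission rather than a wrong turn.

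The liveness argument, however, has a genuine gap. You close the contrapositive by asserting that if more than $2/3$ of the active validators are honest and none is slashable, they propose and vote for non-censoring blocks and liveness resumes. But the theorem makes no honesty assumption on the validator set at all --- only that one honest \pos \emph{node} exists --- so the case ``fewer than $1/3$ slashable but fewer than $2/3$ honest'' is not covered by your argument, and it is precisely the case that makes the result cryptoeconomic rather than honest-majority-based. The paper's Lemma~\ref{lem:new-liveness-event} instead counts \emph{messages recorded on \chain}: within the prescribed window either $2f+1$ uniquely signed non-censoring proposals (and subsequently $2f+1$ prevotes and precommits) appear between the designated \bpow blocks, in which case a new block is finalized regardless of which validators are honest, or at least $f+1$ validators have a missing or censoring message and become slashable in every honest view. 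Your sketch also omits two ingredients this counting needs: the argument that all honest participants select the \emph{same} unique proposal for the on-\chain round (which rests on maximizing $\mathsf{validRound}$ and on the absence of $f+1$ prevote equivocators, and is needed so that the $2f{+}1$ votes are for one block), and the final step handling the interaction between a concurrent safety violation --- which halts liveness slashing --- and the liveness claim.
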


Proof of Theorem~\ref{thm:main-security-informal} is given in Appendix~\ref{sec:appendix-main-security}.
Note that this is a {\em cryptoeconomic} security theorem as it explicitly states the slashing cost to the attacker to cause a safety or liveness violation (conditions \textbf{S1} and \textbf{L1} respectively). There is no trust assumption on the PoS validators such as having an honest majority. There are trust assumptions on the Babylon miners (as reflected by conditions \textbf{S2} and \textbf{L2}), but these trust assumptions are also quantifiable in terms of the economic cost of the attacker to acquire the hash power to reorganize certain number of \chain blocks. 

Specific implications of the theorem are:
\begin{enumerate}
    
    \item {\bf Slashable safety:} 
    Conditions \textbf{S1} and \textbf{S2} together say that, when the \pos chain is supported by \chain, the attacker must reorganize $k_w/2$ blocks on \chain if it does not want to be slashed for a safety attack on the PoS chain. 
    Since $k_w$ is the stake withdrawal delay and determines the liquidity of the staked funds, \textbf{S2} quantifies the trade-off between stake liquidity and the attacker's cost. 
    When reorganization cost of \chain is high as is the case for a chain merge-mined with Bitcoin, this trade-off also implies much better liquidity than in the current PoS chains (\eg, 21 days in Cosmos).
    
    \item {\bf Slashable liveness:}
    Conditions \textbf{L1} and \textbf{L2} together say that, with \chainnosp's support, the attacker must reorganize $k_c/2$ blocks on \chain if it does not want to be slashed for a liveness attack on the PoS chain.
    Since $k_c$ is the worst-case latency for the finalization of transactions under an active liveness attack,
    \textbf{L2} quantifies the trade-off between the worst-case latency under attack and the attacker's cost. 
\end{enumerate}

\subsection{Bootstapping New PoS Chains}
\label{sec:bootstrapping}

In a PoS protocol with slashable security, the attack cost is determined by the token value (\cf Section~\ref{sec:pos-insecurities}).
On protocols with low initial valuation, this low barrier to attack pushes away high-valued applications that would have increased the token value.
To break this vicious cycle, new PoS protocols can use \chain as a second layer of finalization.
For instance, \pos nodes can require the checkpoint of a finalized \pos block to become, \eg, $k$ blocks deep in the \chain PoW chain before they consider it finalized.
Then, to violate the security of the $k$-deep and finalized PoS blocks, the attacker must not only forgo its stake due to slashing, but also acquire the hash power necessary to reorganize \chain for $k$ blocks.
For this purpose, it has to control over half of the total hash power for a duration $\Theta(k)$ blocks\footnote{Reorganizing one Bitcoin block costs about USD \$0.5M, as of this writing \cite{attack-cost}. Perhaps more importantly, $0\%$ of this hash power is available on nicehash.com.}\cite{zeta_temporary}.
Thus, by increasing $k$, the attack cost can be increased arbitrarily.
Through this extra protection provided by \chain, newer PoS protocols can attract high value applications to drive up their valuation.

Note that a large $k$ comes at the expense of finalization latency, which no longer benefits from the fast finality of the PoS protocol.
This tradeoff between latency and the parameter $k$ can be made individually or collectively by the PoS nodes in a manner that suits the nodes' or the protocol's security needs.
Moreover, once the valuation of the protocol grows sufficiently large, the parameter $k$ can be decreased in proportion to the slashing costs, and eventually removed altogether, enabling the PoS protocols to regain fast finality after a quick bootstrapping period.

\subsection{Outline}
\label{sec:outline}

Section~\ref{sec:related-work} surveys the related work and analyzes alternative timestamping solutions in terms of their ability to provide slashable security. 
Section~\ref{sec:model} introduces the model and the formal definitions used throughout the paper.
Section~\ref{sec:pos-security} formalizes the impossibility results for slashable safety and accountable liveness of PoS protocols.
Sections~\ref{sec:protocol} and~\ref{sec:scalability-of-the-protocol} give a detailed description of a \chainnosp-enhanced \pos protocol, and discuss \chainnosp's potential for scalability.
Finally, Section~\ref{sec:cosmos} provides a reference design for \chainnosp-enhanced Tendermint using Cosmos SDK.

\section{Related Works}
\label{sec:related-work}

\subsection{Long-range Attacks} 

Among all the PoS security issues discussed in Section \ref{sec:pos-insecurities}, long range history revision attacks is the most well-known, \cite{vitalik_weak_subj, snowwhite, badertscher2018ouroboros, long_range_survey} and several solutions have been proposed:
1) checkpointing via social consensus (\eg, \cite{vitalik_weak_subj, snowwhite,winkle,barber2012}); 
2) use of key-evolving signatures (\eg, \cite{algorand,kiayias2017ouroboros,badertscher2018ouroboros}); 
3) use of verifiable delay functions, \ie, VDFs (\eg, \cite{solana}); 
4) timestamping on an existing PoW chain like Ethereum \cite{bms} or Bitcoin \cite{sarah_talk}.

\subsubsection{Social Consensus}
Social consensus refers to a trusted committee of observers, potentially distinct from the \pos nodes, which periodically checkpoint finalized PoS blocks that have been made available.
It thus attempts to prevent long range attacks by making the adversarial blocks that are kept private distinguishable from those on the canonical PoS chain that contain checkpoints. 

Social consensus suffers from vagueness regarding the size and participants of the checkpointing committee.
For instance, a small oligarchy of trusted nodes would lead to a centralization of trust, anathema to the spirit of distributed systems.
Conversely, a large committee would face the problem of reaching consensus on checkpoints in a timely manner.
Moreover, the question of who belongs in the committee complicates the efforts to quantify the trust assumptions placed on social consensus, in turn making any security valuation prone to miscalculations.
For instance, a re-formulation of Theorem~\ref{thm:main-security-informal} in this setting would claim slashable security as long as the social consensus checkpoints are `trustworthy', without much insight on how to value this trust in economic terms.
In comparison, the trust placed on \chain is quantifiable and equals the cost of acquiring the hash power necessary to reorganize the \bpow PoW chain, which is well-known \cite{attack-cost}.

\subsubsection{Key-evolving Signatures}
Use of key-evolving signatures requires validators to forget old keys so that a history revision attack using old coins cannot be mounted.
However, an adversarial majority can always record their old keys and use them to attack the canonical chain by creating a conflicting history revision chain once they withdraw their stake.
This way, they can cause a safety violation, yet upon detection, avoid any slashing of the stake as it was already withdrawn.
Hence, key-evolving signatures cannot prevent long range attacks without an honest majority assumption, thus cannot provide slashable security.

Security has been shown for various PoS protocols \cite{algorand,badertscher2018ouroboros} using key-evolving signatures under the honest majority assumption, which ensures that the majority of validators willingly forget their old keys.
However, this is not necessarily incentive-compatible as there might be a strong incentive for the validators to remember the old keys in case they become useful later on.
Thus, key-evolving signatures render the honest majority assumption itself questionable by asking honest validators for a favor which they may be tempted to ignore.

\subsubsection{VDFs}
As was the case with key-evolving signatures, VDFs cannot prevent long range attacks without the honest majority assumption, thus cannot provide slashable security. 
For instance, an adversarial majority can build multiple conflicting PoS chains since the beginning of time, and run multiple VDF instances simultaneously for both the public PoS chain and the attack chains that are kept private.
After withdrawing their stakes, these validators can publish the conflicting attack chains with the correct VDF proofs.
Thus, VDFs cannot prevent an adversarial majority from causing a safety violation at no slashing cost.

Another problem with VDFs is the possibility of finding faster functions \cite{fastervdf}, which can then be used to mount a long range attack, even under an honest majority assumption.

\subsubsection{Timestamping on Bitcoin or Ethereum}
Timestamping directly on an existing PoW chain, \eg, Bitcoin \cite{sarah_talk} or Ethereum \cite{bms} suffers from the fact that these PoW chains do not check for the availability of the committed data in the view of the PoS nodes, thus, as a solution, is vulnerable to the attack on Figure~\ref{fig:data-availability-attack}.
To mitigate the attack, either all of the committed PoS data, \eg, all the \pos blocks, must be posted on the PoW chain to guarantee their availability, or an honest majority must certify the timestamps to prevent unavailable PoS blocks from acquiring timestamps on the PoW chain. 

The first mitigation creates scalability issues since in this case, miners must not only verify the availability of the PoS data, which can be done through lightweight methods such as data availability sampling \cite{albassam2018fraud,cmt}, but also store the data of potentially many different PoS protocols indefinitely.

The second mitigation was implemented by \cite{bms} through an Ethereum smart contract which requires signatures from over $2/3$ of the PoS validators to timestamp changes in the validator sets.
Since \cite{bms} assumes honest majority, signatures from $2/3$ of the validators imply that the signed changes in the validator set are due to transactions within available and valid \pos blocks.
However, the second mitigation cannot be used to provide cryptoeconomic security without trust assumptions on the validators.
In contrast, \chain miners are modified to do data availability checks, which enables \chain to rely on the miners themselves rather than an honest majority of PoS validators for data availability. 

\subsection{Hybrid PoW-PoS Protocols}

A \chainnosp-enhanced PoS protocol is an example of a {\em hybrid PoW-PoS protocol}, where consensus is maintained by both PoS validators and PoW miners. One of the first such protocols is the Casper FFG finality gadget used in conjunction with a longest chain PoW protocol \cite{casper}. 
The finality gadget is run by PoS validators as an overlay to checkpoint and finalize blocks in an underlay PoW chain, where blocks are proposed by the miners.
The finality gadget architecture is also used in many other PoS blockchains, such as Ethereum 2.0 \cite{gasper} and Polkadot \cite{stewart2020grandpa}.
\chain can be viewed as a "reverse" finality gadget, where the miners run an overlay PoW chain to checkpoint the underlay PoS chains run by their validators.
Our design of \chain that combines an existing PoW protocol with PoS protocols also leverages off insights from a recent line of work on secure compositions of protocols \cite{ebbandflow,sankagiri_clc, acc_gadget}.

\subsection{Blockchain Scaling Architectures}

Scaling blockchains is a longstanding problem. 
A currently popular solution on Ethereum and other platforms is the shift of transaction execution from a base blockchain to {\em rollups}, which execute state transitions and post state commitments on the blockchain. 
Emerging projects like Celestia \cite{albassam2019lazyledger} take this paradigm further by removing execution entirely from the base blockchain and having it provide only data availability and ordering. 
In contrast, the main goal of the present work is not on scalability but on enhancing existing or new PoS protocols with slashable security. 
While rollups derive their security entirely from the base blockchain, the PoS protocols are autonomous and have their own validators to support their security.
In this context, the main technical challenge of this work is how to design the architecture such that the \chain PoW chain augments the existing security of the PoS protocols with slashable security guarantees.
Nevertheless, to scale up our platform to support many PoS protocols, we can leverage off scaling techniques such as efficient data availability checks \cite{albassam2018fraud,cmt} and sharding \cite{ETHsharding}. More discussions can be found in Section \ref{sec:scalability-of-the-protocol}.

\section{Model}
\label{sec:model}
\emph{Validators:}
\pos nodes that run the \pos consensus protocol are called validators.
Each validator is equipped with a unique cryptographic identity.
Validators are assumed to have synchronized clocks.

There are two sets of validators: \passive and active.
Validators \emph{stake} a certain amount of coins to become active and participate in the consensus protocol.
Although staked coins cannot be spent, active validators can send \emph{withdrawal requests} to withdraw their coins.
Once a withdrawal request by an active validator is finalized by the \pos protocol, \ie, included in the PoS chain, the validator becomes \passive and ineligible to participate in the consensus protocol.
The \passive validator is granted permission to withdraw its stake and spend its funds once a \emph{withdrawal delay} period has passed following the finalization of the withdrawal request.

Let $n$ denote the total number of validators that are active at any given time.
The number of \passive validators is initially zero and grows over time as active validators withdraw their stakes and become \passive.

\emph{Environment and Adversary:}
Transactions are input to the validators by the environment $\Env$.
Adversary $\Adv$ is a probabilistic poly-time algorithm.
$\Adv$ gets to corrupt a certain fraction of the validators when they become active, which are then called \emph{adversarial} validators.
It can corrupt any passive validator.

Adversarial validators surrender their internal state to the adversary and can deviate from the protocol arbitrarily (Byzantine faults) under the adversary's control.
The remaining validators 
are called \emph{honest} and follow the PoS protocol as specified.

\emph{Networking:} Validators can send each other messages. 
Network is synchronous, \ie $\Adv$ is required to deliver all messages sent between honest validators, miners and nodes within a known upper bound $\Delta$.


\emph{Accountability:} We assume that the \pos protocol supported by \chain has an accountable safety resilience of $\fA$ (parameter $d$ as defined in \cite{forensics}), \ie, $\fA$ adversarial validators (that are potentially \passive) are irrefutably identified by all \pos nodes as having violated the protocol in the event of a safety violation, and no honest validator can be identified as a protocol violator.
Moreover, for the culpable validators, \pos nodes can create an irrefutable fraud proof showing that they violated the protocol.

\emph{Safety and Liveness for the \pos Protocols:}
Let $\Ledger{i}{t}$ denote the chain of finalized \pos blocks, \ie, the PoS chain, in the view of a node $i$ at time $t$.
Then, safety and liveness for the \pos protocols are defined as follows:

\begin{definition}
\label{def:pos-security}
Let $\Tconfirm$ be a polynomial function of the security parameter $\sigma$ of the \pos protocol $\PI$.
We say that $\PI$ is $\Tconfirm$-secure if the \pos chain satisfies the following properties:
\begin{itemize}
    \item \textbf{Safety:} For any time slots $t,t'$ and honest \pos nodes $i,j$, either $\Ledger{i}{t}$ is a prefix of $\Ledger{j}{t'}$ or vice versa. For any honest \pos node $i$, $\Ledger{i}{t}$ is a prefix of $\Ledger{i}{t'}$ for all times $t$ and $t'$ such that $t \leq t'$.
    \item \textbf{$\mathbf{\Tconfirm}$-Liveness:} If $\mathcal{Z}$ inputs a transaction $\tx$ to the validators at some time $t$, then, $\tx$ appears at the same position in $\Ledger{i}{t'}$ for any time $t' \geq t+\Tconfirm$ and for any honest \pos node $i$.
\end{itemize}
\end{definition}

A \pos protocol is said to satisfy $\fS$-safety or $\fL$-$\Tconfirm$-liveness if it satisfies safety or $\Tconfirm$-liveness whenever the number of active adversarial validators is less than or equal to $\fS$ or $\fL$ respectively. 

\emph{Safety and Liveness for the \bpow Chain:}
Let $\POWLedger{i}{t}$ denote the longest, \ie, canonical, \bpow chain in the view of a miner or \pos node $i$ at time $t$.
Then, safety and liveness for the \bpow chain are defined as follows:

\begin{definition}[From \cite{backbone}]
\label{def:pow-security}
\chain is said to be secure for parameter $r \geq 1$, $r \in \mathbb{Z}$, if it satisfies the following two properties:
\begin{itemize}
    \item \textbf{Safety:} If a transaction $\tx$ appears in a block which is at least $r$-deep in the longest \bpow chain of an honest node or miner, then, $\tx$ will eventually appear and stay at the same position in the longest \bpow chain of all honest nodes or miners forever.
    \item \textbf{Liveness:} If a valid transaction $\tx$ is received by all honest miners for more than $r$ block-time, then $\tx$ will eventually appear at an $r$-deep block in the longest \bpow chain of all honest nodes or miners. 
    No invalid transaction ever appears at an $r$-deep block in the longest \bpow chain held by any honest node or miner.
    \end{itemize}
\end{definition}
Thus, if \chain satisfies security for the parameter $r$, $r$-deep prefixes of the longest chains held by the honest nodes are consistent with each other, grow monotonically, and transactions received by the honest miners for more than $r$ block-time enter and stay in the longest \bpow chain observed by the honest nodes forever.


\emph{Slashability:} Slashing refers to the process of financial punishment for the active validators detected as protocol violators. 
\begin{definition}
\label{def:slashable}
A validator $\validator$ is said to be \emph{slashable} in the view of a \pos node $\client$ if, 
\begin{enumerate}
    \item $\client$ provably identified $\validator$ as having violated the protocol for the first time at some time $t$, and,
    \item $\validator$ has not withdrawn its stake in $\client$'s view by time $t$.
\end{enumerate} 
\end{definition}
If a validator $\validator$ is observed to be slashable by all honest \pos nodes, no transaction that spends the coins staked by $\validator$ will be viewed as valid by the honest \pos nodes.

\section{Impossibility Results for Proof-of-Stake Protocols}
\label{sec:pos-security}

\subsection{Safety Violation is not Slashable}
\label{sec:long-range-attack}

Without additional trust assumptions, \pos protocols are susceptible to various flavors of long range attacks, also known as founders' attack, posterior corruption or costless simulation.
In this context, \cite[Theorem 2]{snowwhite} formally shows that even under the honest majority assumption for the active validators, \pos protocols cannot have safety due to long range attacks without additional trust assumptions.
Since slashable safety is intuitively a stronger result than guaranteeing safety under the honest majority assumption, \cite[Theorem 2]{snowwhite} thus rules out any possibility of providing \pos protocols with slashable safety without additional trust assumptions.
This observation is formally stated by Theorem~\ref{thm:pos-non-slashable} in Appendix~\ref{sec:appendix-proofs}.

\subsection{Liveness Violation is not Accountable}
\label{sec:accountable-liveness}

\begin{figure}
    \centering
    \includegraphics[width=\linewidth]{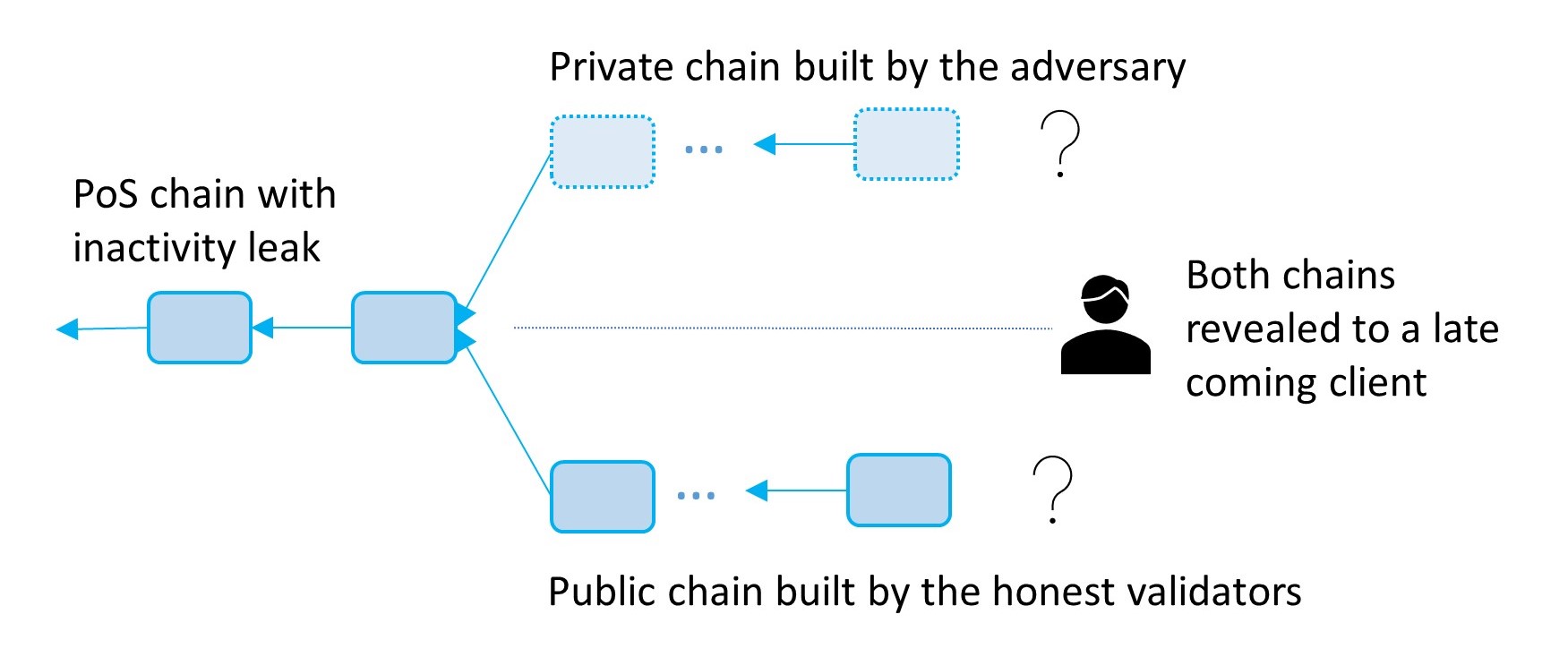}
    \caption{Inactivity leak attack. At the top is adversary's private attack chain. At the bottom is the public canonical chain built by the honest validators.
    Due to inactivity leak, honest \& adversarial validators lose their stake on the attack and canonical chains respectively. A late-coming node cannot differentiate the canonical and attack chains.}
    \vspace{-0.2in}
    \label{fig:inactivity-leak}
\end{figure}

Without additional trust assumptions, \pos nodes cannot identify any validator to have irrefutably violated the \pos protocol in the event of a liveness violation, even under a synchronous environment.
To illustrate the intuition behind this claim, we show that inactivity leak \cite{inactivity_leak}, proposed as a financial punishment for inactive Ethereum 2.0 validators, can lead to the slashing of honest validators' stake with non-negligible probability.
Consider the setup on Figure~\ref{fig:inactivity-leak}, where adversarial validators build a private attack chain that forks off the canonical one and stop communicating with the honest validators.
As honest validators are not privy to the adversary's actions, they cannot vote for the blocks on the attack chain.
Thus, honest validators are inactive from the perspective of the adversary and lose their stake on the attack chain due to inactivity leak.
On the other hand, as the adversarial validators do not vote for the blocks proposed by the honest ones, they too lose their stake on the public, canonical chain (Figure~\ref{fig:inactivity-leak}).
Finally, adversary reveals its attack chain to a late-coming node which observes two conflicting chains.
Although the nodes that have been active since the beginning of the attack can attribute the attack chain to adversarial action, a late-coming node could not have observed the attack in progress. 
Thus, upon seeing the two chains, it cannot determine which of them is the canonical one nor can it irrefutably identify any validator slashed on either chains as adversarial or honest.

To formalize the impossibility of accountable liveness for \pos protocols, we extend the notion of accountability to liveness violations and show that no \pos protocol can have a positive \emph{\alr}, even under a synchronous network with a static set of $n$ active validators that never withdraw their stake.
For this purpose, we adopt the formalism of \cite{forensics} summarized below:
During the runtime of the \pos protocol, validators exchange messages, \eg, blocks or votes, and each validator records its view of the protocol by time $t$ in an execution transcript.
If a node observes that $\Tconfirm$-liveness is violated, \ie, a transaction input to the validators at some time $t$ by $\Env$ is not finalized in the \pos chain in its view by time $t+\Tconfirm$, it invokes a forensic protocol:
The forensic protocol takes transcripts
of the validators as input, and outputs an irrefutable proof that a subset of them have violated the protocol rules.
This proof is sufficient evidence to convince any node, including late-coming ones, that the validators identified by the forensic protocol are adversarial.

Forensic protocol interacts with the nodes in the following way:
Upon observing a liveness violation on the \pos chain, a node asks the validators to send their transcripts.  
It then invokes the forensic protocol with the transcripts received from the validator.
Finally, through the forensic protocol, it constructs the irrefutable proof of protocol violation by the adversarial validators, and broadcasts this proof to all other nodes.

Using the formalization above, we next define \emph{\alr} and state the impossibility theorem for accountable liveness on \pos protocols in the absence of additional trust assumptions:
\begin{definition}
\label{def:accountable-liveness}
$\Tconfirm$-\alr of a protocol is the minimum number $f$ of validators identified by the forensic protocol to be protocol violators when $\Tconfirm$-liveness of the protocol is violated.
Such a protocol provides \emph{$f$-$\Tconfirm$-accountable-liveness}.
\end{definition}

\begin{theorem}
\label{thm:accountable-liveness} 
Without additional trust assumptions, no \pos protocol provides both $\fA$-$\Tconfirm$-accountable-liveness and $\fS$-safety for any $\fA,\fS > 0$ and $\Tconfirm<\infty$.
\end{theorem}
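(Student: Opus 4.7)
The plan is to derive a contradiction from the simultaneous existence of $\fA$-$\Tconfirm$-accountable-liveness and $\fS$-safety, using a pair of mirror executions that are observationally indistinguishable to a late-coming node. I would first assume that $\PI$ provides both properties for some $\fA, \fS > 0$ and $\Tconfirm<\infty$, and partition the validator set into two disjoint groups $A$ and $B$ with $|A| = |B| = \fA$; by inflating $n$ if necessary, I can assume that $\PI$'s finalization quorum cannot be met by either group alone, so a clean synchronous ``partition'' between $A$ and $B$ necessarily violates $\Tconfirm$-liveness.

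In execution $E_1$, the adversary corrupts $A$: adversarial $A$ refuses to send messages to or process messages from $B$, but internally runs the honest protocol on a private chain as though $B$ were silent. Honest $B$ runs the protocol publicly, experiencing $A$ as silent. Neither side meets quorum, so any transaction input by $\Env$ at time $0$ remains unfinalized for at least $\Tconfirm$ in every honest node's view, and $\Tconfirm$-liveness is violated. By $\fA$-$\Tconfirm$-accountable-liveness, the forensic protocol, when supplied with the transcripts the late-coming node collects, must output an irrefutable accusation of some set $S_1$ with $|S_1| \ge \fA$; since honest validators cannot be irrefutably accused, $S_1 \subseteq A$. In the mirror execution $E_2$, I would swap roles: the adversary corrupts $B$ and instructs it to emit, message-for-message, exactly what honest $B$ did in $E_1$, while honest $A$ (seeing nothing from $B$) deterministically produces the same transcripts that adversarial $A$ produced on its private chain in $E_1$. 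Every signed message present in $E_1$ is therefore reproduced in $E_2$ by the same validator on the same content, and the multiset of transcripts that the late-coming node collects is identical across the two executions.

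Because a forensic procedure is a (deterministic) function of the transcripts it receives, its output set $S_2$ in $E_2$ must equal $S_1$. But in $E_2$ the set $A$ is honest, so $S_2 \subseteq A$ means that honest validators are being irrefutably accused, contradicting the irrefutability guarantee of accountable liveness. This contradiction rules out the coexistence of $\fA$-$\Tconfirm$-accountable-liveness and $\fS$-safety.

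The main obstacle I expect is justifying the claimed indistinguishability in the presence of cryptographic signatures. The key observation is that signatures bind content to public keys, not to honest behavior: nothing prevents the adversary in $E_1$ from making corrupted $A$ execute the honest-protocol code verbatim on its private view, producing exactly the signatures that honest $A$ would produce in $E_2$; symmetrically, adversarial $B$ in $E_2$ simply replays the transcripts that honest $B$ emitted in $E_1$. A secondary technical step is confirming that liveness is indeed violated in both executions, which follows from the partition sizes and any standard finalization quorum consistent with $\fS$-safety (since $\PI$ must refuse to finalize when it hears from only half of the validators).
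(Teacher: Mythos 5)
Your overall strategy --- two indistinguishable executions with the roles of the two groups swapped, so that the forensic protocol is forced to accuse validators that are honest in one of them --- is the same indistinguishability idea the paper uses (the paper runs it with $n/m$ groups rather than two, and turns the conclusion into a non-negligible probability of accusing an honest validator rather than a deterministic certainty). However, there is a genuine gap at the step where you claim that liveness is actually violated. You assert that ``$\PI$'s finalization quorum cannot be met by either group alone'' because ``$\PI$ must refuse to finalize when it hears from only half of the validators,'' and you attribute this to $\fS$-safety. That inference is not valid for general $\fS>0$: $\fS$-safety is only a \emph{guarantee} when at most $\fS$ validators are adversarial, and producing two conflicting finalizations from two isolated halves requires corrupting $n/2$ validators, which exceeds $\fS$ whenever $\fS<n/2$. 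So a protocol that finalizes with $n/2$ isolated validators can still be, say, $1$-safe; in your execution $E_1$ the honest half $B$ would then simply finalize $\tx$, no liveness violation occurs, the forensic protocol is never invoked, and the contradiction evaporates. The paper closes exactly this hole by first proving Theorem~\ref{thm:sync-tradeoff} (the tradeoff $\fS<n-\fL$, via a three-world split-brain argument) and then sizing the groups at $m<n-\fL$, i.e.\ strictly below the number of honest participants needed for the protocol's liveness guarantee to apply, which is what licenses the claim $\tx\notin\LOG_j$ in every world. Your group size is pegged to $\fA$ instead of to $n-\fL$, and $\fA$ has no a priori relation to any finalization quorum.

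A secondary problem: you cannot ``inflate $n$.'' The theorem quantifies over all protocols, each with its own fixed validator set, so the argument must go through for the given $n$; moreover, two groups of size $\fA$ only partition the validators when $n=2\fA$, and otherwise you never assign a role to the remaining $n-2\fA$ validators. The paper handles arbitrary $n$ by partitioning into $n/m$ equal groups for a divisor $m$ of $n$ with $1\le m<n-\fL$ (such an $m$ exists because Theorem~\ref{thm:sync-tradeoff} gives $n-\fL\ge 2$), at the cost of weakening the conclusion to ``an honest validator is accused with probability at least $m/n$,'' which is still non-negligible. If you repair the quorum step by importing the $\fS<n-\fL$ tradeoff and re-size your groups accordingly, your two-world version becomes essentially the paper's proof specialized to $n/m=2$.
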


Proof is presented in Appendix~\ref{sec:appendix-proofs} and generalizes the indistinguishability argument for the conflicting chains from the inactivity leak attack.
It rules out any possibility of providing accountable liveness for \pos protocols even under a $\Delta$-synchronous network and a static set of active validators.

A corollary of Theorem~\ref{thm:accountable-liveness} is that \pos protocols cannot have a positive \emph{\slr}:
\begin{definition}
$\Tconfirm$-\slr of a protocol is the minimum number $f$ of validators that are slashable in the view of all \pos nodes per Definition~\ref{def:slashable} when $\Tconfirm$-liveness of the protocol is violated.
Such a protocol provides $f$-$\Tconfirm$-slashable-liveness.
\end{definition}
\begin{corollary}
\label{cor:pos-non-slashable-liveness}
Without additional trust assumptions, no \pos protocol provides both $\fS$-safety and $\fL$-$\Tconfirm$-slahable-liveness for any $\fS,\fL>0$ and $\Tconfirm<\infty$.
\end{corollary}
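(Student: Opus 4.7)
My plan is to reduce the corollary directly to Theorem~\ref{thm:accountable-liveness} by showing that $\fL$-$\Tconfirm$-slashable-liveness is a strictly stronger requirement than $\fL$-$\Tconfirm$-accountable-liveness. The corollary then follows by contrapositive.

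First I would unpack Definition~\ref{def:slashable}: a validator being slashable in a node's view requires, among other things, that the node has \emph{provably identified} the validator as having violated the protocol. Consequently, if at least $\fL$ validators are slashable in the view of every honest \pos node whenever $\Tconfirm$-liveness is violated, then at least $\fL$ validators are provably identified as protocol violators by every honest node in that same event. The slashing infrastructure of the given protocol, together with the transcript-gathering step described just before Definition~\ref{def:accountable-liveness}, can be repackaged as a forensic protocol in the sense required there: it takes validator transcripts as input and outputs an irrefutable proof against at least $\fL$ validators. Hence the \alr of the protocol is at least $\fL$, and the protocol provides $\fL$-$\Tconfirm$-accountable-liveness.

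Next, I would close the argument by contradiction. Suppose some \pos protocol provided both $\fS$-safety and $\fL$-$\Tconfirm$-slashable-liveness for positive $\fS,\fL$ and finite $\Tconfirm$. By the implication above, it would also provide $\fL$-$\Tconfirm$-accountable-liveness together with $\fS$-safety, directly contradicting Theorem~\ref{thm:accountable-liveness}. This yields the corollary.

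The argument is a short one-step implication, so I do not anticipate a substantial obstacle. The one subtlety worth stating explicitly is that the ``irrefutable proof of protocol violation'' underlying slashing — which by the intended semantics must persuade every honest node, including late-coming ones, so that the staked coins are uniformly withheld from being spent — is at least as strong as the forensic evidence demanded by Definition~\ref{def:accountable-liveness}. Once this is spelled out, no weakening occurs when reinterpreting the slashing machinery as a forensic protocol, and the corollary drops out of Theorem~\ref{thm:accountable-liveness} immediately.
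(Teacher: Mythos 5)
Your argument is correct and matches the paper's: the paper likewise dispatches the corollary in one line by noting that slashability requires provable identification of a protocol violator, so slashable liveness implies accountable liveness, and the \alr of any \pos protocol is zero by Theorem~\ref{thm:accountable-liveness}. Your additional care in repackaging the slashing evidence as a forensic protocol is a reasonable elaboration of the same step.
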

Proof of Corollary~\ref{cor:pos-non-slashable-liveness} follows from the fact that \alr of \pos protocols is zero without additional trust assumptions.

\section{Protocol}
\label{sec:protocol}
\begin{figure*}[h]
    \centering
    \includegraphics[width=0.9\linewidth]{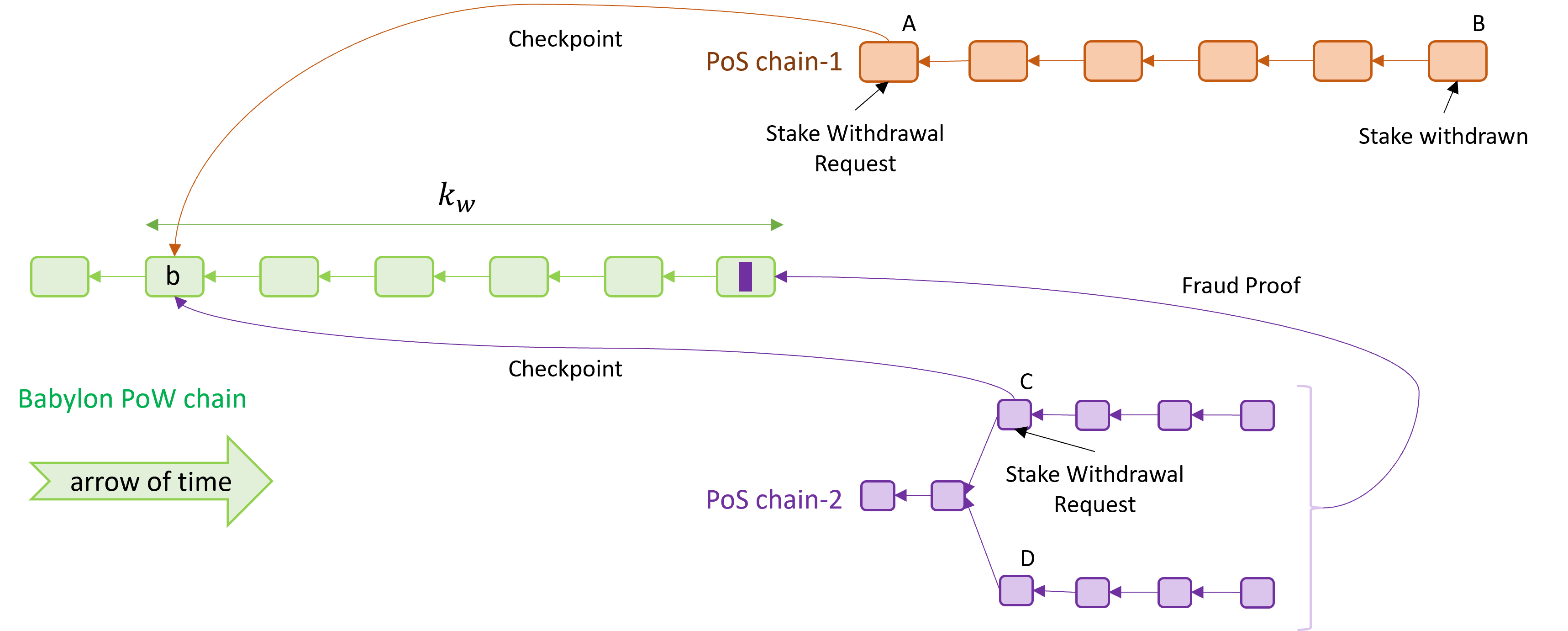}
    \caption{Delayed granting of withdrawal request and slashing. A validator for the PoS chain-1 sends a stake withdrawal request to its chain which is captured by the PoS block A.
    Block A is in turn checkpointed by the \chain block b.
    This stake withdrawal request will only be granted and executed at a later PoS block B, where B is generated by a validator that observes the checkpoint of block A in block b become at least $k_w$ deep in \chain and that there is no fraud proof.
    On the other hand, the validator for the PoS chain-2 is not granted its withdrawal request and is slashed, since a fraud proof appears on \chain before block b becomes $k_w$-deep.
    }
    \label{fig:withdrawal}
\end{figure*}

\begin{figure*}[h]
    \centering
    \includegraphics[width=0.8\linewidth]{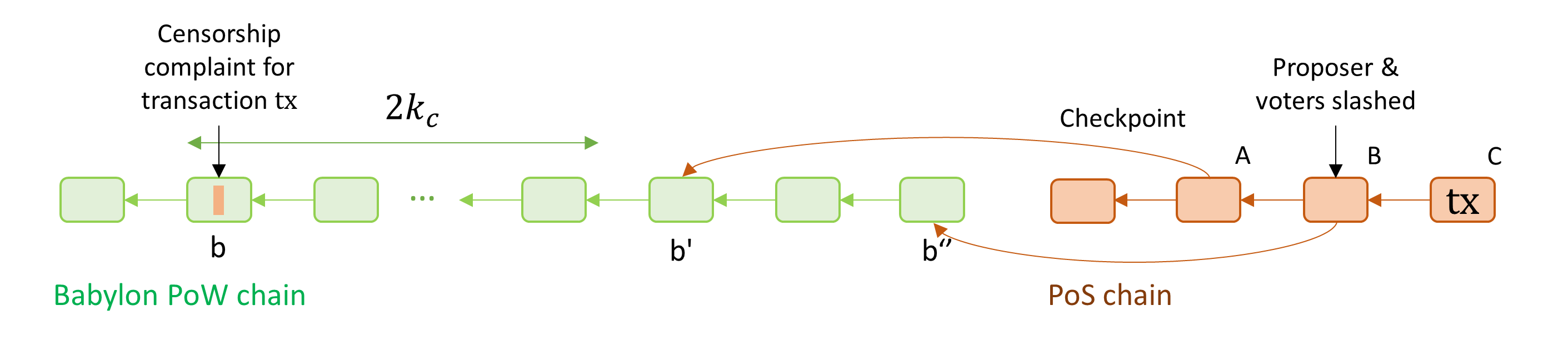}
    \caption{Slashing for censoring. A censored transaction $\tx$ is submitted to \chain through a censorship complaint, and included in a \bpow block $b$. 
    Suppose A is the last \pos block honest nodes proposed or voted for before they observed the censorship complaint on \chain.
    Let $b'$ denote the first \bpow block containing a checkpoint and extending $b$ by at least $2k_c$ blocks.
    Since finalized \pos blocks are checkpointed frequently on \chain, A will be checkpointed by $b'$, or a \bpow block in its prefix.
    Then, any \pos block, \eg, B, checkpointed by a \bpow block following $b'$ must have been proposed or voted upon by the validators after they have observed the censorship complaint, and must include $\tx$.
    However, B, which is checkpointed by $b''$ extending $b'$, does not contain $\tx$, thus, is a censoring block.
    Hence, validators that have proposed and voted for B will be slashed for censorship.
    Here, the $2k_c$ grace period on \chain between $b$ and $b'$ ensures that the honest validators are not slashed for voting upon \pos blocks excluding the censored transactions, before they observed the censorship complaint.}
    \label{fig:censor}
\end{figure*}

\begin{figure*}[h]
    \centering
    \includegraphics[width=0.8\linewidth]{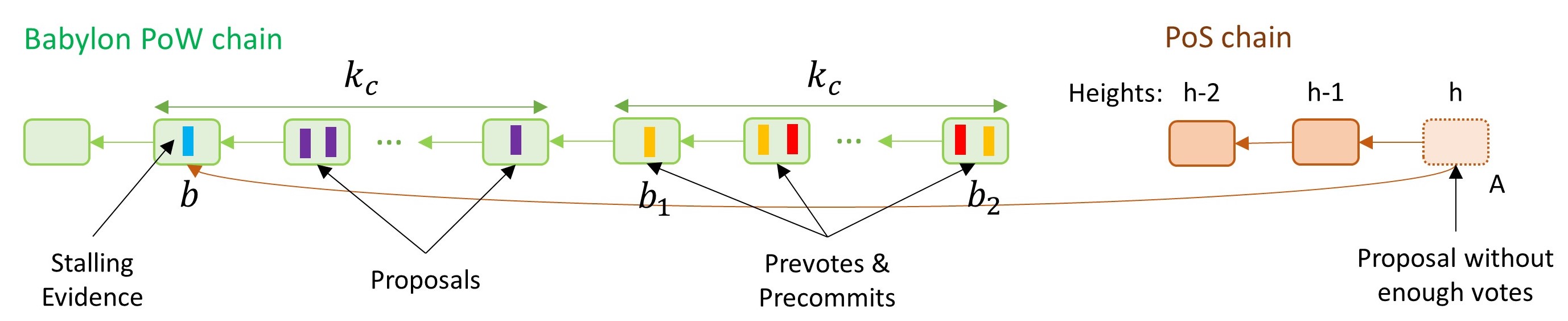}
    \caption{Slashing for stalling. A stalling evidence for height $h$, where validators failed to finalize any \pos block, is submitted to \chain and captured by the \bpow block $b$. 
    Upon observing $b$ with a stalling evidence, validators enter a new Tendermint round whose messages are recorded on \chain.
    During this round, they submit their proposals, prevotes and precommits to \chain.
    Validators, whose proposals do not appear on \chain between $b$ and $b_1$, are deemed to be unresponsive and slashed for stalling the protocol.
    Similarly, validators, whose votes for the proposal selected from the interval $[b,b_1)$ are missing between $b_1$ and $b_2$, are slashed.
    Here, the $k_c$ block intervals between $b$, $b_1$ and $b_2$ ensure that the proposals, prevotes and precommits submitted on time by the honest validators appear on \chain in the appropriate interval, thus preventing honest validators from getting slashed for stalling.}
    \label{fig:stalling}
\end{figure*}

\label{sec:protocol-description}

In this section, we specify how to obtain slashable security for any accountably-safe \pos protocol using \chainnosp-specific add-ons.
Unless stated otherwise, the accountably-safe \pos protocol is treated as a black-box which takes \pos transactions as its input and outputs a chain of finalized \pos blocks containing these transactions.
We assume that the consensus-related messages required to verify finalization of \pos blocks can be accessed by viewing the contents of the child blocks.

For concreteness, sections below focus on the interaction between the \bpow chain and a single \pos protocol.

\subsection{Handling of Commitments by \bpow}
\label{sec:construction-of-the-bpow-chain}

\pos nodes timestamp messages by posting their commitments on \chainnosp.
A commitment $h$ is a succinct representation of a piece of data $D$\footnote{In a real-world implementation, commitments will also carry metadata such as a \pos chain identifier, submitter’s signature and public key. The metadata will not be validated by the \bpow miners.}. 
\bpow miners receive commitments from the \pos nodes as pairs $(\tx, D)$, where $\tx$ is a \chain transaction that contains the commitment $h$, and $D$ is the associated data. 
Upon receiving such a pair, miners validate $h$ against $D$, \ie, check if $h$ is a succinct commitment of $D$, on top of other transaction validation procedures for $\tx$. 
However, since \chain is a generic data-available timestamping service, miners do not check the syntax or semantics of the data $D$.
If the validation succeeds, miners consider the commitment $h$ valid and include $\tx$ in the next \chain block mined.
They do not include the data $D$ in the \bpow blocks. 

Whenever a miner propagates a \bpow transaction to its peers, either directly or as part of the block body, it also attaches the associated data, so that the peers receiving the transaction can also validate its availability. 
Since \pos nodes act as light clients of the \bpow chain and are connected to the peer-to-peer network of the \bpow miners, they also obtain the data broadcast by the miners.
This ensures the availability of data across all honest \pos nodes once its commitment is validated and published by the \bpow miners.

Miners merge-mine the \bpow chain following the longest chain rule 
(\cf Appendix~\ref{sec:merge-mining} for more details).
A \bpow block is said to be \emph{valid} in the view of a miner if the \bpow transactions included in the block are valid in the miner's view.

\subsection{Generation and Validation of Commitments}
\label{sec:commitments-and-DA}

There are two types of commitments: message commitments and checkpoints.
Message commitment refers to the hash of the whole message.
For example, to timestamp a list of censored transactions, a \pos node sends the hash of the list to the miners as the message commitment $h$ and the whole list as the data $D$ (Algorithm~\ref{alg.generate.commitments}).
Then, to validate the commitment, miners and \pos nodes check if the hash of the data matches the commitment (Algorithms~\ref{alg.pow.validate.commitments} and~\ref{alg.pos.validate.commitments}).

Checkpoints are commitments of finalized \pos blocks.
A single checkpoint can commit to multiple consecutive blocks from the same \pos chain.
To post a checkpoint on \chain for consecutive blocks $B_1,...,B_n$, a \pos node first extracts the transaction roots $\txroot_i$, $i=1,..,n$, from the header $B_i.\header$ of each block (Algorithm~\ref{alg.generate.commitments}).
Then, using a binding hash function $H$, it calculates the following commitment\footnote{In a real-world application, commitment also contains the header of block $B_{n+1}$ as it contains the signatures necessary to verify the finalization of block $B_n$. We omit this fact above for brevity.}:
\begin{IEEEeqnarray}{C}
\label{eq:1}
h = H(B_1.\header||...||B_n.\header||\txroot_1||...||\txroot_n).
\IEEEeqnarraynumspace
\end{IEEEeqnarray}
Finally, it sends the commitment $h$, \ie, the checkpoint, to the miners along with the data $D$ which consists of (i) the block headers $B_1.\header,..,B_n.\header$, (ii) the block bodies, and (iii) the transaction roots $\txroot_1,..,\txroot_n$ separately from the headers.

Upon receiving a checkpoint or observing one on \chain, miners and \pos nodes parse the associated data $D$ into the block headers, block bodies and transaction roots. 
Miners view the commitment as \emph{valid} if (i) expression~\eqref{eq:1} calculated using $B_1.\header,..,B_n.\header$ and $\txroot_1,..,\txroot_n$ matches the received commitment, and (ii) the roots $\txroot_1,..,\txroot_n$ commit to the transactions in the bodies of the blocks $B_1,..,B_n$ (Algorithm~\ref{alg.pow.validate.commitments}).
\pos full nodes view the commitment as valid if conditions (i) and (ii) above are satisfied, (iii) $\txroot_1,..,\txroot_n$ are the same as the transaction roots within the block headers $B_1.\header,..,B_n.\header$ and (iv) the checkpointed \pos blocks are finalized in the given order within the PoS chain in their view (Algorithm~\ref{alg.pos.validate.commitments}).  
Although each header already contains the respective transaction root, a \bpow miner does not necessarily know the header structure of different \pos protocols.
Thus, miners receive transaction roots separately besides the block headers and bodies.
Note that miners cannot check if the transaction root $\txroot_i$ it got for a block $B_i$ is the same as the root within the header $B_i.\header$.
However, honest \pos nodes can detect any discrepancy between the transaction roots in the headers and those given as part of the data $D$, and ignore incorrect commitments.

Checkpoints are designed to enable light clients towards the \pos protocol to identify the checkpointed \pos blocks when they observe a commitment on \chain.
Unlike full nodes, \pos light clients do not download bodies of \pos blocks, thus cannot check if $\txroot_i$ commits to the body of $B_i$. 
However, since they do download \pos block headers, these light clients can extract the transaction roots from the headers, calculate expression~\eqref{eq:1} and compare it against the commitment on \chain to verify its validity.
\pos light clients trust the \chain miners to check if the transaction roots $\txroot_i$ indeed commit to the data in the bodies of the checkpointed blocks.

\subsection{Checkpointing the \pos Chain}
\label{sec:construction-of-the-bpos-ledger}

Nodes send checkpoints of all finalized blocks on the PoS chain to the \chain miners every time they observe the \bpow chain grow by $k_c$ blocks\footnote{In reality, \pos nodes do not submit a commitment of all of the blocks on the \pos chain in their view. They submit commitments of only those blocks that were not captured by previous checkpoints on \chainnosp.}.
We say that a \bpow block $b$ checkpoints a \pos block $B$ in the view of a node $\client$ (at time $t$) if (i) $B$ is a finalized \& valid block in the PoS chain in $\client$'s view, and (ii) $b$ is the first block within the longest \bpow chain in $\client$'s view (at time $t$) to contain a valid checkpoint of $B$ alongside other \pos blocks.

Checkpoints that do not include information about new \pos blocks are ignored by the \pos nodes during the interpretation of the commitments on \chain.
Thus, given two consecutive checkpoints on the \bpow chain that are not ignored by a \pos node, if they do not commit to conflicting \pos blocks, then the latter one must be checkpointing new \pos blocks extending those covered by the earlier one.

\textbf{Fork-choice Rule:} (Figure \ref{fig:checkpointing}, Algorithm~\ref{alg.find.canonical.chain}) If there are no forks on the \pos chain, \ie, when there is a single chain, it is the canonical \pos chain.

If there are multiple \pos chains with conflicting finalized blocks, \ie, a safety violation, in the view of a node $\client$ at time $t$, $\client$ orders these chains by the following recency relation : 
Chain A is {\em earlier} than chain B in $\client$'s view at time $t$ if the first \pos block that is on A but not B, is checkpointed by an earlier \bpow block than the one checkpointing the first \pos block that is on B but not A, on $\client$'s canonical \bpow chain at time $t$.
If only chain A is checkpointed in this manner on $\client$'s canonical \bpow chain, then A is earlier.
If there are no \bpow blocks checkpointing \pos blocks that are exclusively on A or B, then the adversary breaks the tie for $\client$.
The canonical \pos chain $\Ledger{\client}{t}$ is taken by $\client$ to be the earliest chain in this ordering at time $t$.
Thus, \chain provides a total order across multiple chains when there is a safety violation on the \pos chains.

\subsection{Stake Withdrawals and Slashing for Safety Violations}
\label{sec:slashing-in-the-case-of-safety-violations}

Since the \pos protocol provides accountable safety, upon observing a safety violation on the \pos chain, any node can construct a \emph{fraud proof} that irrefutably identifies $n/3$ adversarial validators as protocol violators, and send it to \chain.
Fraud proof contains checkpoints for conflicting \pos blocks along with a commitment, \ie, hash, of the evidence, \eg, double-signatures, implicating the adversarial validators.
Hence, it is valid as long as the checkpoints and the commitments are valid, and serves as an irrefutable proof of protocol violation by $n/3$ adversarial validators.

{\bf Stake withdrawal:} (Figure~\ref{fig:withdrawal}, Algorithm~\ref{alg.stake.withdrawal.and.slashing}) To withdraw its stake, a validator $\validator$ first sends a special \pos transaction called the \emph{withdrawal request} to the \pos protocol.
Given $k_w$, $\validator$ is granted permission to withdraw its stake in the view of a \pos node once the node observes that
\begin{enumerate}
    \item A block $B$ on its canonical \pos chain containing the withdrawal request is checkpointed by a block $b$ on its longest \bpow chain, \ie, the longest \bpow chain in its view.
    \item There are $k_w$ blocks building on $b$ on its longest \bpow chain, where $k_w$, chosen in advance, determines the withdrawal delay.
    \item There does not exist a valid fraud proof implicating $\validator$ in the node's longest \bpow chain.
    \end{enumerate}
Once the above conditions are also satisfied in $\validator$'s view, it submits a \emph{withdrawal transaction} to the \pos protocol, including a reference to the $k_w$-th \bpow block building on $b$.
Honest nodes consider the withdrawal transaction included in a \pos block $B'$ as \emph{valid} if $B'$ extends $B$, the block with the withdrawal request, and the above conditions are satisfied in their view.

{\bf Slashing for Safety Attacks:} Stake of a validator 
becomes slashable in the view of any \pos node which observes that condition (3) above is violated. 
In this case, nodes that sent the fraud proofs on \chain can receive part of the slashed funds as reward by submitting a \emph{reward transaction} to the \pos chain.

\subsection{Slashing for Liveness Violations}
\label{sec:slashing-in-the-case-of-liveness-violations}

In the rest of this section, a validator or \pos node's \bpow chain, \ie, the \bpow chain in the view of a \pos node or validator, refers to the $k_c/2$-deep prefix of the longest chain in their view.
As a liveness violation can be due to either censorship, \ie, lack of chain quality, or stalling, \ie, lack of chain growth, we analyze these two cases separately:

\subsubsection{Censorship Resilience}
\label{sec:censorship-resilience}
(Figure~\ref{fig:censor}, Algorithm~\ref{alg.censorship.slashing})
\pos nodes send commitments of censored \pos transactions to \chain via \emph{censorship complaints}.
A complaint is valid in the miners's view if the commitment matches the hash of the censored transactions.

Upon observing a valid complaint on its \bpow chain, a validator includes the censored \pos transactions within the new blocks its proposes unless they have already been included in the \pos chain or are invalid with respect to the latest \pos state.
Similarly, among new \pos blocks proposed, validators vote only for those that include the censored transactions in the block's body or prefix if the transactions are valid with respect to the latest \pos state.

Suppose a censorship complaint appears within some block $b$ on a validator $p$'s \bpow chain (Figure~\ref{fig:censor}).
Let $b'$ be the first block on $p$'s \bpow chain that checkpoints a new \pos block and extends $b$ by at least $2k_c$ blocks.
Then, a \pos block $B$ is said to be \emph{censoring} in $p$'s view if (i) it is checkpointed by a block $b''$, $b' \prec b''$ in $p$'s \bpow chain, and (ii) $B$ does not include the censored transactions in neither its body nor its prefix (\cf Algorithm~\ref{alg.censorship.slashing} for a function that detects the censoring PoS blocks with respect to a censorship complaint).

\subsubsection{Slashing for Censorship Attacks} 
\label{sec:slashing-censorship-resilience}

Stake of a validator becomes slashable in an honest \pos node $p$'s view if the validator proposed or voted for a \pos block $B$ that is censoring in $p$'s view (\eg, block B in Figure~\ref{fig:censor}).

\subsubsection{Stalling Resilience} 
\label{sec:stalling-resilience}

(Figure~\ref{fig:stalling}, Algorithm~\ref{alg.stalling.slashing}) 
A node detects that the \pos protocol has stalled if no new checkpoint committing new \pos blocks appears on its \bpow chain within $2k_c$ blocks of the last checkpoint.
In this case, it sends a \emph{stalling evidence} to \chain.
Stalling evidence is labelled with the smallest height $h$ at which a \pos block has not been finalized yet and contains a checkpoint for the \pos blocks from smaller heights.
Hence, it is valid in the miners' view if the included checkpoint is valid.

Stalling evidence signals to the validators that they should hereafter publish the \pos protocol messages, previously exchanged over the network, on \chain until a new \pos block is finalized.
For instance, in the case of \chainnosp-enhanced Tendermint, a stalling evidence on \chain marks the beginning of a new \emph{round} whose proposals and votes are recorded on \chain.
Thus, upon observing the first stalling evidence that follows the last checkpoint on \chain by $2k_c$ blocks, validators stop participating in their previous rounds and enter a new, special Tendermint round for height $h$, whose messages are recorded \emph{on-\chain}.
Each of them then pretends like the next round leader and sends a proposal message to \chain for the new round.

In the rest of this section, we focus on Tendermint \cite{tendermint} as the \chainnosp-enhanced \pos protocol for the purpose of illustration.
A summary of Tendermint is given in Appendix~\ref{sec:tendermint-summary}.
The following paragraphs explain how a Tendermint round is recorded on \chain in the perspective of a validator $p$ (\cf Algorithm~\ref{alg.stalling.slashing}).
A detailed description of this can be found in Appendix~\ref{sec:details-of-stalling-resilience}

Let $b$ denote the \bpow block that contains the first stalling evidence observed by $p$ (Figure~\ref{fig:stalling}).
Let $b_1$ and $b_2$ denote the first blocks in $p$'s \bpow chain that extend $b$ by $k_c$ and $2k_c$ blocks respectively.
If a new checkpoint for a \pos block finalized at height $h$ appears between $b$ and $b_1$, $p$ stops participating in the round on-\chain and moves to the next height, resuming its communication with the other validators through the network.
Otherwise, if there are $\geq 2f+1$ \emph{non-censoring} proposal messages signed by unique validators between $b$ and $b_1$, $p$ and every other honest validator selects the message with the largest $\textsf{validRound}$ as the \emph{unique} proposal of the round.

Once $p$ decides on a proposal $B$ and observes $b_1$ in its \bpow chain, it signs and sends prevote and precommit messages for $B$ to \chain. 
Upon seeing $b_2$ in its \bpow chain, $p$ finalizes $B$ if there are more than $2f+1$ prevotes and precommits for $B$, signed by unique validators, between $b_1$ and $b_2$.
In this case, $b_2$ is designated as the \bpow block that has checkpointed block $B$ for height $h$. 
After finalizing $B$, $p$ moves to the next height, resuming its communication with other validators through the network.

\subsubsection{Slashing for Stalling Attacks} 
\label{sec:slashing-stalling-resilience}
Consider the validator $p$ and the on-\chain Tendermint round described above and suppose there is no new checkpoint for a \pos block finalized at height $h$ between $b$ and $b_1$ (Figure~\ref{fig:stalling}).
Then, if there are less than $2f+1$ uniquely signed non-censoring proposals between $b$ and $b_1$, stake of each validator with a censoring or missing proposal becomes slashable in $p$'s view. 
Similarly, if there are less than $2f+1$ uniquely signed prevotes or precommits between $b_1$ and $b_2$ for the proposal $B$ selected by $p$, stake of each validator with a missing prevote or precommit for $B$ between $b_1$ and $b_2$ becomes slashable in $p$'s view.

To enforce the slashing of the validator's stake in the case of censorship or stalling, \pos nodes can submit a reward transaction to the \pos chain, upon which they receive part of the slashed funds.

No validator is slashed by the slashing rules for censorship or stalling if there is a safety violation on the \pos chains, in which case slashing for safety (\cf Section~\ref{sec:slashing-in-the-case-of-safety-violations}) takes precedence.

\section{Scalability of the Protocol}
\label{sec:scalability-of-the-protocol}

\chain protocol above can be used by different PoS protocols simultaneously, which raises the question of how much data \bpow miners can check for availability at any given time.
To address this, we first review the three physical limits that determine the amount and speed of on-chain data generation by the PoS blockchains:
\begin{enumerate}
    \item hot storage capacity, which caps the amount of data generated before cold storage or chain snapshot have to kick in;
    \item execution throughput, which limits the data generation speed to how fast transactions and blocks can be created, validated, and executed;
    \item communication bandwidth, which limits the data generation speed to how fast transactions and blocks can be propagated throughout the P2P network.
\end{enumerate}

Since \chain does not permanently store\footnote{\chain may store the data committed in the recent \chain blocks for the synchronization between \chain nodes.} any PoS chain data, it does not have to match aggregated storage capacity of the PoS chains to provide data protection. 
Thus, there is no storage issue for \chain to scale, namely, to support many PoS protocols.

On the other hand, \chain's data processing speed, \ie the speed with which miners validate data availability, must match the total data generation speed across the PoS chains. 
Currently, the data generation speed of individual PoS chains is mostly limited by the execution rather than communication bandwidth.
As \chain only downloads the PoS data without executing it, it should also able to accommodate many PoS chains from the speed perspective.

In the unlikely case that a certain PoS protocol is only limited by the communication bandwidth and thus generates large blocks frequently, \chain could potentially apply sampling-based probabilistic data availability checks \cite{albassam2018fraud,cmt} to significantly reduce the amount of data it needs to download and process per block, which is a promising future research direction.

\section{Reference Design with Cosmos SDK}
\label{sec:cosmos}
Cosmos is a well-known open-source blockchain ecosystem that enables customizable blockchains \cite{cosmos_modules}.
It also enables inter-blockchain communications by using Cosmos Hub (ATOM) as the trust anchor. 
Therefore, Cosmos provides both the tools through its SDK and the ecosystem required to demonstrate the implementation of a \chainnosp-enhanced PoS blockchain protocol.
To this end, we first briefly review how essential Cosmos modules work together to protect the security of the Cosmos zones, \ie, its constituent PoS blockchains, and then show how \chain can enhance the security of these zones via a straightforward module extension.

\subsection{Cosmos Overview}
Cosmos encapsulates the core consensus protocol and networking in its Tendermint consensus engine, which uses Tendermint BFT underneath. 
Several interoperable modules have been built to work with this consensus engine together as a complete blockchain system. 
Each module serves a different functionality such as authorization, token transfer, staking, slashing, etc., and can be configured to meet the requirements of the application. 
Among these modules, the following are directly related to security:
\begin{itemize}
    \item \textsf{evidence} module, which enables the submission by any protocol participant, and handling of the evidences for adversarial behaviors such as double-signing and inactivity;
    \item \textsf{slashing} module, which, based on valid evidence, penalizes the adversarial validators by means such as stake slashing and excluding it from the BFT committee;
    \item \textsf{crisis} module, which suspends the blockchain in case a pre-defined catastrophic incident appears, \eg, when the sum of stakes over all the accounts exceed the total stake of the system;
    \item \textsf{gov} module, which enables on-chain blockchain governance in making decisions such as software updates and spending community funds.
\end{itemize}

We note that these modules currently are not able to handle the aforementioned attacks such as long range attacks and transaction censorship.
Moreover, in case catastrophic incidents such as chain forking appear, system cannot recover from halt by itself. 
The incident can only be resolved via human intervention, which can be either proactive or reactive: 
Under proactive human intervention, stakeholders of the system regularly agree on and publish checkpoints on the blockchain to prevent long range attacks.
Under reactive human intervention, when a forking incident happens, stakeholders get together to decide on a fork as the canonical chain. 
Since both types of interventions require stakeholder meetings, they are part of ``social consensus''.

\subsection{Enhancing Cosmos Security with \chain}

\begin{figure}[t]
    \centering
    \includegraphics[width=\linewidth]{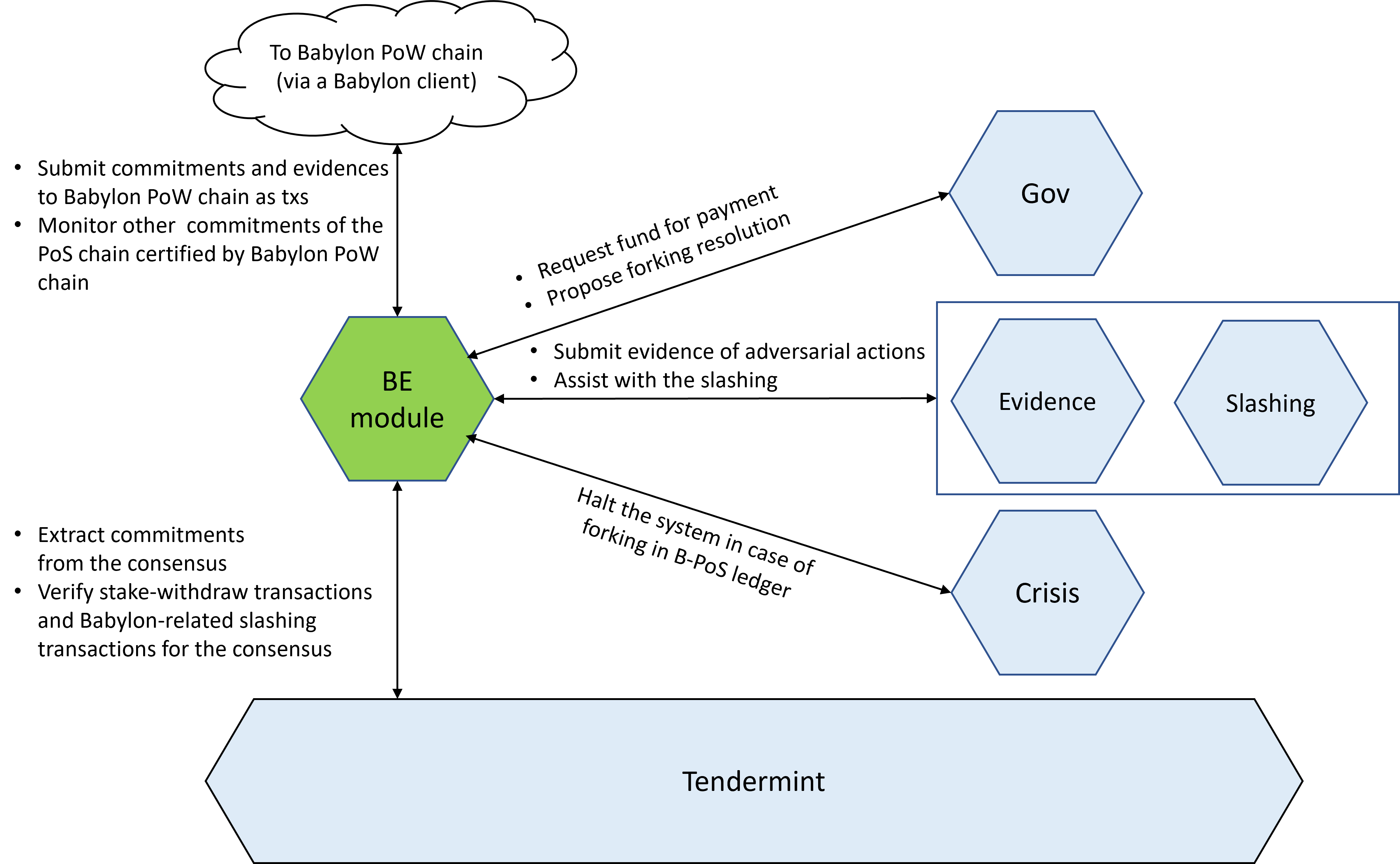}
    \caption{Enhancing Cosmos zones via a new \textsf{BE} (Babylon-enhancement) module.}
    \vspace{-0.2in}
    \label{fig:cosmos_upgrade}
\end{figure}

To enhance the security of Cosmos PoS chains with \chainnosp, we add a new module called \textsf{BE} (\chainnosp-enhancement) to the Cosmos SDK. 
This module executes the protocol described in Section~\ref{sec:protocol} and only requires straightforward interactions with existing Cosmos modules. 
Some of the key interactions are as follows (Figure~\ref{fig:cosmos_upgrade}):

\textsf{BE} implements the \chainnosp-specific add-ons such as the fork-choice rule specified in Section~\ref{sec:construction-of-the-bpos-ledger} to output the canonical PoS chain.
It monitors the PoS chain and creates the messages specified in Section~\ref{sec:protocol} such as checkpoints, fraud proofs, censorship complaints and stalling evidences. 
It communicates with the \textsf{gov} module to obtain approval for the expenditure of community funds to pay for the \bpow transaction fees.
It submits the messages mentioned above, through a customized client (Figure~\ref{fig:pos-client}) to the \bpow chain and uses \bpow transactions to pay the miners.
It also monitors the existing messages created for the same PoS chain and timestamped on \chainnosp.
In case any adversarial action is detected through the interpretation of the messages on \chain, it submits the evidences to the \textsf{evidence} module and then works with the \textsf{slashing} module to slash the adversarial validators on the PoS chain.
In the case of forking on the PoS chain, it interacts with the \textsf{crisis} module to temporarily suspend the system, and proposes resolution via the \textsf{gov} module to recover the system, where the resolution is derived using the fork-choice rule specified in Section~\ref{sec:construction-of-the-bpos-ledger}.
When withdrawal requests and \chainnosp-related PoS transactions are submitted to the Tendermint consensus engine, it helps the engine verify such transactions.

All the above interactions can be supported by existing Cosmos modules via API and data format configurations. 
These configurations are explained below:
\begin{itemize}
    \item Tendermint consensus engine: redirect the validation of stake withdrawal transactions and \bpow-related slashing transactions to the BE module.
    \item \textsf{evidence} module: add evidence types such as fraud proofs, censorship complaint and stalling evidence, corresponding to \bpow-related violations;
    \item \textsf{slashing} module: define the appropriate slashing rules as described in Sections~\ref{sec:slashing-in-the-case-of-safety-violations},~\ref{sec:slashing-censorship-resilience} and~\ref{sec:slashing-stalling-resilience};
    \item \textsf{crisis} module: add handling of safety violations reported by the \textsf{BE} module;
    \item \textsf{gov} module: add two proposal types (i) to use community funds to pay for \bpow transaction fees and (ii) to execute fork choice decision made by the \textsf{BE} module.
\end{itemize}

\section*{Acknowledgements}
We thank Joachim Neu, Lei Yang and Dionysis Zindros for several insightful discussions on this project.

\bibliographystyle{plain}
\bibliography{references}

\appendix

\section{Proofs for Section~\ref{sec:pos-security}}
\label{sec:appendix-proofs}

To formalize slashable safety and its absence thereof, we define \emph{\ssr} for the \pos protocols and state the impossibility theorem for slashable safety in the absence of additional trust assumptions:
\begin{definition}
\Ssr of a protocol is the minimum number $f$ of validators that become slashable in the view of all honest \pos nodes per Definition~\ref{def:slashable} in the event of a safety violation.
Such a protocol provides $f$-slashable-safety.
\end{definition}
\begin{theorem}
\label{thm:pos-non-slashable}
Assuming a common knowledge of the initial set of active validators, without additional trust assumptions, no \pos protocol provides both $\fS$-slashable-safety and $\fL$-$\Tconfirm$-liveness for any $\fS,\fL>0$ and $\Tconfirm<\infty$.
\end{theorem}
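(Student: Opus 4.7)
The plan is to derive a contradiction by exhibiting a long-range-style attack in which any posited slashable-safety guarantee collides with the liveness guarantee used to withdraw stake. Suppose for contradiction that some \pos protocol $\PI$ provides both $\fS$-slashable-safety and $\fL$-$\Tconfirm$-liveness for some $\fS,\fL>0$ and $\Tconfirm<\infty$. I would construct two executions of $\PI$ whose transcripts are indistinguishable to a late-joining \pos node $\client$, and then argue that slashable safety must fail in at least one of them.

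In execution $E_1$, the initial (commonly known) set of active validators is entirely honest. Each of them promptly issues a withdrawal request; by $\fL$-$\Tconfirm$-liveness every such request is finalized within $\Tconfirm$ time, after which the validators become passive and, once the withdrawal delay elapses, withdraw their stake. At some much later time, a new node $\client$ joins the network and is given the final \pos chain. In execution $E_2$, the same initial validator set is adversarial from the outset; the adversary simulates the observable behaviour of $E_1$ up to some point, then uses the same cryptographic keys to sign a divergent sequence of finalized blocks, producing an alternative history-revision chain that is consistent with the genesis state. When $\client$ joins, the adversary delivers this alternative chain to $\client$ alongside the original one.

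The crucial step is arguing that $\client$ cannot distinguish $E_1$ from $E_2$. In both executions, the validators who signed the conflicting chain possess the required keys, and since the initial validator set is common knowledge, both chains are syntactically and cryptographically well-formed from genesis onward. Hence, whatever decision $\client$ makes in $E_1$ it must also make in $E_2$, and vice versa. In at least one of the two executions $\client$ must therefore accept two conflicting \pos chains, which is a safety violation in its view. By $\fS$-slashable-safety, at least $\fS>0$ validators should become slashable per Definition~\ref{def:slashable}. However, in $E_1$ every initial validator has already withdrawn its stake long before $\client$ observes the violation, so by condition (2) of Definition~\ref{def:slashable} none of them is slashable in $\client$'s view, contradicting the assumption.

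The main obstacle will be making the indistinguishability argument rigorous for a potentially stateful, adaptive protocol: one has to ensure that in $E_2$ the adversary can genuinely produce a transcript identically distributed, from $\client$'s standpoint, to a transcript of $E_1$. This is essentially the classical long-range observation underlying \cite[Theorem 2]{snowwhite}, namely that once the initial validators' keys are available to the adversary (via corruption of the passive validators, which is permitted by the model) there is no protocol-level evidence $\client$ can use to prefer one chain over the other without an \emph{external} trust assumption. Once this indistinguishability is established, the liveness hypothesis is only needed to guarantee that all initial stakes are in fact withdrawn before $\client$ joins, which follows immediately from $\fL$-$\Tconfirm$-liveness applied to the withdrawal requests.
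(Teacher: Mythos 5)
Your overall strategy is the right one and matches the spirit of the paper's proof: use $\fL$-$\Tconfirm$-liveness to force the initial validators' withdrawal requests through, let the adversary reuse their keys to build a conflicting history, and then argue that Definition~\ref{def:slashable}(2) blocks slashing. However, there is a genuine gap in the final step. Your contradiction is ``no one is slashable because every initial validator has withdrawn,'' but this only rules out slashing the \emph{initial} validators. It does not rule out the protocol satisfying $\fS$-slashable-safety by identifying and slashing \emph{other} validators --- in particular, whichever successor validator set is active on the branch that $\client$ rejects. Your construction avoids having any successors only by letting every validator withdraw with no replacement, which leaves the active set empty (contradicting the model's fixed active-set size $n$) and leaves nobody to finalize the post-withdrawal blocks, including the withdrawal grants themselves. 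The paper closes exactly this hole with a three-set structure: the withdrawal transactions of the initial set $P$ simultaneously activate a successor set ($Q'$ on the real chain, $Q''$ on the simulated chain), so that $P$ has provably withdrawn on \emph{both} conflicting ledgers; the protocol is then forced to declare someone from $Q''$ slashable, and a second, mirrored world (with the honest/adversarial roles of $Q'$ and $Q''$ swapped) shows via indistinguishability that the protocol would thereby slash honest validators with probability at least $1/2$. That symmetry argument is the essential missing ingredient, not an optional refinement.

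A second, related gap is in your indistinguishability claim. As written, $E_1$ (all honest, one chain) and $E_2$ (adversarial, two chains delivered to $\client$) present $\client$ with different inputs, so ``whatever decision $\client$ makes in $E_1$ it must also make in $E_2$'' does not follow. The indistinguishability you actually need is between two worlds in which $\client$ receives the \emph{same pair} of conflicting transcripts but the identity of the honestly-produced branch is swapped; this is also what determines whether a safety violation (a conflict between $\client$'s output and that of another honest node) actually occurs, and whether the withdrawals count as having happened ``in $\client$'s view'' on the branch $\client$ adopts. For the latter you must additionally arrange that the withdrawal requests and grants appear on both branches --- otherwise, if $\client$ adopts the branch without the withdrawals, condition (2) of Definition~\ref{def:slashable} may well be satisfied and the double-signers become slashable after all, defeating the attack. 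These are fixable issues, but fixing them leads essentially to the paper's World~1/World~2 construction.
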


\begin{proof}
For the sake of contradiction, suppose there exists a \pos protocol $\PI$ that provides $\fL$-$\Tconfirm$-liveness and $\fS$-slashable-safety for some $\fL, \fS>0$ and $\Tconfirm < \infty$ without any additional trust assumptions.

Let $n$ be the number of active validators at any given time.
Let $P$, $Q'$ and $Q''$ denote disjoint sets of validators such that $P:=\{\validator_i,i=1,..,n\}$, $Q':=\{\validator'_i,i=1,..,n\}$ and $Q'':=\{\validator''_i,i=1,..,n\}$. 

Next, we consider the following two worlds, where the adversarial behavior is designated by $(\Adv,\Env)$:

\textbf{World 1:} 
$(\Adv,\Env)$ provides $P$ as the initial set of active validators.
Validators in $Q'$ are honest.
Validators in $P$ and $Q''$ are adversarial.

At time $t=0$, $\Env$ inputs transactions $\tx'_i$, $i=1,..,n$, to the validators in $P$, where $\tx'_i$ causes $\validator_i \in P$ to become \passive and $\validator'_i \in Q'$ to become active.
Validators in $P$ emulate a set of honest validators with equal size, except that they record every piece of information in their transcripts.
Since $\fL>0$ and $\Tconfirm<\infty$, there exists a constant time $T$ such that upon receiving transcripts from the set of active validators at time $T$, clients output a ledger $\LOG$ for which $\tx'_i \in \LOG$, $i=1,..,n$.
Thus, the set of active validators at time $T$ is $Q'$ in the view of any client.
As \passive validators withdraw their stake within a constant time $T'$, by time $T+T'$, all validators in $P$ have withdrawn their stake.

In parallel to the real execution above, $(\Adv, \Env)$ creates a simulated execution in its head where a different set of transactions, $\tx''_i$, $i=1,..,n$, is input to the validators in $P$ at time $t=0$.
Here, $\tx''_i$ causes $\validator_i \in P$ to become \passive and $\validator''_i \in Q''$ to become active.
Then, upon receiving the transcripts of the simulated execution at time $T$, clients would output a ledger $\LOG'$ for which $\tx_i'' \in \LOG'$, $i=1,..,n$.
Then, the set of active validators at time $T$ would be $Q''$ in the view of any client.
As \passive validators can withdraw their stake within a constant time $T'$, all validators in $P$ withdraw their stake in the simulated execution by time $T+T'$.

Finally, $(\Adv,\Env)$ spawns a \pos client $\client$ at time $T+T'$, which receives transcripts from both the simulated and real executions.
Since $\LOG$ and $\LOG'$ conflict with each other and $\fS>0$, there is a safety violation, and $\client$ identifies a set of irrefutably adversarial validators by invoking the forensic protocol, a non-empty subset of which is slashable.
As the validators in $P$ have withdrawn their stake and those in $Q'$ are honest and did not violate the protocol, this set includes at least one slashable validator from $Q''$.

\textbf{World 2:} 
World 2 is the same as World 1, except that (i) validators in $Q'$ are adversarial and those in $Q''$ are honest, and (ii) the transactions $\tx'_i$ and $\tx''_i$, $i=1,..,n$, are swapped in the description, \ie $\tx'_i$ is replaced by $\tx''_i$ and vice versa.

\begin{center}
***
\end{center}

Finally, as World 1 and 2 are indistinguishable, $\client$ again identifies a validator from $Q''$ as slashable in World 2 with probability at least $1/2$.
However, the validators in $Q''$ are honest in World 2, and could not have been identified as irrefutably adversarial, \ie contradiction.
\end{proof}

Following theorem is used for the proof of Theorem~\ref{thm:accountable-liveness}.

\begin{theorem}
\label{thm:sync-tradeoff}
For any SMR protocol that is run by $n$ validators and satisfies $\fS$-safety and $\fL$-$\Tconfirm$-liveness with $\fS, \fL > 0$ (assuming Byzantine faults) and $\Tconfirm<\infty$, it must be the case that $\fS < n - \fL$.
\end{theorem}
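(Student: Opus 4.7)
The plan is to derive a contradiction via a split-world indistinguishability argument, in the same spirit as the proof of Theorem~\ref{thm:pos-non-slashable}. Assume for contradiction that $\fS + \fL \geq n$, and partition the $n$ validators into two disjoint sets $V_1$ and $V_2$ with $|V_1| = \fL$ and $|V_2| = n - \fL$; the assumption then gives $|V_2| \leq \fS$.

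First I would construct a real execution $E_1$ in which the adversary corrupts $V_1$ and keeps them silent, while $V_2$ is honest and receives input $\tx_1$ at time $0$. Since the number of corruptions is $|V_1| = \fL$, $\fL$-$\Tconfirm$-liveness forces the honest validators in $V_2$ to finalize a ledger containing $\tx_1$ at a specific position by time $\Tconfirm$. In parallel, the adversary would internally simulate a mirror execution $E_2$ in which $V_1$ plays honest with a distinct input $\tx_2 \neq \tx_1$ and $V_2$ plays silent; invoking $\fL$-liveness on $E_2$ yields simulated $V_1$-transcripts that certify a conflicting ledger containing $\tx_2$ at the same slot.

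Next I would introduce a late-coming honest \pos node at some time $T' > \Tconfirm$ that collects transcripts from every validator. The adversary delivers the $E_2$-simulated transcripts through $V_1$ (which it controls and whose keys it holds), while the honest $V_2$ validators supply their real $E_1$-transcripts. Each half looks individually like a valid execution within the $\fL$-liveness budget, so the observer cannot decide which half is real. Applying $\fS$-safety to $E_1$, where only $|V_1| = \fL \leq \fS$ validators are corrupted, the observer's ledger must coincide with the honest $V_2$-validators'; yet the two certified finalizations hold different transactions at the same slot, producing the desired contradiction.

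The main obstacle will be choosing the partition sizes so that $\fL$-liveness actually applies in the simulated $E_2$: the partition above works cleanly when $|V_2| = n - \fL \leq \fL$, but when $n - \fL > \fL$ one has to instead adopt the asymmetric partition $|V_1| = n - \fS \leq \fL$ and $|V_2| = \fS$, carrying the simulation through the smaller side so that liveness still governs both worlds. A secondary subtlety is making the indistinguishability precise under synchrony: the observer must be a late-comer that did not witness the real-time message exchanges, so that the adversary's simulated $V_1$-transcripts cannot be excluded as inconsistent with what a genuinely honest $V_1$ would have produced had $V_2$ truly been silent.
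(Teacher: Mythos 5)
Your overall strategy (partition plus indistinguishability) is in the right family, but the two-set partition does not prove the theorem, and the ``asymmetric partition'' you propose as a fix does not close the gap. To get a contradiction you need \emph{four} conditions to hold simultaneously: liveness must apply in the world where $V_1$ is adversarial (so that $V_2$ finalizes $\tx_1$) \emph{and} in the mirror world where $V_2$ is adversarial (so that $V_1$ finalizes $\tx_2$), and safety must bind the observer in \emph{both} worlds --- because the observer's output distribution is identical across the two indistinguishable worlds, the disagreement materializes in whichever world the observer's forced answer conflicts with the local honest validators, and that world must itself have at most $\fS$ adversarial validators for the disagreement to contradict $\fS$-safety. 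This forces $|V_1|,|V_2|\le\min(\fS,\fL)$, hence $n\le 2\min(\fS,\fL)$, which under the contradiction hypothesis $\fS=n-\fL$ holds only in the degenerate case $\fL=\fS=n/2$. Your asymmetric variant $|V_1|=n-\fS$, $|V_2|=\fS$ repairs the liveness side but breaks the safety side: in the world where $V_1$ is adversarial there are $n-\fS$ corruptions, and $n-\fS\le\fS$ requires $\fS\ge n/2$, which is not given.

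The missing idea is a \emph{three}-way partition with a split-brain attack by the middle set. The paper takes $P$, $Q$, $R$ of sizes $\fL$, $\fL$, $n-2\fL$ (when $\fL<n/2$; sizes $n-\fL$, $n-\fL$, $2\fL-n$ otherwise) and two clients. In the two reference worlds the honest set is $P\cup R$ (resp.\ $Q\cup R$) of size $n-\fL$, so liveness with only $\fL$ corruptions pins down the ledgers $[\tx_1]$ and $[\tx_2]$. In the combined world only $P$ is honest while $Q\cup R$ is adversarial, with $R$ running two brains --- one replaying the honest World-1 behavior towards $\client_1$ and one replaying World 2 towards $\client_2$. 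The adversarial coalition then has size exactly $\fL+(n-2\fL)=n-\fL=\fS$, and the two clients output conflicting ledgers, contradicting $\fS$-safety. It is precisely the donation of $R$'s honest-looking behavior to both sides that lets liveness be invoked with $\fL$ faults in the reference worlds while the safety violation occurs with only $\fS$ faults in the combined world; no two-set partition can achieve both counts at once. (Your secondary concern about the late-comer is also resolved differently in the paper: the two clients are subject to selective non-response/omission, and the safety violation is between the two clients rather than detected by a single late-arriving observer.)
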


\begin{proof}

For the sake of contradiction, assume that there exists an SMR protocol $\PI$ that provides $\fL$-$\Tconfirm$-liveness for some $\fL>0$, $\Tconfirm<\infty$ and $\fS$-safety for $\fS = n-\fL$.
Then, the protocol should be safe when there are $n-\fL$ adversarial validators.
Let $P$, $Q$ and $R$ denote disjoint sets consisting of $\fL$, $\fL$ and $n-2\fL>0$ validators respectively, where we assume $\fL<n/2$.
Next, consider the following worlds with two clients $\client_1$ and $\client_2$ prone to omission faults, where the adversarial behavior is designated by $(\Adv,\Env)$:

\textbf{World 1:} 
$\Env$ inputs $\tx_1$ to all validators.
Those in $P$ and $R$ are honest and the validators in $Q$ are adversarial.
There is only one client $\client_1$.
Validators in $Q$ do not communicate with those in $P$ and $R$; they also do not respond to $\client_1$. 
Since $P \cup R$ has size $n-\fL$ and consists of honest validators, via $\fL$-liveness, upon receiving transcripts from the validators in $P$ and $R$, $\client_1$ outputs the ledger $[\tx_1]$ by time $\Tconfirm$.

\textbf{World 2:} 
$\Env$ inputs $\tx_2$ to all validators.
Those in $Q$ and $R$ are honest and the validators in $P$ are adversarial.
There is only one client $\client_2$.
Validators in $P$ do not communicate with those in $Q$ and $R$; they also do not respond to $\client_2$. 
Since $Q \cup R$ has size $n-\fL$ and consists of honest validators, via $\fL$-liveness, upon receiving transcripts from the validators in $Q$ and $R$, $\client_2$ outputs the ledger $[\tx_2]$ by time $\Tconfirm$.

\textbf{World 3:} 
$\Env$ inputs $\tx_1$ to the validators in $P$, $\tx_2$ to the validators in $Q$, and both transactions to the validators in $R$.
Validators in $P$ are honest, those in $Q$ and $R$ are adversarial. 
There are two clients this time, $\client_1$ and $\client_2$.
Validators in $Q$ do not send any message to any of the validators in $P$; they also do not respond to $\client_1$.
$\Env$ also omits any message sent from the validators in $P$ to $\client_2$.

Validators in $R$ perform a split-brain attack where one brain interacts with $P$ as if the input were $\tx_1$ and it is not receiving any message from $Q$ (real execution).
Simultaneously, validators in $Q$ and the other brain of $R$ start with input $\tx_2$ and communicate with each other exactly as in world 2, creating a simulated execution. 
The first brain of $R$ only responds to $\client_1$ and the second brain of $R$ only responds to $\client_2$.

Since worlds 1 and 3 are indistinguishable for $\client_1$ and the honest validators in $P$, upon receiving transcripts from the validators in $P$ and the first brain of $R$, $\client_1$ outputs $[\tx_1]$ by time $\Tconfirm$.
Similarly, since worlds 2 and 3 are indistinguishable for $\client_2$, upon receiving transcripts from the validators in $Q$ and the second brain of $R$, $\client_2$ outputs $[\tx_2]$ by time $\Tconfirm$.

Finally, there is a safety violation in world 3 since $\client_1$ and $\client_2$ output conflicting ledgers.
However, there are only $\fS=n-\fL$ adversarial validators in $Q$ and $R$, which is a contradiction.

Proof for $\fL \geq n/2$ proceeds via a similar argument, where sets $P$, $Q$ and $R$ denote disjoint sets of sizes $n-\fL$, $n-\fL$ and $2\fL-n>0$ respectively.
\end{proof}

Proof of Theorem~\ref{thm:accountable-liveness} is given below:

\begin{proof}[Proof of Theorem~\ref{thm:accountable-liveness}]
For the sake of contradiction, suppose there exists a \pos protocol $\PI$ with a static set of validators that provides $\fA$-$\Tconfirm$-accountable-liveness and $\fS$-safety for some $\fA,\fS>0$ and $\Tconfirm<\infty$ without any additional trust assumptions.
Then, there exists a forensic protocol which takes transcripts sent by the validators as input, and in the event of a liveness violation, outputs a non-empty set of validators which have irrefutably violated the protocol rules.

Let $\fL$ denote the liveness resilience of $\PI$.
By Theorem~\ref{thm:sync-tradeoff}, $\fL < n - \fS$, \ie, $\fL \leq n - 2$ as $\fS > 0$.
By definition of \alr, $\fL \geq \fA > 0$.
Let $m \geq 1$ denote the maximum integer less than $n - \fL \geq 2$ that divides $n$.
Let $P_i$, $i=1,..,n/m$ (\ie $i \in [n/m]$) denote sets of size $m$ that partition the $n$ validators into $n/m$ disjoint, equally sized groups.
We next consider the following worlds indexed by $i \in [n/m]$ where $\Env$ inputs a transaction $\tx$ to all validators at time $t=0$ and the adversarial behavior is designated by $(\Adv,\Env)$:

\textbf{World $i$:}
Validators in $P_i$ are honest.
Validators in each set $P_j$, $j \neq i, j \in [n/m]$, are adversarial and simulate the execution of $m$ honest validators in their heads without any communication with the validators in the other sets.
Validators in each $P_j$, $j \in [n/m]$ generate a set of transcripts such that upon receiving transcripts from the set of validators in $P_j$ at time $\Tconfirm$, a client outputs a (potentially empty) ledger $\LOG_j$, $j \in [n/m]$.
As $|P_j| < n - \fL$, validators in $P_i$ do not hear from the validators in $P_{j}$, $j \neq i$, and the validators in $P_j$, $j \neq i$ simulate the execution of the honest validators in world $j$ respectively, $\tx \notin \LOG_j$ for any $j \in [n/m]$.

Finally, $(\Adv,\Env)$ spawns a client at time $\Tconfirm$, which receives transcripts from both the real and the multiple simulated executions.
Since $\tx \notin \LOG_j$ for any $j \in [n/m]$, there is a liveness violation in the client's view.
As $\fA>0$, by invoking the forensic protocol with the transcripts received, client identifies a subset $S_i$ of validators as irrefutably adversarial.

\begin{center}
***
\end{center}

Finally, by definition of $S_i$, it should be the case that \linebreak
$S_i \subseteq \bigcup_{j \in [n/m], j \neq i} P_j$.
However, as worlds $i$, $i \in [n/m]$ are indistinguishable for the client, there exists a world $i^*$, $i^* \in [n/m]$, such that a node from $P_{i^*}$ is identified as adversarial in world $i^*$ with probability at least $m/n \geq 1/n$, which is non-negligible.
This is a contradiction.
\end{proof}

Note that when the number of validators $n$ is large and $\fL = n-2$, probability that the forensic protocol for accountable liveness makes a mistake and identifies an honest validator as adversarial can be small.
However, assuming that $n$ is polynomial in the security parameter of the \pos protocol, this probability will not be negligible in the security parameter.

\section{Merge Mining and Client Applications}
\label{sec:merge-mining}

\begin{figure*}[ht]
    \centering
    \includegraphics[width=0.7\linewidth]{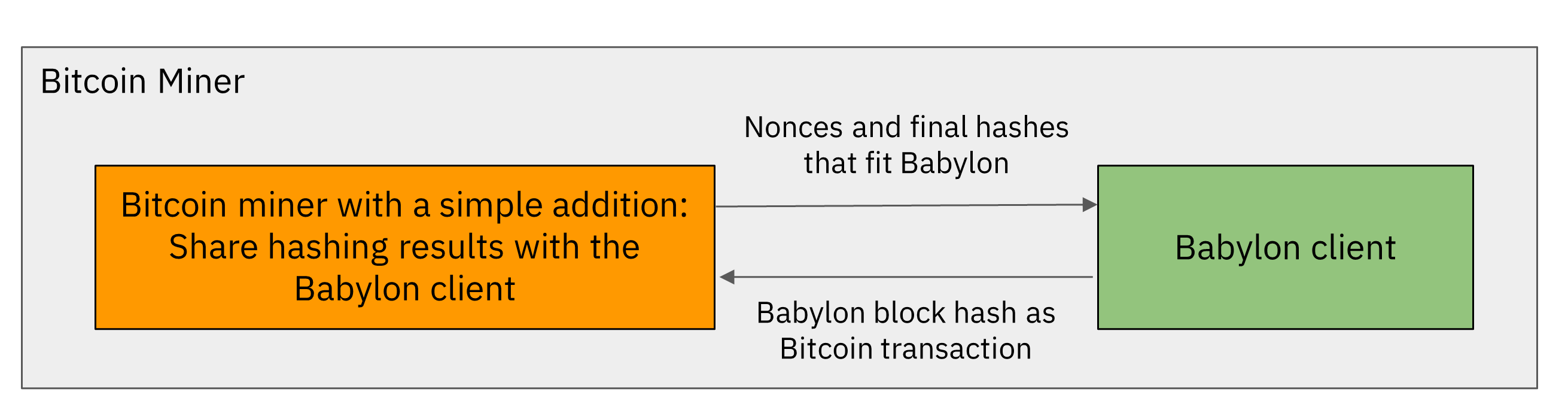}
    \caption{Interaction of the \chain client run by \bpow miners with the mining software in the context of merge-mining. \chain client uses the same hashing results generated by the Bitcoin miners as a Bitcoin client, but the criterion of mining a new \chain block based on those results is different from that of mining a Bitcoin block.}
    \label{fig:babylon-client}
\end{figure*}

\begin{figure*}[h]
    \centering
    \includegraphics[width=0.7\linewidth]{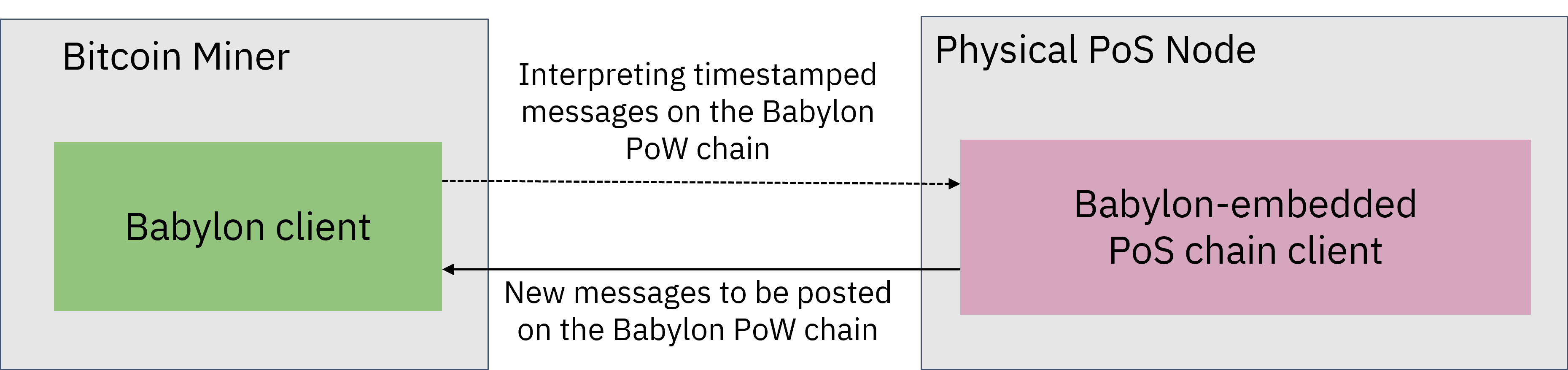}
    \caption{Interaction between the \chainnosp-embedded PoS chain client and the \chain client in the context of timestamping \pos block commitments.}
    \label{fig:pos-client}
\end{figure*}

Miners merge-mine the \bpow chain following the longest chain rule.
To merge-mine \bpow blocks, miners calculate hashes of blocks containing both Bitcoin and \bpow transactions.
Whenever a miner finds a block with its hash falling into the \emph{\chain range}, it shares the \bpow transactions in this block with its \emph{\chain client}, which extracts a \bpow block from the received contents (\cf Figure~\ref{fig:babylon-client}).
Hash of this block is then sent over Bitcoin's peer-to-peer network to be included as a Bitcoin transaction \cite{merge-mining}.
\bpow blocks have the same structure as Bitcoin blocks.
Size of the \chain range determines the chain difficulty, in turn, the growth rate $\lambda$ for the \bpow chain.

\chain client is run by the Bitcoin miners in parallel with the Bitcoin client. 
Besides exchanging nonces and hashes with the mining software for merge-mining, \chain client also records the commitments submitted by the PoS chains and checks for data availability. 
Thus, miners follow the same longest chain mining protocol as regular Bitcoin clients, except for the fact that they also check for the availability of the PoS blocks before accepting their commitments.

Similar to miners, each PoS node using \chain runs a special \emph{\chainnosp-embedded} PoS chain client (Figure~\ref{fig:pos-client}). This client is built on top of an existing PoS client, but augmented with \chainnosp-specific add-ons to allow the PoS node to post commitments and checkpoints to \chain as well as interpret the timestamps of these messages.

\section{Stalling Resilience}
\label{sec:stalling-resilience-using-babylon}

\subsection{Tendermint Summary}
\label{sec:tendermint-summary}

Tendermint consensus proceeds in heights and rounds.
Each height represents a new consensus instance and the validators cannot move on to the next height before a unique block is finalized for the previous one.
Heights consist of rounds, each with a unique leader that proposes a \pos block.
Goal of each round is to finalize a block for its height.

Rounds are divided into three steps: propose, prevote and precommit.
An honest round leader proposes a block for its round at the beginning of the propose step.
Then, during the respective steps, validators send prevote and precommit messages for the proposed block or a \emph{nil} block, depending on the proposal and their internal states.
Each honest validator maintains four variables which affect its decision whether to prevote for a proposal: $\mathsf{lockedValue}$, $\mathsf{lockedRound}$, $\mathsf{validValue}$ and $\mathsf{validRound}$.
$\mathsf{lockedValue}$ denotes the most recent non-nil block for which the validator sent a precommit message.
$\mathsf{validValue}$ denotes the most recent non-nil block for which the validator has observed $2f+1$ prevotes.
Recency of a block is determined by the round it was proposed for by the leader of that round.
Thus, $\mathsf{lockedRound}$ and $\mathsf{validRound}$ refer to the rounds for which $\mathsf{lockedValue}$ and $\mathsf{validValue}$ were proposed respectively.
At the beginning of each height, $\mathsf{lockedValue}$, $\mathsf{lockedRound}$, $\mathsf{validValue}$ and $\mathsf{validRound}$ are reset to $\bot$, $-1$, $\bot$ and $-1$ respectively.

\subsubsection{Propose}
\label{sec:propose}
If the leader of a round $r$, height $h$, is honest, it broadcasts the following proposal message at the beginning of the round if its $\mathsf{validRound} \geq 0$: $\langle \mathsf{PROPOSAL}, h, r, v=\mathsf{validValue}, vr=\mathsf{validRound} \rangle$.
Otherwise, it proposes a new valid \pos block $B$: \linebreak
$\langle \mathsf{PROPOSAL}, h, r, v=B, vr=-1 \rangle$.
Similarly, upon receiving a proposal message $\langle \mathsf{PROPOSAL}, h, r, v, vr \rangle$ (from the round leader) during the propose step of round $r$ and height $h$, an honest validator broadcasts the following prevote message $\langle \mathsf{PREVOTE}, h, r, id(v) \rangle$ for the proposal if either (i) $v$ is the same block as its $\mathsf{lockedValue}$, or (ii) $vr$ is larger than its $\mathsf{lockedRound}$.
Otherwise, it sends a prevote for a nil block: $\langle \mathsf{PREVOTE}, h, r, nil \rangle$.
Thus, by proposing its $\mathsf{validValue}$ instead of a new block when $vr \neq -1$, an honest leader ensures that honest validators locked on blocks from previous rounds will be prevoting for its proposal instead of nil blocks.

If the honest validator does not observe any proposal message within a timeout period of its entry to the propose step, it sends a prevote for a nil block.
After sending its prevote, it leaves the propose step and enters the prevote step.

\subsubsection{Prevote}
\label{sec:prevote}
Once in the prevote step, the honest validator waits until it receives $2f+1$ prevotes, for potentially different blocks, upon which it activates a prevote countdown.
If it observes $2f+1$ prevotes for a valid block $B$ proposed for round $r$ and height $h$ during this time, it sends the following precommit message $\langle \mathsf{PRECOMMIT}, h, r, id(B) \rangle$ and enters the precommit step.
It also updates its $\mathsf{lockedValue}$, $\mathsf{lockedRound}$, $\mathsf{validValue}$ and $\mathsf{validRound}$ to $B$, $r$, $B$ and $r$ respectively.
If the honest validator receives $2f+1$ prevotes for nil blocks, it sends a precommit message for a nil block: $\langle \mathsf{PRECOMMIT}, h, r, nil \rangle$.

If the honest validator does not receive $2f+1$ prevotes for a valid block $B$ before the countdown expires, it sends a precommit for a nil block.
After sending its precommit, it leaves the prevote step and enters the precommit step.

\subsubsection{Precommit}
\label{sec:precommit}
Finally, during the precommit step, our honest validator waits until it receives $2f+1$ precommit messages, for potentially different blocks, upon which it activates a precommit countdown. 
If it observes $2f+1$ precommit messages for a valid block $B$ proposed for round $r$ and height $h$, it finalizes $B$ for height $h$ and moves on to the next height $h+1$.
Otherwise, if the countdown expires or there are $2f+1$ precommit messages for nil blocks, validator enters the next round $r+1$ without finalizing any block for height $h$.

Timeout periods for proposal, prevote and precommit steps are adjusted to ensure the liveness of Tendermint under $\Delta$ synchrony when there are at least $2f+1$ honest validators.
On the other hand, the two step voting process along with the locking mechanism guarantees its safety by preventing conflicting blocks from receiving more than $2f+1$ prevotes for the same round and more than $2f+1$ precommits for the same height.

\subsection{Details of Stalling Resilience through \chain}
\label{sec:details-of-stalling-resilience}

This section presents a detailed description of how a Tendermint round is recorded on \chain and interpreted by the nodes when the \pos chain is stalled.
For the rest of this section, we assume that the \bpow chain in the view of a node refers to the $k_c/2$-deep prefix of the longest \bpow chain in its view.

To clarify the connection between censorship and stalling, we extend the definition of censoring blocks presented in Section~\ref{sec:censorship-resilience} to proposals recorded on \chain:
Consider an honest \pos node $p$ and let $b$ be a \bpow block containing a valid censorship complaint in $p$'s \bpow chain, \ie, the longest \bpow chain in $p$'s view.
Define $b'$ as the first block on $p$'s \bpow chain that contains a checkpoint and extends $b$ by at least $2k_c$ blocks.
Then, a proposal message (\cf Appendix~\ref{sec:propose}) for a block $B$ is said to be \emph{censoring} in $p$'s view if (i) the proposal was sent in response to a stalling evidence within a block $b''$ such that $b' \prec b''$ and comes after the checkpoint in $b'$ in $p$'s \bpow chain, and (ii) $B$ does not include the censored transactions neither in its body nor within its prefix.
The $2k_c$ lower bound on the gap between $b$ and $b'$ ensures that all finalized \pos blocks which exclude the censored transactions and were proposed or voted upon by honest validator are checkpointed by $b'$ or other \bpow blocks in its prefix, thus leaving no room to accuse an honest validator for censorship. 

Next, we describe a Tendermint round recorded on-\chain in the perspective of an honest validator $p$.
Suppose there is a stalling evidence for some height $h$ on $p$'s \bpow chain and the evidence is at least $2k_c$ blocks apart from the last preceding checkpoint.
Then, upon observing the first such stalling evidence recorded by a \bpow block $b$, $p$ enters a new Tendermint round for height $h$, whose messages are recorded on-\chain, and freezes the parameters $\mathsf{lockedValue}$, $\mathsf{lockedRound}$, $\mathsf{validValue}$ and $\mathsf{validRound}$ in its view.
If $p$ has observed a new valid Tendermint block become finalized at the height $h$ by that time, it sends a new checkpoint to \chain for that block.
Otherwise, $p$ signs and sends a proposal message to \chain, pretending as the leader of the new round.
Since the round is recorded on \bpow, its number $\mathsf{round}_p$ is set to a special value, $\mathsf{Babylon}$.
Thus, $p$'s $\mathsf{PROPOSAL}$ message is structured as $\langle \mathsf{PROPOSAL}, h, \mathsf{Babylon}, H(v), vr \rangle$, where either (i) $(v,vr) = (\mathsf{validValue},\mathsf{validRound})$ held by $p$ if $p$'s $\mathsf{validValue} \geq 0$, or (ii) $(v,vr) = (B,-1)$, where $B$ is a new \pos block created by $p$, if $p$'s $\mathsf{validValue} = -1$ (\cf Appendix~\ref{sec:propose}).
If $vr \geq 0$ and $p$ proposed its $\mathsf{validValue}$ as the proposal $v$, it also sends a commitment of the $2f+1$ prevote messages for $v$ to \chain along with the proposal.
This is to convince late-coming \pos nodes that the $2f+1$ prevote messages for $v$ were indeed seen by $p$ before it proposed $v$.

Note that $p$ only includes the hash of the proposed block $v$ in the proposal message unlike the proposals in Tendermint (\cf Appendix~\ref{sec:propose}).
To ensure that other \pos nodes can download $v$ if needed, miners check its availability before accepting $p$'s proposal message as valid.
Similarly, miners check the availability of prevotes upon receiving a proposal that proposes a $\mathsf{validValue}$ held by the validator.

Let $b_1$ and $b_2$ denote the first blocks on $p$'s \bpow chain that extend $b$ by $k_c$ and $2k_c$ blocks respectively.
If $p$ (or any honest \pos node) observes a checkpoint for a \pos block finalized at height $h$ between $b$ and $b_1$, it stops participating in the round recorded on-\chain and moves to the next height, resuming its communication with the other validators through the network.
Otherwise, if there are $\geq 2f+1$ non-censoring proposal messages signed by unique validators between $b$ and $b_1$, it selects the non-censoring valid block proposed with the largest $vr$ as the proposal of the `\chain round' emulated on-\chain.
If there are multiple proposals with the highest $vr$, $p$ selects the one that appears earliest between $b$ and $b_1$ on \chain.
Selecting the proposal with the largest $vr$ ensures that the honest validators can later prevote and precommit for that block without violating Tendermint rules (\cf Appendix~\ref{sec:propose}).

Once $p$ decides on a proposal $B$ and observes $b_1$ in its \bpow chain, if it is locked on \pos block, it checks if $B$ is the same block as its $\mathsf{lockedValue}$ or if the proposal's $vr$ is larger than its $\mathsf{lockedRound}$ (\cf Appendix~\ref{sec:propose}).
If so, it sends the following prevote and precommit messages for the selected proposal to \chain: $\langle \mathsf{PREVOTE}, h, \linebreak \mathsf{Babylon}, id(B) \rangle$ and $\langle \mathsf{PRECOMMIT}, h, \mathsf{Babylon}, id(B) \rangle$ (\cf Sections \ref{sec:prevote} and~\ref{sec:precommit}).
If $p$ is not locked on any \pos block, it directly sends the prevote and precommit messages. 
Unlike in Tendermint, $p$ does not wait to observe $2f+1$ prevote messages for $B$ before it sends its precommit message.
This is because the purpose of the round emulated on \chain is to catch unresponsive validators stalling the protocol, thus, does not need the two step voting.
However, it still keeps the two step voting for the purpose of consistency with the Tendermint rounds that happened off-\chainnosp.

Finally, upon observing $b_2$ in its \bpow chain, $p$ finalizes $B$ if there are more than $2f+1$ prevotes and precommits for $B$, signed by unique validators, between $b_1$ and $b_2$.
In this case, $b_2$ is designated as the \bpow block that has checkpointed the finalized block for height $h$. 
Upon finalizing $B$, $p$ moves to the next height, resuming its communication with the other validators through the network.

If $p$ observes no new checkpoints and less than $2f+1$ uniquely signed non-censoring proposals between $b$ and $b_1$, it does not send a prevote or precommit, and instead attempts to restart the round on-\chain by sending a new stalling evidence.
Similarly, if $p$ observes less than $2f+1$ uniquely signed prevotes or precommits for the selected proposal $B$ between $b_1$ and $b_2$, it does not finalize $B$, again restarting the round on-\chain.
Finally, if $p$ observes a fraud proof on \chain that implies a safety violation on the PoS chains, it stops participating in the Tendermint round on-\chain and temporarily halts finalizing new PoS blocks.

\section{Proof of Theorem~\ref{thm:main-security-informal}}
\label{sec:appendix-main-security}

We prove Theorem~\ref{thm:main-security-informal} below by showing properties \textbf{S1}-\textbf{S2} and \textbf{L1}-\textbf{L2} for the PoS chains.

\subsection{Proof of the Safety Claims \textbf{S1} \& \textbf{S2}}
\label{sec:proofs-slashable-safety}

\begin{proposition}
\label{prop:chain-quality}
If a transaction $\tx$ is sent to the miners at time $t+\Delta$ such that $|\POWLedger{\client}{t}| \leq L$ for all nodes $\client$, $\tx \in \POWLedger{\client'}{t'}$ for any honest node $\client'$, where $|\POWLedger{}{t'}| = L+k_w/2$.
\end{proposition}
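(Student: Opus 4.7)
The plan is to combine $\Delta$-synchrony for message propagation with the chain-quality property of the \bpow PoW chain to guarantee that $\tx$ appears in the canonical chain within the next $k_w/2$ blocks of growth. The statement is essentially a PoW chain-quality result specialized to the setup of this paper, so I would build the argument on top of the backbone-style analysis of \cite{backbone}, which is already invoked in Definition~\ref{def:pow-security}.

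First I would use synchrony to conclude that $\tx$ is received by every honest miner by time $t+2\Delta$: it is sent at $t+\Delta$ and delivered within one network delay. By the handling of commitments described in Section~\ref{sec:construction-of-the-bpow-chain}, any honest miner who successfully extends the longest \bpow chain after time $t+2\Delta$ includes $\tx$ in the block it mines (assuming $\tx$ has not already been included, in which case the conclusion is immediate). Second, I would invoke the chain-quality property implicit in the theorem's hypothesis that the average inter-block time is much larger than $\Delta$: during a segment over which the canonical chain grows by $k_w/2$ blocks, a constant fraction of those blocks must be produced by honest miners, and moreover at least one such honest block must be mined strictly after time $t+2\Delta$.

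Third, I would stitch these two observations together. Starting from the tip of any honest node's view at time $t$ (which has height at most $L$), track the growth of the canonical chain until it first reaches length $L+k_w/2$; this must happen at some time $t' \geq t$. By the second step, some honest miner $m$ produces one of the new blocks at a time $\tau \geq t+2\Delta$, and by the first step that block contains $\tx$. Because this block is $k_w/2$-deep (or shallower) in a chain of length $L+k_w/2$ observed by some honest node, security of \bpow for parameter $k_w/2$ ensures it persists in the canonical chain of every honest node whose longest-chain view reaches length $L+k_w/2$.

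The main obstacle is formalizing the ``at least one honest block after $t+2\Delta$'' claim while accounting for the possibility that the adversary has pre-mined blocks before time $t$ and releases them to inflate the chain from $L$ to $L+k_w/2$. One must argue, using the standard combination of an upper bound on adversarial block production during a round and a lower bound on honest block production over a sufficiently long interval (as in \cite{backbone}), that the honest portion of the $k_w/2$ new blocks cannot be entirely squeezed into the $[t, t+2\Delta]$ prefix when $k_w \gg \lambda \Delta$. Once this slot-counting step is carried out, the rest of the proof is a direct application of the PoW security assumed for parameter $k_w/2$.
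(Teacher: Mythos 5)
Your proposal is correct in substance, but it takes a noticeably longer route than the paper. The paper's entire proof is a one-line appeal to the $k_w/2$-security of \chain as packaged in Definition~\ref{def:pow-security}: the liveness clause there already states that a valid transaction received by all honest miners for more than $r$ block-time appears in an $r$-deep block of every honest party's longest chain, and the safety clause gives persistence. Since $|\POWLedger{\client}{t}| \leq L$ for all nodes at time $t$ and \tx reaches every honest miner by $t+2\Delta$, instantiating that clause with $r = k_w/2$ immediately yields the proposition. What you do instead is re-derive that liveness clause from first principles --- synchrony for delivery, chain quality to find an honest block mined after $t+2\Delta$ among the next $k_w/2$ blocks, the inclusion rule of Section~\ref{sec:construction-of-the-bpow-chain}, and then persistence --- which is precisely the decomposition used inside \cite{backbone} to prove that liveness property in the first place. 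Your approach buys self-containedness and makes explicit the pre-mining subtlety (the adversary releasing withheld blocks to inflate the chain from $L$ to $L+k_w/2$ without any post-$t$ honest block), which the paper's black-box citation silently delegates to \cite{backbone}; the paper's approach buys brevity and keeps the PoW layer as a clean assumption. One small caution: in your third step the honest block containing \tx sits at depth strictly less than $k_w/2$ when the chain first reaches length $L+k_w/2$, so the safety clause of Definition~\ref{def:pow-security} does not yet literally apply to it at time $t'$; the proposition only asserts membership in $\POWLedger{\client'}{t'}$ at that moment, so you should conclude via the liveness clause (or via common-prefix of the honest views) rather than via $k_w/2$-persistence of that particular block.
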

Proof of Proposition~\ref{prop:chain-quality} follows from the $k_w/2$-security of \chain.

\begin{proposition}
\label{prop:conflicting-pos-blocks}
Consider a \pos block $B \in \Ledger{i}{t}$, checkpointed by a \bpow block $b \in \POWLedger{i}{t}$. 
Then, there cannot be any \bpow block in the prefix of $b$ that checkpoints a \pos block conflicting with $B$.
If $B \in \Ledger{i}{t}$ and is not checkpointed in $i$'s view by time $t$, then there cannot be any \bpow block $b' \in \POWLedger{i}{t}$ that checkpoints a \pos block conflicting with $B$.
\end{proposition}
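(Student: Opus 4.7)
I would prove both claims by contradiction, using the fork-choice rule of Section~\ref{sec:construction-of-the-bpos-ledger} together with the sequential structure of commitments from Section~\ref{sec:commitments-and-DA} (a single checkpoint commits to a consecutive sequence of \pos blocks on one chain, and \pos nodes ultimately submit checkpoints for every block on the chains they observe, without redundancy).

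For the first claim, suppose towards a contradiction that some \bpow block $b' \prec b$ in $\POWLedger{i}{t}$ checkpoints a \pos block $B'$ conflicting with $B$. Let chain A (resp.\ chain B) denote the \pos chain containing $B$ (resp.\ $B'$), and let $B_A$ and $B_B$ be the first \pos blocks exclusively on chain A and on chain B past their last common ancestor. Since commitments cover consecutive \pos blocks on a single chain and $B_A$ is an ancestor (or equal) to $B$ on chain A, the first \bpow block in $\POWLedger{i}{t}$ whose commitment mentions $B_A$ must satisfy $b_{B_A} \preceq b$; an analogous argument yields $b_{B_B} \preceq b'$. Because $B \in \Ledger{i}{t}$, the fork-choice rule forces chain A to be the earliest chain in $i$'s recency ordering, which requires $b_{B_A} \prec b_{B_B}$. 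Combining these inequalities, $b_{B_A} \prec b_{B_B} \preceq b' \prec b$. The contradiction then follows from the minimality of $b$ together with the sequential structure of chain-A commitments: any sequence of commitments on chain A starting at $b_{B_A}$ must connect $B_A$ to $B$ in an uninterrupted progression of \bpow blocks, yet $b'$ lies strictly between $b_{B_A}$ and $b$ while carrying a commitment exclusively on the conflicting chain B, which is inconsistent with $b$ being the first \bpow block in $\POWLedger{i}{t}$ to contain a valid commitment of $B$.

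The second claim proceeds analogously. If $B \in \Ledger{i}{t}$ is uncheckpointed in $i$'s view, no \bpow block in $\POWLedger{i}{t}$ contains a commitment mentioning $B$. Assume for contradiction that a conflicting $B'$ is checkpointed at some $b' \in \POWLedger{i}{t}$; then as before $b_{B_B} \preceq b'$, whereas the uncheckpointed status of $B$ together with the sequential-commitment property implies that no commitment of $B_A$ appears in $\POWLedger{i}{t}$ (any such commitment would extend along chain A and eventually include $B$). By the sub-case of the fork-choice rule stating that if only one chain is checkpointed in this manner then it is earlier, chain B is declared earlier than chain A in $i$'s view, contradicting $B \in \Ledger{i}{t}$. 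The main obstacle I foresee is making the sequential-commitment property fully precise, so that the key implications ``$B$ first checkpointed at $b$ forces $B_A$ to be checkpointed at some \bpow block $\preceq b$'' and its contrapositive for the second claim are rigorously derived from the nonredundancy of commitments and the binding structure of checkpoints defined in Section~\ref{sec:commitments-and-DA}.
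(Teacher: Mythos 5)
Your overall strategy—contradiction via the fork-choice rule—is the same as the paper's, whose entire proof is a one-line appeal to that rule: the hypothesized conflicting checkpoint at $b' \preceq b$ (resp.\ at $b' \in \POWLedger{i}{t}$ when $B$ is uncheckpointed) makes the conflicting chain at least as early as $B$'s chain in the recency ordering, so the fork-choice rule would not select $B$'s chain and $B \notin \Ledger{i}{t}$. However, your detailed execution has a genuine gap in the first claim. After deriving $b_{B_A} \prec b_{B_B} \preceq b' \prec b$ you declare a contradiction with ``the minimality of $b$'' and an ``uninterrupted progression'' of chain-A commitments, but no such contradiction exists: nothing in Sections~\ref{sec:commitments-and-DA} or~\ref{sec:construction-of-the-bpos-ledger} prevents checkpoints of the two conflicting forks from being interleaved on the \bpow chain, and a \bpow block carrying a commitment of chain-B blocks sitting between $b_{B_A}$ and $b$ in no way contradicts $b$ being the \emph{first} \bpow block to contain a valid checkpoint of $B$ (that block commits to $B'$, not to $B$). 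Indeed the configuration $b_{B_A} \prec b_{B_B} \preceq b' \prec b$ is exactly the one in which chain A is the earlier chain and $B \in \Ledger{i}{t}$ holds, so it cannot be refuted. The argument only closes if you can relate $b$ to $b_{B_A}$, i.e., show $b \preceq b_{B_A}$ so that $b_{B_B} \preceq b' \prec b \preceq b_{B_A}$ makes chain B earlier; this is immediate when $B$ is the first conflicting block (which is how the proposition is actually invoked in Cases 1--4 of Appendix~\ref{sec:appendix-main-security}, where $B_1, B_2$ are defined as the first conflicting blocks), but it is precisely the step your write-up replaces with a non-sequitur.

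The second claim has an analogous problem: you assert that $B$ being uncheckpointed implies no commitment of $B_A$ appears in $\POWLedger{i}{t}$, on the grounds that any such commitment ``would extend along chain A and eventually include $B$.'' Checkpoints are submitted periodically, so at time $t$ an ancestor $B_A$ of $B$ can perfectly well be checkpointed while $B$ itself is not yet; ``eventually'' does not help at a fixed time $t$. So the sub-case of the fork-choice rule you invoke (``only one chain is checkpointed'') need not apply. Again, the inference you need is that the \emph{first} chain-A-exclusive block has no checkpoint preceding that of the first chain-B-exclusive block, and that is what must be argued from the hypothesis rather than from the uncheckpointed status of $B$ alone.
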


\begin{proof}
For the sake of contradiction, suppose there exists a \bpow block $b' \preceq b$ such that $b'$ checkpoints a \pos block $B'$ that conflicts with $B$.
Then, via the fork-choice rule in Section~\ref{sec:construction-of-the-bpos-ledger}, $B \notin \Ledger{i}{t}$, \ie contradiction.
Similarly, if $B$ is not checkpointed in $i$'s view by time $t$ and there exists a \bpow block $b' \in \POWLedger{i}{t}$ such that $b'$ checkpoints a \pos block $B'$ that conflicts with $B$, again via the fork-choice rule in Section~\ref{sec:construction-of-the-bpos-ledger}, $B \notin \Ledger{i}{t}$, \ie contradiction.
\end{proof}

To show the safety claims \textbf{S1} and \textbf{S2}, we prove that if \chain is secure with parameter $k_w/2$, then whenever there is a safety violation on the PoS chains, at least $1/3$ of the validator set becomes slashable in the view of all honest \pos nodes.

\begin{proof}
Suppose there is a safety violation on the PoS chains and $\Ledger{i}{t}$ observed by an honest node $i$ at time $t$ conflicts with $\Ledger{j}{t'}$ observed by an honest node $j$ at time $t' \geq t$.
Let $B_1$ and $B_2$ denote the first two conflicting \pos blocks on $\Ledger{i}{t}$ and $\Ledger{j}{t'}$ respectively.
Via synchrony, by time $t'+\Delta$, every honest node observes $B_1$ and $B_2$, their prefixes and the protocol messages attesting to their PoS-finalization.
Since the \pos protocol has an accountable safety resilience of $1/3$, upon inspecting the blocks, their prefixes and the associated messages, any node can irrefutably identify $1/3$ of the validator set for $B_1$ and $B_2$ as having violated the protocol, and submit a fraud proof to \chain by time $t'+\Delta$.
Let $S$ denote the set of the adversarial validators witnessed by the fraud proof.

For the sake of contradiction, assume that there is a validator $\validator \in S$ that has not become slashable in the view of an honest node $\client$. 
Then, there exists a time $t_0$ and a \pos block $B'_2$ containing $\validator$'s withdrawal request such that $B'_2$ is checkpointed by a \bpow block $b'_2$ that is at least $k_w$-deep in $\POWLedger{\client}{t_0}$ and there is no fraud proof showing $\validator$'s misbehavior on $\POWLedger{\client}{t_0}$ (\cf Section~\ref{sec:slashing-in-the-case-of-safety-violations}).
Now, suppose $b'_2$ has not become at least $k_w/2$ deep in the longest \bpow chain in the view of any node, including adversarial ones, by time $t'+\Delta$.
In this case, since the fraud proof submitted to the \bpow chain by time $t'+\Delta$ will appear and stay in the canonical \bpow chain of all honest nodes within $k_w/2$ block-time of $t'$ by Proposition~\ref{prop:chain-quality}, fraud proof will be on $\POWLedger{\client}{t_0}$ as well.
However, this is a contradiction, implying that there must be at least one, potentially adversarial, node $j'$, which observes $b'_2$ become $k_w/2$ deep in its longest \bpow chain at some time $s \leq t'+\Delta$.

Next, we analyze the following cases:
\begin{itemize}
    \item \textbf{Case 1:} There exists a \bpow block $b_1 \in \POWLedger{i}{t}$ such that $b_1$ checkpoints $B_1$ and $b_1 \preceq b'_2 \in  \POWLedger{i}{t}$.
    \item \textbf{Case 2:} There exists a \bpow block $b_1 \in \POWLedger{i}{t}$ such that $b_1$ checkpoints $B_1$ and $b'_2 \prec b_1$.
    \item \textbf{Case 3:} $b'_2 \notin \POWLedger{i}{t}$.
    \item \textbf{Case 4:} There does not exist a \bpow block $b_1 \in \POWLedger{i}{t}$ checkpointing $B_1$ at time $t$ and $b'_2 \in \POWLedger{i}{t}$.
\end{itemize}

\textbf{Case 1:} $b_1 \preceq b'_2$. 
By Proposition~\ref{prop:conflicting-pos-blocks}, $b_1 \notin \POWLedger{j}{t'}$, which implies that the $k_w/2$ blocks building on $b'_2$ in $\POWLedger{j'}{s}$ are not in $\POWLedger{j}{t'}$ at time $t'\geq s-\Delta$.
However, this is a contradiction with the $k_w/2$-safety of \chain.

\textbf{Case 2:} $b'_2 \prec b_1$. 
If $B'_2$ conflicts with $B_1$, as $B_1 \in \Ledger{i}{t}$, via Proposition~\ref{prop:conflicting-pos-blocks}, $b'_2 \prec b_1$ cannot be true, \ie contradiction.
On the other hand, if $b'_2 \prec b_1$ and $B'_2$ does not conflict with $B_1$, then $B'_2 \prec B_1$, in which case $\validator$ cannot be in the validator set $S$ that voted for $B_1$, \ie contradiction.

\textbf{Case 3:} $b'_2 \notin \POWLedger{i}{t}$.
Suppose $t \geq s$.
Since $\POWLedger{i}{t}$ does not contain the $k_w/2$ \bpow blocks following $b'_2$ in $j'$'s canonical \bpow chain at time $s$, in this case, \chain cannot be safe with parameter $k_w/2$, \ie, contradiction. 

On the other hand, if $t < s$, we consider the following sub-cases:
\begin{itemize}
    \item \textbf{Case 3-a:} There exists a \bpow block $b_2 \in \POWLedger{j}{t'}$ such that $b_2$ checkpoints $B_2$ and $b_2 \preceq b'_2$.
    \item \textbf{Case 3-b:} There exists a \bpow block $b_2 \in \POWLedger{j}{t'}$ such that $b_2$ checkpoints $B_2$ and $b'_2 \prec b_2$.
    \item \textbf{Case 3-c:} $b'_2 \notin \POWLedger{j}{t'}$.
    \item \textbf{Case 3-d:} There does not exist a \bpow block $b_2 \in \POWLedger{j}{t'}$ checkpointing $B_2$ at time $t'$ and $b'_2 \in \POWLedger{j}{t'}$.
\end{itemize}

\item \textbf{Case 3-a:} $b_2 \preceq b'_2$.
In this case, as $t < s$, by time $s+\Delta$, node $i$ would have observed both \pos blocks $B_1$ and $B_2$ along with their prefixes and sent a fraud proof to the \bpow miners. 
Then, by Proposition~\ref{prop:chain-quality}, the fraud proof will appear in the prefix of the $k_w$-th \bpow block building on $b'_2$ in $\client$'s canonical \bpow chain.
However, this is a contradiction with the assumption that $\validator$ has withdrawn its stake in $\client$'s view. 

\textbf{Case 3-b:} $b'_2 \prec b_2$. 
If $B'_2$ conflicts with $B_2$, as $B_2 \in \Ledger{j}{t'}$, via Proposition~\ref{prop:conflicting-pos-blocks}, $b'_2 \prec b_2$ cannot be true, \ie contradiction.
On the other hand, if $b'_2 \prec b_2$ and $B'_2$ does not conflict with $B_2$, then $B'_2 \prec B_2$, in which case $\validator$ cannot be in the validator set $S$ that voted for $B_2$, again a contradiction.

\textbf{Case 3-c:} $b'_2 \notin \POWLedger{j}{t'}$.
In this case, $\POWLedger{j}{t'}$ does not contain the $k_w/2$ \bpow blocks following $b'_2$ in $j'$'s canonical \bpow chain at time $s \leq t'+\Delta$.
However, this contradicts with the $k_w/2$-safety of the \bpow chain.

\textbf{Case 3-d:} There does not exist a \bpow block $b_2 \in \POWLedger{j}{t'}$ checkpointing $B_2$ at time $t'$ and $b'_2 \in \POWLedger{j}{t'}$.
In this case, if $B'_2$ conflicts with $B_2$ and $B_2 \in \Ledger{j}{t'}$, via Proposition~\ref{prop:conflicting-pos-blocks}, $b'_2 \in \POWLedger{j}{t'}$ cannot be true, \ie contradiction.
On the other hand, if $B'_2 \prec B_2$, $\validator$ cannot be in the validator set $S$ that voted for $B_2$, again a contradiction.
Finally, if $B_2 \preceq B'_2$, then $b'_2$ also checkpoints $B_2$ by the monotonicity of checkpoints (\cf Section~\ref{sec:construction-of-the-bpos-ledger}), which is a contradiction with the assumption that there does not exist a \bpow block $b_2 \in \POWLedger{j}{t'}$ checkpointing $B_2$ at time $t'$.

\textbf{Case 4:} There does not exist a \bpow block $b_1 \in \POWLedger{i}{t}$ checkpointing $B_1$ at time $t$ and $b'_2 \in \POWLedger{i}{t}$.
In this case, if $B'_2$ conflicts with $B_1$, as $B_1 \in \Ledger{i}{t}$, via Proposition~\ref{prop:conflicting-pos-blocks}, $b'_2 \in \POWLedger{i}{t}$ cannot be true, implying contradiction.
On the other hand, if $B'_2 \prec B_1$, $\validator$ cannot be in the validator set $S$ that voted for $B_1$, again a contradiction.
Finally, if $B_1 \preceq B'_2$, then $b'_2$ also checkpoints $B_1$ by the monotonicity of checkpoints (\cf Section~\ref{sec:construction-of-the-bpos-ledger}), which is a contradiction with the assumption that there does not exist a \bpow block $b_1 \in \POWLedger{i}{t}$ checkpointing $B_1$ at time $t$.

Thus, by contradiction, we have shown that if the \chain chain satisfies $k_w/2$-security, none of the validators in the set $S$ can withdraw their stake in the view of any honest node by time $t'+\Delta$.
Moreover, since they have been irrefutably identified as protocol violators by every honest node by time $t'+\Delta$, they are slashable per Definition~\ref{def:slashable}.
Consequently, whenever there is a safety violation on the PoS chains, at least $1/3$ of the validators become slashable if \chain satisfies $k_w/2$-security.
\end{proof}

No honest validator becomes slashable in the view of any \pos node due to a safety violation since fraud proofs never identify an honest validator as a protocol violators via the accountability guarantee provided by Tendermint \cite{tendermint_thesis}. Thus, even if the adversary compromises the security of the \bpow chain, it cannot cause honest validators to get slashed for a safety violation on the \pos chain.
However, when the security of the \bpow chain is violated, honest validators might be subjected to slashing for censorship and stalling as the arrow of time determined by \chain is distorted.
Since accountable liveness is impossible without external trust assumptions, slashing of honest nodes is unavoidable if the adversary can cause arbitrary reorganizations of blocks on \chain.
This point is addressed in the next section.

\subsection{Proof of the Liveness Claims \textbf{L1} \& \textbf{L2}}
\label{sec:proofs-slashable-liveness}

In this section, we prove that if \chain is secure with parameter $k_c/2 \leq k_w/2$, whenever there is a liveness violation exceeding $\Theta(k_c)$ block-time, at least $1/3$ of the active validator set becomes slashable in the view of all honest nodes.
Let $f$ denote the safety and liveness resilience of Tendermint such that $n=3f+1$.
In the rest of this section, we assume that there is no safety violation on the \pos chains and there is no fraud proof posted on \chain that accuses $f+1$ active validators of equivocating on prevote or precommit messages.
We will relax this assumption and consider the interaction between safety and liveness violations at the end of this section.

In the rest of this section, we will assume that the \chain chain is $k_c/2$-secure per Definition~\ref{def:pow-security} unless stated otherwise.
Moreover, in this section, \bpow chains in the view of honest nodes or validators will refer to the $k_c/2$-deep prefix of the longest chain in their view.
Under the $k_c/2$-security assumption for \chain, this reference ensures that (i) the \bpow chains observed by different honest nodes at any given time are prefixes of each other, and (ii) the data behind every commitment appearing on the \bpow chains held by the honest nodes is available per Definition~\ref{def:pow-security}.

Let $\POWLedger{}{t}$, without any node specified, denote the shortest \bpow chain in the view of the honest nodes at time $t$.
Thus, $\POWLedger{}{t}$ is a prefix of all \bpow chains held by the honest nodes at time $t$.
Moreover, by the synchrony assumption, we deduce that all \bpow chains held by the honest nodes at time $t-\Delta$ are prefixes of $\POWLedger{}{t}$.
Using the notation $\POWLedger{}{t}$, we can state the liveness property of the \bpow chain in the following way:
\begin{proposition}
\label{prop:chain-growth}
If a transaction $\tx$ is sent to the miners at time $t+\Delta$ such that $|\POWLedger{}{t}| = L$, $\tx \in  \POWLedger{}{t'}$, where $|\POWLedger{}{t'}| = L+k_c$.
\end{proposition}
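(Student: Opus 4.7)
The plan is to derive the claim directly from the $k_c/2$-security of \chain stated in Definition~\ref{def:pow-security}, together with the $\Delta$-synchrony assumption and the convention that honest parties' \bpow views are the $k_c/2$-deep prefixes of their longest chains. The high-level idea is that the growth of the shortest such prefix by $k_c$ blocks witnesses enough block-time elapsing to trigger the liveness clause of $k_c/2$-security for $\tx$.

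First, I would unpack the hypothesis. Since $\tx$ is handed to the \bpow miners at time $t+\Delta$ and the network is $\Delta$-synchronous, every honest miner has received $\tx$ by time $t+2\Delta$. By the assumption (\cf Theorem~\ref{thm:main-security-informal}) that the expected inter-block time on \chain is much larger than $\Delta$, the gap $2\Delta$ is negligible compared to one block-time, so $\tx$ is in the possession of every honest miner essentially from time $t$ onward. Next, by the way $\POWLedger{}{t}$ is defined as the shortest honest $k_c/2$-deep prefix, moving from $|\POWLedger{}{t}|=L$ at time $t$ to $|\POWLedger{}{t'}|=L+k_c$ at time $t'$ means that each honest node's longest chain gained at least $k_c$ new blocks at depth greater than $k_c/2$ between $t$ and $t'$; equivalently, more than $k_c/2$ block-time has elapsed in the longest chain of every honest miner while $\tx$ was in the mempool of every honest miner.

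The final step applies the liveness clause of Definition~\ref{def:pow-security} with parameter $r=k_c/2$: once $\tx$ has been held by all honest miners for more than $k_c/2$ block-time, it must appear at a $k_c/2$-deep block in every honest node's longest chain; by the safety clause it then cannot be displaced. Since $\POWLedger{}{t'}$ is defined as the shortest of these $k_c/2$-deep prefixes, we conclude $\tx\in\POWLedger{}{t'}$, matching the claim. A small uniqueness/placement check shows $\tx$ occupies the same position in every honest view, but this follows from safety applied to the $k_c/2$-deep prefix.

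The one place that needs care, which I expect to be the main obstacle, is arguing cleanly that ``$k_c$ blocks of growth in the shortest honest $k_c/2$-deep prefix'' really does amount to ``more than $k_c/2$ block-time during which $\tx$ was seen by every honest miner.'' This requires invoking the assumption that $\Delta$ is much smaller than one block-time (so the $2\Delta$ delay to reach all miners is absorbed), and also using synchrony to argue that growth in the shortest honest prefix lower-bounds growth in every honest miner's longest chain by a matching amount. Once this accounting is in place, the rest is a direct application of the properties already established for \chain.
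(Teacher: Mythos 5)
Your proposal is correct and follows essentially the same route as the paper, whose entire proof of this proposition is the single sentence that it ``follows from the $k_c/2$-security of \chain''; you are simply unpacking that one-liner via the liveness clause of Definition~\ref{def:pow-security} with $r=k_c/2$, synchrony, and the convention that honest views are $k_c/2$-deep prefixes. The accounting subtlety you flag (converting $k_c$ blocks of prefix growth into ``more than $k_c/2$ block-time with $\tx$ held by all honest miners,'' and pinning down inclusion by the specific time $t'$ despite the ``eventually'' in the liveness clause) is real but is left implicit by the paper itself, so your level of detail already exceeds the reference proof.
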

Proof of Proposition~\ref{prop:chain-growth} follows from the $k_c/2$-security of \chain.

As stated in Section~\ref{sec:slashing-in-the-case-of-liveness-violations}, at least one honest node sends checkpoints for new finalized \pos blocks every time it observes the \bpow chain grow by $k_c$ blocks.
Moreover, if a \pos block is finalized for the first time in an honest node's view at time $t$, it is finalized in every honest node's view by time $t+\Delta$ via the synchrony assumption.
\begin{proposition}
\label{prop:new-checkpoint}
Suppose a \pos block is finalized in the view of an honest node at time $t$ such that $|\POWLedger{}{t}| = L$.
Then, a checkpoint for the finalized block appears in $\POWLedger{}{t'}$, where $|\POWLedger{}{t'}| = L+2k_c$.
\end{proposition}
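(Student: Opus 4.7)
The plan is to combine the $\Delta$-synchrony of the PoS network, the honest checkpointing behavior specified in Section~\ref{sec:construction-of-the-bpos-ledger}, and the liveness part of Proposition~\ref{prop:chain-growth} for Babylon.

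\textbf{Step 1 (Propagation of the finalized block).} By assumption, some honest node observes the PoS block as finalized at time $t$. By $\Delta$-synchrony together with the fact that the consensus-related messages certifying PoS-finalization can be accessed via child blocks, every honest PoS node observes the block as finalized by time $t+\Delta$. Because the average Babylon block-time is assumed to be much larger than $\Delta$, the shortest honest Babylon chain has grown by a negligible amount by $t+\Delta$, so $|\POWLedger{}{t+\Delta}|$ is essentially still $L$ (and is at most $L$ plus an additive constant one can absorb into the $k_c$ slack).

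\textbf{Step 2 (A checkpoint is submitted by Babylon-length $L+k_c$).} The checkpointing rule of Section~\ref{sec:construction-of-the-bpos-ledger} requires every honest PoS node to send a checkpoint containing all not-yet-checkpointed finalized PoS blocks to the miners each time it observes its Babylon chain grow by $k_c$ blocks. Hence, no later than when the shortest honest chain reaches length $L+k_c$, some honest node submits a valid checkpoint $h$ committing to the newly finalized block. Let $s$ denote this submission time; then $|\POWLedger{}{s}|\leq L+k_c$.

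\textbf{Step 3 (Inclusion in the $k_c/2$-deep prefix).} Apply Proposition~\ref{prop:chain-growth} to the Babylon transaction carrying $h$ (the miners accept it as valid since it is a well-formed checkpoint of an available, finalized PoS block that any honest miner can verify). By Proposition~\ref{prop:chain-growth}, $h$ appears in $\POWLedger{}{t'}$ for any $t'$ at which $|\POWLedger{}{t'}|\geq |\POWLedger{}{s}|+k_c$, in particular whenever $|\POWLedger{}{t'}|\geq L+2k_c$.

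The only real obstacle is the bookkeeping in Step~1: one must be careful that the $\Delta$ propagation delay plus the grace before an honest node next notices $k_c$ blocks of Babylon growth together fit inside the first $k_c$-block allowance, which is exactly what the ``average Babylon block time much larger than $\Delta$'' assumption of Theorem~\ref{thm:main-security-informal} provides. Everything else is a direct application of $k_c/2$-security of Babylon via Proposition~\ref{prop:chain-growth}.
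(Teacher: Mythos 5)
Your proof is correct and follows essentially the same route as the paper, which disposes of this proposition in one line by citing Proposition~\ref{prop:chain-growth} together with the two facts stated just above it (honest nodes checkpoint every $k_c$ blocks of observed \bpow growth, and finalization propagates to all honest nodes within $\Delta$). Your three steps simply make that one-line argument explicit, including the bookkeeping slack the paper leaves implicit.
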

Proof of Proposition~\ref{prop:new-checkpoint} follows from Proposition~\ref{prop:chain-growth} and the assumption above.

\begin{lemma}
\label{lem:new-liveness-event}
Suppose $|\POWLedger{}{t}| = L$ and $|\POWLedger{}{t'}| = L+5k_c$ for times $t$, $t'$, and the height of the last \pos block finalized in any honest node's view by time $t$ is $h-1$.
Then, either a new non-censoring Tendermint block for height $h$ is checkpointed within the interval $(t,t']$ or $f+1$ active validators must have violated the slashing rules for censorship and stalling in Sections~\ref{sec:slashing-censorship-resilience},~\ref{sec:details-of-stalling-resilience} and~\ref{sec:slashing-stalling-resilience} in the view of all honest nodes.
\end{lemma}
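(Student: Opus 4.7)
The plan is a case split on whether a non-censoring checkpoint for height $h$ lands in the portion of the common shortest \bpow chain $\POWLedger{}{t'}\setminus\POWLedger{}{t}$ added during the interval. The $5k_c$-block budget is chosen precisely so that, if no such checkpoint arrives, a full on-\chain Tendermint round for height $h$ must be fully recorded inside the interval, after which the slashing rules of Sections~\ref{sec:slashing-censorship-resilience} and~\ref{sec:slashing-stalling-resilience} implicate at least $f+1$ active validators.

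First I would account for the timing. Let $b_{\text{last}}$ be the most recent checkpoint in $\POWLedger{}{t}$. By the stalling rule of Section~\ref{sec:stalling-resilience}, if $\POWLedger{}{\cdot}$ grows $2k_c$ blocks beyond $b_{\text{last}}$ without a fresh checkpoint, every honest \pos node submits a stalling evidence for height $h$, and by Proposition~\ref{prop:chain-growth} the evidence lands in the shortest chain within a further $k_c$ blocks. Thus by length at most $L+3k_c$ either a new checkpoint for height $h$ has arrived or a stalling-evidence block $b$ sits in the shortest chain; in the latter case the proposal window $(b,b_1]$ and the prevote/precommit window $(b_1,b_2]$ of Section~\ref{sec:details-of-stalling-resilience}, each $k_c$ blocks long, fit inside the remaining $2k_c$ blocks of the budget, so $b_2\in\POWLedger{}{t'}$. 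Similarly, Proposition~\ref{prop:new-checkpoint} converts any off-\chainnosp finalization at height $h$ occurring before length $L+3k_c$ into a checkpoint inside $\POWLedger{}{t'}$.

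The argument then splits into three cases. (i) If a non-censoring checkpoint for height $h$ appears, we are done. (ii) If a censoring checkpoint for height $h$ appears, Tendermint finalization forces at least $2f+1$ distinct active validators to have prevoted and precommitted for a censoring block, each slashable per Section~\ref{sec:slashing-censorship-resilience}; since the censorship complaint, the checkpoint and the voting evidence all lie in the common shortest chain, the same set of size $\geq 2f+1\geq f+1$ is identified by every honest node. (iii) If no checkpoint for height $h$ appears, the on-\chain round must conclude unsuccessfully inside $(b,b_2]\subseteq\POWLedger{}{t'}$: either $(b,b_1]$ contains fewer than $2f+1$ non-censoring proposals signed by distinct active validators, implicating at least $n-2f=f+1$ validators under the proposal clauses of Sections~\ref{sec:slashing-stalling-resilience} and~\ref{sec:slashing-censorship-resilience}; or $(b_1,b_2]$ contains fewer than $2f+1$ prevotes/precommits for the proposal selected by $p$, implicating $\geq f+1$ validators under the voting clause of Section~\ref{sec:slashing-stalling-resilience}. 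In both sub-cases the witnesses lie in the common shortest chain, so every honest node agrees on the slashable set.

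The main obstacle is verifying that the $k_c$ grace periods are large enough to absorb every timing skew the adversary can inject on \chain under $k_c/2$-security, so that proposals and votes honestly submitted upon observing $b$ (respectively $b_1$) are \emph{guaranteed} to appear in $(b,b_1]$ (respectively $(b_1,b_2]$) of the common shortest chain; this is what rules out phantom slashability caused by \bpow reorganizations rather than genuine misbehavior. A secondary subtlety is the edge case where $b_{\text{last}}$ already sits $\geq 2k_c$ blocks deep at time $t$: the stalling trigger then fires before $t$, but the total budget of one round never exceeds $5k_c$ blocks, so the interval $(t,t']$ still accommodates it.
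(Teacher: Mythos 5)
Your proposal is correct and follows essentially the same route as the paper's proof: the identical $2k_c + k_c + k_c + k_c$ budget decomposition (detection of stalling, inclusion of the stalling evidence in block $b$, the proposal window $(b,b_1]$, and the vote window $(b_1,b_2]$), followed by the same case analysis on whether a checkpoint appears, whether $2f+1$ non-censoring proposals appear, and whether $2f+1$ prevotes/precommits appear, each failure implicating $n-2f=f+1$ validators. Your case (ii) --- a \emph{censoring} block being finalized and checkpointed at height $h$, whose $2f+1$ precommitters are slashable under Section~\ref{sec:slashing-censorship-resilience} --- is a point the paper's Case~1 leaves implicit, and your closing remark about honest messages landing in the correct windows is handled in the paper by the separate Lemma~\ref{lem:no-honest-slashing} rather than inside this one.
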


\begin{proof}
Let $t_i$ denote the first time such that $|\POWLedger{}{t_i}| = L+ik_c$.
If no new checkpoint for height $h$ appears on the \bpow chain by time $t_2$, at least one honest node must have sent a stalling evidence to \chain by that time.
Thus, via Proposition~\ref{prop:chain-growth}, by time $t_3$, there exists a stalling evidence recorded in a block $b$ in the \bpow chains held by all honest nodes.
Similarly, by time $t_4$, block $b_1$ that extends $b$ by $k_c$ blocks appears in the \bpow chain of all honest nodes.
At this point, there are three possibilities:
\begin{enumerate}
    \item There is a new checkpoint in the prefix of $b_1$ that commits to a new \pos block finalized for height $h$.
    \item There are less than $2f+1$ proposal messages signed by unique validators, proposing non-censoring \pos blocks between $b$ and $b_1$.
    \item There are $2f+1$ or more proposal messages signed by unique validators, proposing non-censoring \pos blocks between $b$ and $b_1$.
\end{enumerate}
Case 1 implies that there is a new checkpoint in every honest node's view by time $t'$ and case 2 implies that more than $f+1$ active validators must have violated the slashing rules in Sections~\ref{sec:slashing-censorship-resilience} and~\ref{sec:slashing-stalling-resilience} in the view of all honest nodes.

If case 3 happens, validators are required to send prevote and precommit messages for the unique non-censoring \pos block $B$ selected as described in Section~\ref{sec:details-of-stalling-resilience} by time $t_4$.
Moreover, by time $t_5$, block $b_2$ that extends $b_1$ by $k_c$ blocks appears in the \bpow chain of all honest nodes.
Thus, if case 3 happens, there are two possibilities at time $t_5$:
\begin{enumerate}
    \item There are less than $2f+1$ prevote or precommit messages for $B$, signed by unique validators, between $b_1$ and $b_2$.
    \item There are $2f+1$ or more prevote and precommit messages for $B$, signed by unique validators, between $b_1$ and $b_2$.
\end{enumerate}
In the former case, more than $f+1$ active validators must have violated the slashing rules in Section~\ref{sec:slashing-stalling-resilience} in the view of all honest nodes.
In the latter case, block $B$ is PoS-finalized and block $b_2$ acts as a new checkpoint for the \pos block at height $h$ as stated in Section~\ref{sec:details-of-stalling-resilience}.
\end{proof}

\begin{lemma}
\label{lem:no-honest-slashing}
No honest validator violates the slashing rules for censorship and stalling in Sections~\ref{sec:slashing-censorship-resilience},~\ref{sec:details-of-stalling-resilience} and~\ref{sec:slashing-stalling-resilience} in the view of any honest node.
\end{lemma}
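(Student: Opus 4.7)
The plan is to address each of the three slashing rules---censorship in Section~\ref{sec:slashing-censorship-resilience}, missing or censoring on-\chain proposals, and missing on-\chain prevotes or precommits (the latter two in Section~\ref{sec:slashing-stalling-resilience})---separately. In each case I exploit the timing buffer ($2k_c$ blocks for censorship, $k_c$ for stalling) together with $\Delta$-synchrony and the $k_c/2$-security of \chain to show that an honest validator's messages land in the required \bpow intervals and are non-censoring, so no honest validator is flagged by any honest node.

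For the censorship rule, I would fix an honest validator $q$ and suppose it proposed or voted for some \pos block $B$ that omits a censored transaction $\tx$. By the rule of Section~\ref{sec:censorship-resilience}, $q$ can take such an action only before the censorship complaint block $b$ enters its own $k_c/2$-deep \bpow view. Combining Proposition~\ref{prop:chain-growth} with the definition of $\POWLedger{}{t}$ as the shortest honest \bpow chain, I would argue that the $2k_c$ buffer is long enough that, should $B$ ever be finalized in some honest view, its checkpoint is submitted, propagated, and buried at least $k_c/2$ deep in every honest \bpow chain weakly before $b'$. Hence, for every honest node $p$, any checkpoint of $B$ in $p$'s \bpow view weakly precedes $b'$, so $B$ is not censoring in $p$'s view and $q$ is not slashable under the censorship rule.

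For the stalling rules, I would fix the stalling evidence carried by a \bpow block $b$ observed by any honest \pos node and track the on-\chain Tendermint round it triggers. By $k_c/2$-security together with synchronous gossip, every honest validator observes $b$ in its \bpow view within $\Delta$ of the first such observation and immediately signs and sends a proposal for the round; this proposal is non-censoring by the argument above. By Proposition~\ref{prop:chain-growth}, the $k_c$ buffer between $b$ and $b_1$ is long enough for the proposal to reach the shortest honest \bpow chain strictly before $b_1$, so $q$ is not slashed for a missing or censoring proposal. An analogous use of the second $k_c$ buffer shows that every honest validator's prevote and precommit for the unique proposal selected from $[b,b_1)$ land in $[b_1, b_2)$, ruling out slashing for a missing vote.

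The main obstacle, and the reason for the specific choice of the $2k_c$ and $k_c$ windows, is the careful bookkeeping of three simultaneous delays: (i) the at most $\Delta$ spread across honest validators' \bpow views induced by the $k_c/2$-deep prefix convention, (ii) the \bpow block-time needed for a freshly broadcast \bpow transaction to reach depth $k_c/2$ in the shortest honest chain, bounded via Proposition~\ref{prop:chain-growth}, and (iii) the at most $2k_c$ additional blocks for a freshly finalized \pos block to surface as a checkpoint in the shortest honest \bpow chain, bounded via Proposition~\ref{prop:new-checkpoint}. The substantive step will be to verify that these contributions sum within the allotted $2k_c$ and $k_c$ grace windows, so that no honest validator's proposal, vote, or previously supported \pos block is ever classified as missing or censoring by any honest node.
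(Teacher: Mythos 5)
Your censorship argument and your stalling argument both have problems, and the stalling one is the more serious. For censorship, you argue that if an honest validator voted for a block $B$ omitting the censored transaction, then ``should $B$ ever be finalized,'' its checkpoint lands weakly before $b'$. But the validator's vote and $B$'s finalization are decoupled: the adversary can withhold the remaining precommits and cause $B$ to become finalized (and hence first checkpointed) arbitrarily long after the complaint, in which case your timing bound says nothing. The paper instead argues by \emph{height}: letting $h$ be the largest finalized height at the time every honest node sees the complaint, an honest validator's pre-complaint proposal or vote can only concern a block at height at most $h+1$; Proposition~\ref{prop:new-checkpoint} plus monotonicity of checkpoints forces the checkpoint in $b'$ to already cover height $h+1$, so (absent a safety violation) any such $B$ is checkpointed by $b'$ or a block in its prefix and is therefore not censoring, no matter when it is finalized.

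For stalling, the gap is that you take ``the unique proposal selected from $[b,b_1)$'' for granted and assert that honest prevotes and precommits for it appear in $[b_1,b_2)$ ``analogously.'' The delay bookkeeping you describe only shows that \emph{if} an honest validator sends a vote at the right time, it lands in the right interval. It does not show (i) that all honest validators select the \emph{same} proposal, nor (ii) that an honest validator is \emph{willing} to vote for the selected proposal given Tendermint's locking rules. Both can fail a priori: two proposals carrying the same maximal $vr$ but different blocks would make honest validators disagree, and a validator whose $\mathsf{lockedValue}$ differs from the selected block while $\mathsf{lockedRound} \ge vr_{\mathrm{sel}}$ would refuse to prevote and would then be slashed for a missing vote. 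The paper's proof spends most of its length closing exactly these holes: it shows that two distinct non-censoring proposals with the same $vr \neq -1$ would require $f{+}1$ validators to have prevoted conflicting blocks in the same round (excluded by the standing no-fraud-proof assumption), and it shows $vr_{\mathrm{sel}} \ge vr_{\mathrm{sent}} = \mathsf{validRound} \ge \mathsf{lockedRound}$ for each honest validator, treating the equality case separately. Without this Tendermint-specific analysis the lemma is not proved.
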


\begin{proof}
\textbf{Censorship:} 
Suppose a \bpow block $b$ containing a censorship complaint first appears in the \bpow chains of \emph{all} honest nodes at time $t_0$, $|\POWLedger{}{t_0}|=L$. 
Let $t_i$ denote the first time such that $|\POWLedger{}{t_i}| = L+ik_c$.
For the sake of contradiction, suppose an honest validator proposed or voted (precommit or commit) for a censoring \pos block $B$.
Let $b'$ denote the \bpow block containing the first checkpoint after $b$ that is \emph{more than} $2k_c$ blocks apart from $b$.
Given $b$ and $b'$, there are two cases for $B$:
\begin{itemize}
    \item $B$ is PoS-finalized and checkpointed by a \bpow block $b''$ such that $b' \prec b''$ in all of the \bpow chains held by the honest nodes. 
    \item A proposal for $B$ appears in response to a stalling evidence recorded by a \bpow block $b''$ such that $b' \prec b''$ (\cf Section~\ref{sec:details-of-stalling-resilience}).
\end{itemize}

Let $h$ denote the largest height for which a \pos block was finalized in the view of any honest node by time $t_0$.
Then, by Proposition~\ref{prop:new-checkpoint}, a checkpoint for that block appears in the \bpow chains held by all honest nodes by time $t_2$ in the prefix of the $2k_c$-th block extending $b$.
Thus, by the monotonicity of checkpoints, the checkpoint within block $b'$ must cover the \pos block finalized for height $h+1$.
Moreover, since honest validators cannot propose or vote for \pos blocks at heights $\geq h+2$ before a block for height $h+1$ is finalized, if an honest validator proposed or voted for a \pos block excluding the censored transactions by time $t_0$, by definition, this block must have been proposed for height $h+1$ or lower.
Consequently, as honest validators would not propose or vote for new \pos blocks excluding the censored transactions after time $t_0$, such a block can only be finalized at heights $\leq h+1$ and its checkpoint can only appear as a checkpoint either in block $b'$ or within some other \bpow block in its prefix.
Hence, given the first case above, it is not possible for an honest validator to have proposed or voted for $B$, \ie contradiction.

In the second case, by Proposition~\ref{prop:chain-growth}, any stalling evidence recorded by a \bpow block $b''$ such that $b' \prec b''$, must have been sent after block $b$ was observed by every honest node on their \bpow chains.
From the explanation above, we know that a block that excludes the censored transactions and voted upon or proposed by honest validators can only be finalized at heights $\leq h+1$ and its checkpoint can only appear as a checkpoint either in block $b'$ or within some other \bpow block in its prefix.
Thus, if a proposal for block $B$ appears in response to a stalling evidence recorded by $b'$ or one of its descendants, $B$ must be for a height larger than $h+1$.
Since no honest validator proposes or votes for \pos blocks from heights larger than $h+1$ that does not contain the censored transactions, it is not possible for an honest validator to have proposed $B$, \ie contradiction.
Consequently, no honest validator could have committed a slashable offense per the rules in Sections~\ref{sec:slashing-censorship-resilience} and~\ref{sec:details-of-stalling-resilience}, thus become slashable, in the view of any honest node.

\textbf{Stalling:} 
Consider a stalling evidence for height $h$ recorded by a \bpow block $b$ that is first observed by every honest node at time $t_0$ such that $|\POWLedger{}{t_0}|=L$. 
Define $b_1$ and $b_2$ as the \bpow blocks that extend $b$ by $k_c$ and $2k_c$ blocks respectively in the \bpow chains of all honest nodes.
For the sake of contradiction, suppose there is no checkpoint between $b$ and $b_1$ for height $h$ and a proposal message by an honest validator is missing in the same interval.
In this case, the validator could not have seen a \pos block finalized for height $h$ by time $t_0$ as it would have otherwise sent a checkpoint for it and the checkpoint would have been recorded between $b$ and $b_1$ by Proposition~\ref{prop:chain-growth}.
Hence, the validator must have sent a proposal message by time $t_0$, which would appear between $b$ and $b_1$ by Proposition~\ref{prop:chain-growth}.

Next, assume that there are at least $2f+1$ proposal messages for non-censoring blocks between $b$ and $b_1$, and no checkpoint for height $h$.
For the sake of contradiction, suppose that prevote or precommit messages by an honest validator $p$ are either missing between the blocks $b_1$ and $b_2$ or $p$ sent these messages for a \pos block that is different from the proposal selected by another honest node $p'$.
In either case, $p$ becomes slashable for stalling in the view of $p'$. 
To rule out both cases, we first show that all honest validators choose the same block as their proposal, under our initial assumption that there is no fraud proof accusing $f+1$ active validators.

If all of the proposal messages between $b$ and $b_1$ for non-censoring \pos blocks have $vr = -1$, then all honest validators choose the block proposed by the earliest message between $b$ and $b_1$ as the proposal of the on-\chain Tendermint round (\cf Section~\ref{sec:details-of-stalling-resilience}).
If there is only one proposal message between $b$ and $b_1$ with $vr \neq -1$ for a non-censoring \pos block, then all honest validators choose this block as the proposal for the on-\chain Tendermint round.
Finally, suppose there are at least two proposal messages between $b$ and $b_1$ with $vr \neq -1$ and for different non-censoring \pos blocks.
Then, different \pos blocks must have acquired at least $2f+1$ prevote messages for the same round $vr \neq -1$ of the latest height $h$ (\cf Section~\ref{sec:tendermint-summary}).
For different \pos blocks to acquire $2f+1$ prevote messages for the same round $vr$ of height $h$, at least $f+1$ validators from the active validator set must have sent prevotes for conflicting blocks proposed for the same round $vr$.
In this case, a fraud proof implicating these $f+1$ current validators will be created by an honest node and will eventually appear on the \bpow chain, which contradicts with our assumption on fraud proofs. 
Thus, under this assumption, either the maximum $vr$ among all proposal messages is greater than $-1$, in which case the messages with the largest $vr$ propose the same non-censoring \pos block, or all of them have $vr=-1$, in which case the non-censoring block within the earliest proposal on the \bpow chain between $b$ and $b_1$ is selected by the nodes.
Hence, in any of the cases, $p$ could not have chosen, as its proposal for the on-\chain round, a \pos block that is different from the proposal selected by another honest node $p'$. 

Finally, if $p$'s $\mathsf{lockedRound} = -1$, then it directly sends prevote and precommit messages for the selected proposal upon observing block $b_1$ (\cf Section~\ref{sec:details-of-stalling-resilience}).
Otherwise, if its $\mathsf{lockedRound} \neq -1$, its $-1 \neq \mathsf{validRound} \geq \mathsf{lockedRound}$, and, as an honest validator, $p$ must have sent a proposal message with $vr_{\text{sent}}$ equal to its $\mathsf{validRound}$.
Then, for the $vr_{\text{sel}}$ of the selected proposal, which is the maximum among the $vr$ values of all the proposals between $b$ and $b_1$, thus $vr_{\text{sel}} \geq vr_{\text{sent}} = \mathsf{lockedRound} \geq \mathsf{lockedRound}$ of $p$, there are two cases:
\begin{itemize}
    \item $vr_{\text{sel}}$ exceeds $p$'s $\mathsf{lockedRound}$, in which case $p$ sends prevote and precommit messages for the proposal upon seeing block $b_1$.
    \item $vr_{\text{sel}} = \mathsf{lockedRound} = \mathsf{validRound} = vr_{\text{sent}} \neq -1$.
    In this case, $p$ has sent a proposal message with the largest $vr$ and must have proposed the same \pos block as the one suggested by the selected proposal message via the reasoning above, which is $\mathsf{validValue}$ by Tendermint rules (\cf Section~\ref{sec:propose}).
    As $p$'s $\mathsf{lockedRound} = \mathsf{validRound}$, it should be the case that $p$'s $\mathsf{lockedValue} = \mathsf{validValue}$ unless again there are $f+1$ active validators that sent prevote messages for conflicting blocks at round $\mathsf{lockedRound}$.
    Thus, $p$ again sends prevote and precommit messages for the selected proposal upon seeing block $b_1$.  
\end{itemize}
Thus, $p$ sends prevote and precommit messages upon seeing $b_1$ for the selected proposal which is the same across all honest nodes, and these votes appear on \chain in the view of all honest nodes between blocks $b_1$ and $b_2$ by Proposition~\ref{prop:chain-growth}. 
Consequently, $p$ could not have committed a slashable offense per the rules in Section~\ref{sec:slashing-stalling-resilience}, and thus become slashable, in the view of any node $p'$, \ie, contradiction.
\end{proof}

Liveness part of Theorem~\ref{thm:main-security-informal} is proven below:

\begin{proof}
Let $t_i$ denote the first time such that $|\POWLedger{}{t_i}| = L+ik_c$.
Suppose a censorship complaint is sent to \chain at time $t_0$.
Then, by Proposition~\ref{prop:chain-growth}, the complaint appears in the \bpow chain of every honest node within some block $b$ by time $t_1$.
Let $h-1$ be the height of the last \pos block finalized in any honest node's view by time $t_3$.

Suppose no more than $f$ active validators violate the slashing rules in Sections~\ref{sec:slashing-censorship-resilience},~\ref{sec:details-of-stalling-resilience} and~\ref{sec:details-of-stalling-resilience} in the view of all honest nodes.
Then, via Lemma~\ref{lem:new-liveness-event}, a new non-censoring \pos block for height $h$ is checkpointed by some \bpow block $b'$ within the interval $(t_3,t_8]$.
Similarly, again via Lemma~\ref{lem:new-liveness-event}, a new non-censoring \pos block $B$ for a height larger than $h$ is checkpointed by some \bpow block $b''$ by the time $t_{13}$.
As $B$ is non-censoring and appears within a \bpow block $b''$ such that $b' \prec b''$ (\cf Sections~\ref{sec:censorship-resilience} and~\ref{sec:details-of-stalling-resilience}), by definition of censoring blocks, it includes the censored transactions.
Consequently, unless $f+1$ active validators violate the slashing rules for censorship and stalling in the view of all honest nodes, any valid \pos transaction becomes finalized and checkpointed within $13k_c$ blocktime of the time censorship is detected.

Finally, via Lemma~\ref{lem:no-honest-slashing}, no honest validator violates the slashing rules for censorship and stalling in Sections~\ref{sec:slashing-censorship-resilience},~\ref{sec:details-of-stalling-resilience} and~\ref{sec:slashing-stalling-resilience} in the view of any honest node.
Hence, if there is a liveness violation for a duration of more than $13k_c$ block-time, either of the following conditions must be true:
\begin{itemize}
    \item \textbf{L1:} More than $f+1 \geq n/3$ active validators, all of which are adversarial, must have violated the slashing rules for censorship and stalling in the view of all honest nodes.
    Thus, these protocol violators will be identified as irrefutably adversarial in the view of all honest nodes. 
    As they are active validators and have not withdrawn their stake, they also become slashable in the view of all honest nodes.
    \item \textbf{L2:} The \bpow chain is not secure with parameter $k_c/2$.
\end{itemize}
This concludes the proof of the liveness claims in Theorem~\ref{thm:main-security-informal} under the assumption that no fraud proof appears on \chain accusing $f+1$ validators of a slashable offense for safety.

Finally, we relax the assumption on the fraud proof and safety.
Suppose there is a fraud proof on \chain implicating $f+1$ active validators in a safety violation on the \pos chains.
Then, the \pos protocol is temporarily halted and no validator can be slashed for any slashing rule other than for a safety violation (\cf Section~\ref{sec:slashing-in-the-case-of-safety-violations}).
This precaution prevents adversarial validators from making honest ones slashable due to censorship or stalling in the event of a safety violation on the \pos chains; however, results in a liveness violation as the \pos protocol stops finalizing new blocks temporarily.
Note that if the \bpow chain is secure with parameter $k_c/2$, by the assumption $k_c \leq k_w$, it is also secure with parameter $k_w$.
Hence, if the protocol halts due to a safety violation on the \pos chains, at least one of the following conditions should be true:
\begin{itemize}
    \item \bpow chain is not secure with parameter $k_c/2$, implying clause \textbf{L2}.
    \item \bpow chain is secure with parameters $k_c$ and $k_w$, implying that at least $f+1 > n/3$ adversarial validators become slashable in the view of all honest nodes via the proof Section~\ref{sec:proofs-slashable-safety}, \ie, clause \textbf{L1}.
\end{itemize}
Thus, even though the halting of the protocol due to a safety violation could cause a liveness violation, the liveness claims of Theorem~\ref{thm:main-security-informal} hold in this case as well.
This concludes the liveness proof.
\end{proof}

\begin{algorithm*}
  \caption{\label{alg.pow.validate.commitments} Validation of the commitments sent to \chain by the miners. Returns true if the commitment is valid given the data $D$.}
  \begin{algorithmic}[3]
    \Function{\sf pow\_validate\_commitment}{\sf commitment, data, type}
    
        \If{$\textsf{type} == \textsf{checkpoint}$}
            \CommentLine{Parse the data for the checkpoint.}
            \Let{\textsf{block\_headers}, \textsf{transaction\_roots}, \textsf{block\_bodies} }{\textsf{parse\_block\_data}(\textsf{data})}
            
            \CommentLine{Check if the transaction roots sent as part of the data commit to the PoS block bodies.}
            \For{$(\textsf{block\_header, transaction\_root, block\_body}) \gets \textsf{block\_headers}, \textsf{transaction\_roots}, \textsf{block\_bodies}$}       
                \If{$\textsf{transaction\_root} \neq H(\textsf{block\_body})$}
                    \State\Return $\textsf{False}$
                \EndIf
            \EndFor
            \CommentLine{Check if the checkpoint was correctly calculated.}
            \If{$\textsf{commitment} == H(\textsf{block\_headers}\ ||\ \textsf{transaction\_roots})$}
                \State\Return $\textsf{True}$
            \EndIf
            \State\Return $\textsf{False}$
        \ElsIf{$\textsf{type} == \textsf{message}$}
            \CommentLine{Check if the message commitment was correctly calculated.}
            \If{$\textsf{commitment} == H(\textsf{data})$}
                \State\Return $\textsf{True}$
            \EndIf
            \State\Return $\textsf{False}$
        \EndIf

    
    
        
        
    \EndFunction
  \vskip2pt
  \end{algorithmic}
\end{algorithm*}

\begin{algorithm*}
  \caption{\label{alg.generate.commitments} Generation of the commitments by the \pos nodes. Returns the calculated commitment.}
  \begin{algorithmic}[3]
    \Function{\sf generate\_commitment}{\sf data, type}
        
        \If{$\textsf{type} == \textsf{checkpoint}$}
            \Let{\textsf{block\_headers}, \textsf{transaction\_roots}, \textsf{block\_bodies} }{\textsf{parse\_block\_data}(\textsf{data})}
            \State\Return $H(\textsf{block\_headers}\ ||\ \textsf{transaction\_roots})$
        \ElsIf{$\textsf{type} == \textsf{message}$}
            \State\Return $H(\textsf{data})$
        \EndIf
    \EndFunction
  \vskip2pt
  \end{algorithmic}
\end{algorithm*}

\begin{algorithm*}
  \caption{\label{alg.pos.validate.commitments} Validation of the commitments on \chain by the full \pos nodes. Returns true if the commitment is valid given the data $D$.}
  \begin{algorithmic}[3]
    
    \Function{\sf pos\_validate\_commitment}{\sf commitment, data, type}
    
        \If{$\textsf{type} == \textsf{checkpoint}$}
            \CommentLine{Parse the data for the checkpoint.}
            \Let{\textsf{block\_headers}, \textsf{transaction\_roots}, \textsf{block\_bodies} }{\textsf{parse\_block\_data}(\textsf{data})}
            \For{$i \gets 0,..,len(\textsf{block\_headers})-1$} 
                \Let{\sf block\_header}{ \textsf{block\_headers}[i]}
                \Let{\sf transaction\_root}{\textsf{transaction\_roots}[i]}
                \Let{\sf  block\_body}{\textsf{block\_bodies}[i]}
               \CommentLine{Check if the checkpointed PoS blocks were finalized.} \If{!$\textsf{is\_finalized}(\textsf{block\_header})$}
                    \State\Return $\textsf{False}$
                \EndIf
                \CommentLine{Check if the checkpointed PoS blocks form a consecutive sequence on the PoS chain.}
                \If{$i \neq len(\textsf{block\_headers})\  \land\ !\textsf{block\_headers}[i+1].\textsf{is\_ancestor}(\textsf{block\_header\ ||\ block\_body})$}
                    \State\Return $\textsf{False}$
                \EndIf
               \CommentLine{Check if the checkpoint was calculated with the correct transaction roots and if these roots commit to the PoS block body.} \If{$\textsf{transaction\_root} \neq  \textsf{block\_header}.\textsf{transaction\_root}\ \lor \ \textsf{transaction\_root} \neq H(\textsf{block\_body})$}
                    \State\Return $\textsf{False}$
                \EndIf
            \EndFor
            \CommentLine{Check if checkpoint was correctly calculated.}
            \If{$\textsf{commitment} == H(\textsf{block\_headers}\ ||\ \textsf{transaction\_roots})$}
                \State\Return $\textsf{True}$
            \EndIf
            \State\Return $\textsf{False}$
        \ElsIf{$\textsf{type} == \textsf{message}$}
            \CommentLine{Check if the message commitment was correctly calculated.}
            \If{$\textsf{commitment} == H(\textsf{data})$}
                \State\Return $\textsf{True}$
            \EndIf
            \State\Return $\textsf{False}$
        \EndIf

    \EndFunction
  \vskip2pt
  \end{algorithmic}
\end{algorithm*}

\begin{algorithm*}
  \caption{\label{alg.find.canonical.chain} Identifying the canonical PoS chain when there is a safety violation on the PoS chain. Returns the canonical PoS chain.}
  \begin{algorithmic}[3]
    \Function{\sf identify\_PoS\_chain}{\sf \chainnosp\_chain, PoS\_blocktree}
        \Let{\sf PoS\_canonical}{[]}
        \CommentLine{Obtaining a sequence of valid checkpoints and the associated data from the PoW chain. Note that if there are two consecutive valid checkpoints on \chain such that the second checkpoint commits to blocks conflicting with those of the first one, the second one is not returned.}
        \Let{\textsf{commitments}, \textsf{data}}{\textsf{\chainnosp\_chain}.\textsf{get\_valid\_checkpoints}()}
        \Let{\sf children}{[\textsf{PoS\_blocktree.genesis\_block}]}
        \Let{\sf next}{\sf True}
        \CommentLine{$i$ keeps track of which checkpoint in the sequence of returned valid checkpoints PoS node is currently considering.}
        \Let{i}{0}
        \If{$i \geq len(\textsf{commitments})$}
            \Let{\textsf{block\_headers}, \textsf{transaction\_roots}, \textsf{block\_bodies} }{\bot,\bot,\bot}   
        \Else
            \Let{\sf cur\_commitment}{\textsf{commitments}[0]}
            \Let{\sf cur\_data}{\textsf{data}[0]}
            \Let{\textsf{block\_headers}, \textsf{transaction\_roots}, \textsf{block\_bodies} }{\textsf{parse\_block\_data}(\textsf{cur\_data})}
        \EndIf
        \While{$\textsf{children} \neq \emptyset$}
            \For{$\textsf{PoS\_block} \gets \textsf{children}$}
               \CommentLine{Check if the current PoS block is committed by a valid and early checkpoint, and should be part of the canonical PoS chain.}
                \If{$\textsf{PoS\_block}.\textsf{header} \in \textsf{block\_headers}$}
                    \Let{\sf children}{\textsf{PoS\_blocktree}.\textsf{get\_children}(\textsf{PoS\_block})}
                    \Let{\sf next}{\sf False}
                    \Let{\sf PoS\_canonical}{\textsf{PoS\_canonical} + [\textsf{PoS\_block}]}
                    \CommentLine{All of the blocks in the current checkpoint are accounted for. PoS node now considers the PoS blocks attested by the next valid checkpoint.}
                    \If{\textsf{PoS\_block}.\textsf{header} == block\_headers[-1]}
                        \Let{i}{i+1}
                        \If{$i \geq len(\textsf{commitments})$}
                            \Let{\textsf{block\_headers}, \textsf{transaction\_roots}, \textsf{block\_bodies} }{\bot,\bot,\bot}   
                        \Else
                            \Let{\sf cur\_commitment}{\textsf{commitments}[i]}
                            \Let{\sf cur\_data}{\textsf{data}[i]}
                            \Let{\textsf{block\_headers}, \textsf{transaction\_roots}, \textsf{block\_bodies} }{\textsf{parse\_block\_data}(\textsf{cur\_data})}
                        \EndIf
                    \EndIf
                    \Break
                \EndIf
            \EndFor
            \CommentLine{If there is no checkpoint to help decide which children of a block on the PoS chain to follow as the next block on the canonical PoS chain, decision is made in favor of the first child returned by the $\textsf{get\_children}$ function.}
            \If{$\textsf{next}$}
                \Let{\sf PoS\_canonical}{\textsf{PoS\_canonical} + [\textsf{children}[0]]}
                \Let{\sf children}{\textsf{PoS\_blocktree}.\textsf{get\_children}(\textsf{children}[0])}
            \EndIf
            \Let{\sf next}{\sf True}
        \EndWhile
        \State\Return $\textsf{PoS\_canonical}$
    \EndFunction
  \vskip2pt
  \end{algorithmic}
\end{algorithm*}

\begin{algorithm*}
  \caption{\label{alg.stake.withdrawal.and.slashing} Granting stake withdrawal request and slashing protocol violators in the event of a safety violation. Returns true if the withdrawal request for the specified validator is to be granted in the view of the PoS node running the function, slashes the stake of the validator if there is a fraud proof accusing it.} 
  \begin{algorithmic}[3]
    \Function{\sf grant\_withdrawal\_request}{\sf \chainnosp\_chain, validator}
        \Let{\textsf{commitments}, \textsf{data}}{\textsf{\chainnosp\_chain}.\textsf{get\_valid\_checkpoints}()}
        \CommentLine{$\textsf{get\_all\_checkpointed\_blocks}$ returns all checkpointed, valid and finalized PoS blocks given a sequence of commitments and data.}
        \Let{\textsf{pos\_blocks} }{\textsf{get\_all\_checkpointed\_blocks}(\textsf{commitments, data})}
        \Let{\sf PoS\_block}{\bot}
        \For{$\textsf{block} \gets \textsf{blocks}$}
            \If{$\exists\ \textsf{withdrawal\_req} \text{ by } \textsf{validator} \in \textsf{block}$}
                \Let{\sf PoS\_block}{block}
                \Let{\sf \chainnosp\_block\_height}{\text{Height of the \bpow block containing the checkpoint for }\textsf{PoS\_block}}
            \EndIf
        \EndFor
        \CommentLine{Without a checkpoint for the PoS block containing the withdrawal request, request cannot be granted.}
        \If{$\textsf{PoS\_block} == \bot$}
            \State\Return $\mathsf{False}$
        \EndIf
        \CommentLine{If the checkpoint of the PoS block containing the request is not sufficiently deep in \chain in the view of the PoS node, then the request is not granted.}
        \If{$len(\textsf{\chainnosp\_chain}) \leq \textsf{\chainnosp\_block\_height} + k_w$}
            \State\Return $\mathsf{False}$
        \EndIf
        \CommentLine{If the checkpoint of the PoS block containing the request is indeed $k_w$ deep in \chain in the view of the PoS node, then the request is granted only after checking for the fraud proofs as specified by condition (3) in Section~\ref{sec:slashing-in-the-case-of-safety-violations}}
        \Let{\sf \chainnosp\_fragment}{\textsf{\chainnosp\_chain}[\textsf{\chainnosp\_block\_height}:\textsf{\chainnosp\_block\_height}+k_w]}
        \Let{\textsf{commitments}, \textsf{fraud\_proofs}}{\textsf{\chainnosp\_fragment}.\textsf{get\_valid\_fraud\_proofs}()}
        \For{$\textsf{fraud\_proof} \gets \textsf{fraud\_proofs}$ }
            \If{$\textsf{fraud\_proof} \text{ accuses } \textsf{validator}$}
                \CommentLine{Validator is slashed if there is a valid fraud proof on Babylon that irrefutably accuses the validator.}
                \State $\textsf{slash\_validator}(\textsf{validator}, \textsf{fraud\_proof})$
                \State\Return $\mathsf{False}$
            \EndIf
        \EndFor
        \State\Return $\mathsf{True}$
    \EndFunction
  \vskip2pt
  \end{algorithmic}
\end{algorithm*}

\begin{algorithm*}
  \caption{\label{alg.censorship.slashing} Identifying all censoring PoS blocks with respect to a censorship complaint on \chain. Takes as input the canonical \chain chain, height of the \chain block with the censorship complaint, and the canonical PoS chain. Slashes the validators that proposed or voted for the censoring blocks.} 
  \begin{algorithmic}[3]
    \Function{\sf is\_block\_censoring}{\sf \chainnosp\_chain \chainnosp\_block\_height, PoS\_block, PoS\_canonical}
        \Let{\textsf{commitments},\textsf{censored\_txs}}{\textsf{\chainnosp\_chain}[\textsf{\chainnosp\_block\_height}].\textsf{get\_valid\_censorship\_complaint()}}
        \CommentLine{Get all checkpoints that extend the \chain block with the censorship complaint by at least $2k_c$ blocks.}
        \Let{\textsf{commitments}, \textsf{data}}{\textsf{\chainnosp\_chain}[\textsf{\chainnosp\_block\_height}+2k_c:].\textsf{get\_valid\_checkpoints}()}
        \CommentLine{Parse over these checkpoints starting from the second one as stated in Section~\ref{sec:censorship-resilience}}
        \For{$i \gets 1,..,len(\textsf{data})-1$}
            \Let{\textsf{block\_headers}, \textsf{transaction\_roots}, \textsf{block\_bodies} }{\textsf{parse\_block\_data}(\textsf{data})}
            \CommentLine{Detect the censoring blocks on the PoS chain as attested by each checkpoint.}
            \For{$\textsf{PoS\_block} \gets \textsf{block\_headers},\textsf{block\_bodies}$}
                \If{$\textsf{censored\_txs} \notin \textsf{PoS\_canonical}[:\textsf{PoS\_block}]\ \land\ \textsf{censored\_txs} \text{ valid w.r.t the state of } \textsf{PoS\_canonical}[:\textsf{PoS\_block}]$}
                    \State $\text{Slash validators that proposed and voted upon } \textsf{PoS\_block}$
                \EndIf
            \EndFor
        \EndFor
    \EndFunction

  \vskip2pt
  \end{algorithmic}
\end{algorithm*}

\begin{algorithm*}
  \caption{\label{alg.stalling.slashing} Emulating a Tendermint round on \chain when there is a stalling evidence. This function is evoked by validators if a stalling evidence is observed on the $k_w/2$-deep prefix of the longest \chain chain. There are $n=3f+1$ active validators in total.} 
  \begin{algorithmic}[3]

    \Function{\sf emulate\_round\_on\_Babylon}{}
    
        \Upon{$\textsf{stalling\_evidence} \in \textsf{\chainnosp\_block}$}
            \Let{\sf checkpoint\_height}{\textsf{\chainnosp\_chain}.\textsf{get\_last\_checkpoints\_height}()}
            \Let{\sf commitment, data}{\textsf{\chainnosp\_chain}.\textsf{get\_last\_checkpoint}()}
            \Let{\textsf{block\_headers}, \textsf{transaction\_roots}, \textsf{block\_bodies} }{\textsf{parse\_block\_data}(\textsf{data})}
           
           \CommentLine{Stalling evidence is taken into consideration only if it comes to \chain at least $2k_c$ blocks after the last checkpoint on \chain.} \If{$\textsf{\chainnosp\_block.height} \geq \textsf{checkpoint\_height} + 2k_c$}
                \CommentLine{If the validator has observed new PoS blocks finalized at the heights not covered by the last checkpoint, it sends a checkpoint for the new PoS blocks instead of emulating the round on \chain.}
                \If{$\textsf{last\_finalized\_PoS\_block.header} \notin \textsf{block\_headers}$}
                    \State $\text{Send checkpoint to \chain}$
                \Else 
                    \CommentLine{See Appendix~\ref{sec:details-of-stalling-resilience} for more information on how proposal messages are selected and structured.}
                    \State $\text{Send Tendermint proposal to \chain}$
                \EndIf
            \EndIf
        \EndUpon
    
        \CommentLine{This is block $b_1$ on Figure~\ref{fig:stalling}.}
        \Upon{$\text{\chain block with height } \textsf{checkpoint\_height} + 3k_c$}
            \Let{\sf proposals}{\textsf{\chainnosp\_chain}[-k_c:].\textsf{get\_non\_censoring\_proposals}()}
            \If{$len(\textsf{proposals}) < 2f+1$}
                \State $\text{Slash validators with missing or censoring proposals}$
                \State $\text{Send new stalling evidence}$
                \State\Return
            \Else
                \CommentLine{Proposal with the largest $\textsf{validRound}$ is selected, details are in Appendix~\ref{sec:details-of-stalling-resilience}.}
                \Let{\sf round\_proposal}{\textsf{select\_round\_proposal}(\textsf{proposals})}
                \CommentLine{See Appendix~\ref{sec:details-of-stalling-resilience} for more information on how prevotes and precommits are structured.}
                \State $\text{Send prevote and precommit to \chain for } \textsf{round\_proposal}$
            \EndIf
        \EndUpon
    
        \CommentLine{This is block $b_2$ on Figure~\ref{fig:stalling}.}
        \Upon{$\text{\chain block with height } \textsf{checkpoint\_height} + 4k_c$}
            \Let{\sf prevotes}{\textsf{\chainnosp\_chain}[-k_c:].\textsf{get\_prevotes}(\textsf{proposal})}
            \Let{\sf precommits}{\textsf{\chainnosp\_chain}[-k_c:].\textsf{get\_precommits}(\textsf{proposal})}
            \If{$len(\textsf{prevotes}) < 2f+1\ \lor\ len(\textsf{precommits}) < 2f+1$}
                \State $\text{Slash validators whose prevotes and precommits are missing for } \textsf{round\_proposal}$
                \State $\text{Send new stalling evidence}$
                \State\Return
            \Else
                \State $\text{Finalize } \textsf{round\_proposal}$
            \EndIf
        \EndUpon
        
    \EndFunction
    
  \vskip2pt
  \end{algorithmic}
\end{algorithm*}

\end{document}